\newtheorem{theorem}{Theorem}
\newtheorem{lemma}{Lemma}
\newtheorem{proposition}{Proposition}
\newtheorem{definition}{Definition}
\newtheorem{corollary}{Corollary}
\newcommand{\Mirror}[1]{
    \node[circle,draw,fill=black,inner sep=0,minimum size=3] (m#1) at (#1*0.5,0) {};
    }
\newcommand{\MirrorAngle}[2]{
    \draw (#1*0.5,0) -- (#1*0.5+0.5,0);
    \node[anchor=south] (a#1) at (#1*0.5+0.25,-0.08) {\tiny #2};
    }
\newcommand{\MirrorRing}[1]{
    \node[circle,draw,fill=none,inner sep=0,minimum size=6] (r#1) at (#1*0.5,0) {};
    }
\newcommand{\CoxeterFour}[7]{
    \begin{tikzpicture}
        \Mirror{0};
        \foreach \x in {#1} {\MirrorRing{0};}

        \MirrorAngle{0}{#2};

        \Mirror{1};
        \foreach \x in {#3} {\MirrorRing{1};}
        
        \MirrorAngle{1}{#4};

        \Mirror{2};
        \foreach \x in {#5} {\MirrorRing{2};}
        
        \MirrorAngle{2}{#6};

        \Mirror{3};
        \foreach \x in {#7} {\MirrorRing{3};}
    \end{tikzpicture}   
    }
\begin{document}

\title{Quantum Rainbow Codes: Achieving Linear Rate, Growing Distance and Transversal Non-Clifford Gates with Generalised Colour Codes}
\author{Thomas R. Scruby}
\email{t.r.scruby@gmail.com}
\affiliation{Okinawa Institute of Science and Technology, 1919-1 Tancha, Onna, Kunigami District, Okinawa 904-0412, Japan}
\author{Arthur Pesah}
\email{arthur.pesah@gmail.com}
\affiliation{Department of Physics \& Astronomy, University College London, Gower St, London WC1E 6BT, United Kingdom}
\author{Mark Webster}
\email{mark.acacia@gmail.com}
\affiliation{Department of Physics \& Astronomy, University College London, Gower St, London WC1E 6BT, United Kingdom}

\begin{abstract}
We introduce rainbow codes, a novel class of quantum error correcting codes generalising colour codes and pin codes. Rainbow codes can be defined on any $D$-dimensional simplicial complex that admits a valid $(D+1)$-colouring of its $0$-simplices. We study in detail the case where these simplicial complexes are derived from chain complexes obtained via the hypergraph product and, by reinterpreting these codes as collections of colour codes joined at domain walls, show that we can obtain code families with growing distance and number of encoded qubits as well as logical non-Clifford gates implemented by transversal application of $T$ and $T^\dag$. By combining these techniques with the quasi-hyperbolic colour codes of Zhu et al. (arXiv:2310.16982) we obtain a family of codes with transversal non-Clifford gates and parameters $[\![n,\Theta(n),\Theta(\log(n))]\!]$. This is the first example of a family of LDPC codes with linear rate, growing distance and transversal non-Clifford gates, which are necessary conditions for the magic-state distillation parameter $\gamma = \log_d(n/k)$ to be made arbitrarily small. 
In contrast to several other constructions that satisfy these requirements, our codes are natively defined on qubits, are LDPC, and have non-Clifford gates implementable by single-qubit (rather than entangling) physical operations, but are not asymptotically good. 

\end{abstract}

\maketitle

\section{Introduction}
\label{section:intro}
Quantum error correcting codes allow us to protect information from environmental noise and perform fault-tolerant quantum computation, but this comes at the cost of increased computational overheads and increased difficulty of implementing logical operations. An important theorem by Eastin and Knill tells us that no code can have a transversal implementation of a universal gate set~\cite{eastin_restrictions_2009} and even within the confines of this theorem there is a large degree of variation in how many transversal gates a code can possess. For instance, the two-dimensional surface code~\cite{kitaev_fault-tolerant_2003} admits only a transversal CNOT gate (although other operations can be performed if we relax the definition of transversal~\cite{moussa_transversal_2016}), while the two-dimensional colour code~\cite{bombin2006topological} admits transversal implementations of the full Clifford group. In higher dimensional colour codes transversal non-Clifford gates implemented by single-qubit physical operations also become possible~\cite{bombin2007topological}.

In recent years much progress has been made in constructing codes with improved encoding rate and distance~\cite{tillich_quantum_2014,breuckmann_balanced_2021,panteleev_asymptotically_2022,leverrier_quantum_2022}, and many of these constructions can be viewed as generalisations of the surface code beyond manifolds and so have gate sets which are similarly limited. It is therefore natural to ask whether colour codes can be generalised in a similar way, and in fact such a generalisation already exists in the form of pin codes~\cite{vuillot2022quantum}. However, when combined with the same product constructions that generate high rate and distance generalised surface codes, these generalised colour codes typically have only constant distance. Additionally, in higher dimensions they can possess transversal non-Clifford gates but this property is not guaranteed without further modification of the codes by e.g. puncturing techniques. 

In this work we further generalise colour codes and pin codes to obtain a class of codes we call \textit{rainbow codes}. These codes are obtained by identifying the low-weight logical operators of the pin codes and including some of them in the stabiliser group. We will see that there are a number of different choices for which operators to include (and which to remove) from the stabiliser group and these different choices produce codes with very different properties. We show that some of these choices, when combined with the hypergraph product~\cite{tillich_quantum_2014}, can result in families of codes with growing distance and number of encoded qubits as well as transversal implementations of logical non-Clifford gates. We also show that these codes can be interpreted as joinings of Euclidean colour codes at domain walls between two copies of the topological phase. By combining this construction with the recently proposed quasi-hyperbolic codes of~\cite{zhu2023non} we can obtain codes with finite asymptotic rate, non-constant distance and transversal non-Clifford gates. The existence of such codes is a necessary condition for the magic-state yield parameter $\gamma = \log_d(n/k)$~\cite{bravyi_universal_2005,bravyi_magic-state_2012} to be made arbitrarily small, and until very recently the construction of such codes was an open problem, although their existence has now been demonstrated for the case of asymptotically good non-LDPC codes with gates implemented by transversal $CCZ$~\cite{wills_constant-overhead_2024,nguyen_good_2024,golowich_asymptotically_2024}. In contrast, our codes have transversal non-Clifford gates implemented by transversal $T/T^\dag$, are LDPC, but are not asymptotically good. We note also that the logical action of our non-Clifford gate results in states with a complex entanglement structure and it is non-obvious how to use these states in magic state distillation procedures. Subsequent to the release of the first version of this work, a construction for LDPC codes that provably achieve $\gamma \rightarrow 0$ was identified by Golowich and Lin~\cite{golowich2024quantumldpccodestransversal}.

We begin this paper by reviewing the constructions of quantum colour codes and pin codes then defining rainbow codes in \cref{section:rainbow}. We then consider rainbow codes defined via the hypergraph product in \cref{section:rainbow_hgp} and show how these codes can be interpreted as joinings of colour codes on manifolds. In \cref{section:logical_gates} we consider the action of transversal non-Clifford gates on these codes and identify the cases in which they can implement logical non-Clifford operations. Finally we present several examples of both finite-size and asymptotic rainbow codes, including the previously mentioned family with linear encoding rate, in \cref{section:examples}. We also describe various other properties and transformations of these codes in the appendices, such as a method for modifying rainbow codes to reduce physical qubit count and stabiliser weight while potentially preserving $k$ and $d$ (\cref{app:edge_contraction}), an unfolding map for some classes of rainbow code (\cref{app:unfolding}) and various algorithms enabling efficient construction of rainbow codes from chain complexes (\cref{app:algorithms}).  

\section{Rainbow codes}
\label{section:rainbow}
\subsection{Colour codes and pin codes}
\label{subsection:rainbow_ccpc}

We begin by introducing colour codes and pin codes, and by fixing some terminology and notation which will be used throughout the paper.

\begin{definition}[Chain complex]
    A length-$D$ chain complex is the data of $D+1$ vector spaces $C_1,\ldots,C_D$ and linear maps $\delta_1,\ldots,\delta_D$ with $\delta_i:C_i \rightarrow C_{i-1}$, called \textbf{boundary maps}, such that $\delta_i \circ \delta_{i+1} = 0$ for all $i \in \{1,\ldots,D-1\}$. We write a chain complex with the following diagram:
    \begin{equation*}
        \mathcal{C} = C_D \overset{\delta_D}{\rightarrow} C_{D-1} ~ ... ~ C_1 \overset{\delta_1}{\rightarrow} C_0
    \end{equation*}
    The elements of each vector space $C_k$ are called $\bm{k}$\textbf{-chains}.
\end{definition}
In the rest of this paper, we will only consider chain complexes defined with finite-dimensional vector spaces over the field $\mathbb{F}_2$. 
In this case, we can equip each vector space with a basis, and represent the boundary maps as binary matrices. 
We call each basis element of $C_k$ a $\bm{k}$\textbf{-cell}.
A chain complex equipped with those bases can then be viewed as a $(D+1)$-partite graph, whose nodes are the $k$-cells, where $k$ determines the type of a node in the partition. 
The edges are then determined by the boundary maps $\delta_k$, seen as biadjacency matrices between nodes of type $k$ and $k-1$. 
Conversely, let $\mathcal{G}$ be a $(D+1)$-partite graph such that nodes of type $k$ are only connected to nodes of type $k-1$ and $k+1$, with a biadjacency matrix $\delta_k$ (for $1 \leq k \leq D$). 
Then, if the biadjacency matrices satisfy $\delta_{k-1} \circ \delta_k = 0$, we can associate a chain complex to the graph, by defining the vector spaces as $C_k=\mathbb{F}_2^{n_k}$ and the boundary maps as $\delta_k$, where $n_k$ is the number of nodes of type $k$ in $\mathcal{G}$. 

A particular type of chain complex arises when considering the cellulation of a manifold. Given a smooth $D$-dimensional manifold $\mathcal{M}$, a \textbf{cellulation} is a covering of $\mathcal{M}$ by a set of $D$-dimensional polytopes $\{P_1,\ldots,P_n\}$, sending each polytope to $\mathcal{M}$ using injective homeomorphisms $\phi_i:P_i\rightarrow \mathcal{M}$, such that for all $i,j$, $i \neq j$,  $\phi(P_i)$ and $\phi(P_j)$ are either disjoint or overlap on exactly one of their $k$-faces for any $k$ \cite{breuckmann2018phdthesis}. 
The $k$-faces of a $D$-dimensional polytope refer to its $k$-dimensional elements: vertices ($0$-faces), edges ($1$-faces), faces ($2$-faces), and so on. 
The image of a $k$-face of a polytope $P_i$ by $\phi_i$ is called a $k$-cell of the cellulation. To the cellulation of a manifold we can associate a chain complex, built from a multipartite graph called the \textbf{Hasse diagram} of the cellulation. It is defined as follows: each $k$-cell defines a node of type $k$, and there is an edge between a node of type $k$ and a node of type $k-1$ if and only if the associated $k$-cell contains the associated $(k-1)$-cell. The Hasse diagram is characterized by the following important lemma \cite[Lemma 2.9]{breuckmann2018phdthesis}.
\begin{lemma}
    Consider a manifold cellulation and let $c_{i+1}$ and $c_{i-1}$ be $(i+1)$- and $(i-1)$-cells respectively. The number of $i$-cells connected to both $c_{i+1}$ and $c_{i-1}$ in the Hasse diagram is either zero or two.
\end{lemma}
Using this lemma, it is easy to prove that the Hasse diagram has the right properties to define a chain complex, namely that successive biadjacency matrices $\delta_k$ and $\delta_{k-1}$ satisfy $\delta_{k-1} \circ \delta_k = 0$. 
Intuitively, this is equivalent to the statement that in a manifold cellulation, the boundary of the boundary of a cell is always zero.
We call such a chain complex a cell complex.
\begin{definition}[Cell complex]
    A cell complex is a chain complex obtained from the Hasse diagram of the cellulation of a manifold. Each vector space $C_k$ of this complex is naturally equipped with a basis, corresponding to the $k$-cells of the cellulation.
\end{definition}

Another type of chain complex we will use throughout this work is the simplicial complex.
\begin{definition}[Simplicial complex]
    A $D$-dimensional simplicial complex is a length-$D$ chain complex equipped with a basis of $k$-cells, also called $\bm{k}$\textbf{-simplices}, such that every $k$-simplex is connected to exactly $k$ $(k-1)$-simplices in the associated multipartite graph. 
\end{definition}
A particular kind of simplicial complex emerges from the cellulation of a manifold by simplices, that is, when all the polytopes $P_i$ are simplices. We will call such a complex a \textbf{simplicial cell complex}.

Equipped with those definitions, we are now ready to define colour codes and pin codes.

\begin{definition}[Colour code lattice]
    Consider a length-$D$ cell complex 
    \begin{equation*}
        \mathcal{C} = C_D \overset{\delta_D}{\rightarrow} C_{D-1} ~ ... ~ C_1 \overset{\delta_1}{\rightarrow} C_0
    \end{equation*}
    The subcomplex
    \begin{equation*}
        \mathcal{C}_{(1,0)} = C_1 \overset{\delta_{1}}{\rightarrow} C_0
    \end{equation*}
    can be identified with a simple (no self-loops or multi-edges) undirected graph $\mathcal{G}$. If $\mathcal{G}$ is $(D+1)$-regular and admits a $(D+1)$-colouring of its edges then we will call $\mathcal{C}_{(1,0)}$ a \textbf{colour code lattice}.
\end{definition}

We will use the term ``colour code lattice'' to refer interchangeably to the length-1 complex $C_{(1,0)}$ or the to the colourable graph $\mathcal{G}$ associated to it. In the case that $\mathcal{C}_{(1,0)}$ is a colour code lattice we can define a colour code~\cite{bombin2006topological, kubica2018abcs} on $\mathcal{C}$ by choosing a pair of integers $x,z$ such that
\begin{align}
    \label{eq:x-z-constraints}
    \begin{split}
        2 \leq x,z \leq D \\
        x+z \geq D+2
    \end{split}
\end{align}
and then assigning a qubit to each $0$-cell of $\mathcal{C}$ and an $X$/$Z$ stabiliser to each $x$-/$z$-cell. An example is shown in \cref{fig:2dcc}. Notice that the colouring of the $1$-cells also induces a colouring on the $D$-cells. The colourability and valency conditions of $\mathcal{G}$, as well as the fact the complex comes from a manifold, ensure that this forms a valid CSS code. 
When taking $x=D$ and $z=2$ colour codes admit transversal diagonal gates of the form 
$R_D = \mathrm{diag}(1,e^{i\pi/2^{D-1}})$, which belong to the $D$-level of the Clifford hierarchy~\cite{bombin2007topological, bombin2015gauge, kubica2015universal}.

An alternative way to define a $D$-dimensional colour code is using a $D$-dimensional simplicial cell complex $\mathcal{K}$ that admits a $(D+1)$-dimensional colouring of its $0$-simplices (i.e. $0$-simplices that are part of the same $1$-simplex must have different colours). To see that this is true, and that these two methods of defining colour codes are equivalent, notice that the cellulation described by $\mathcal{K}$ is dual~\footnote{recall that the dual of a cellulation replaces $D$-cells with $0$-cells, $(D-1)$-cells with $1$-cells and so on.} to a cellulation described by some $\mathcal{C}$ where $\mathcal{C}_{(1,0)}$ is a colour code lattice. This is because the valency condition of $\mathcal{G}$ means that $0$-cells of $\mathcal{C}$ are mapped to $D$-simplices of $\mathcal{K}$, and the $(D+1)$-colouring of the $D$-cells of $\mathcal{C}$ translates to a $(D+1)$-colouring of the $0$-simplices of $\mathcal{K}$. The colour code defined on $\mathcal{K}$ then has qubits on $D$-simplices and stabilisers on $(D-x)$- and $(D-z)$-simplices (in the sense that they are supported on all $D$-simplices that contain this $(D-x)$ or $(D-z)$ simplex). An example of this duality is also shown in \cref{fig:2dcc}. 

\begin{figure}
    \centering
    \includegraphics[width=.35\textwidth]{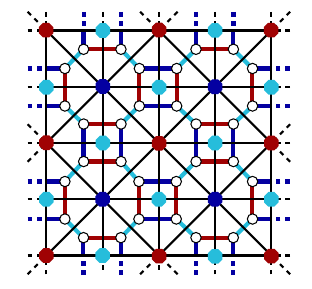}
    \caption{A square-octagon lattice can be used to define a colour code in two dimensions. Colours can be assigned to both edges and faces so that $c$-coloured faces are made from non-$c$-coloured edges. Qubits are assigned to vertices and stabilisers to faces. In the dual picture each qubit is assigned to a triangle ($2$-simplex) and stabilisers to vertices of these triangles (which inherit a colouring from the lattice).}
    \label{fig:2dcc}
\end{figure}

The definition via simplicial complexes is useful as a simplicial complex with $(D+1)$-colourable vertices can be easily obtained from any length $D$ chain complex via the following procedure. First, a $0$-simplex is associated to every $i$-cell $c_i$ of the chain complex. Next a $1$-simplex is associated to every pair of cells $c_i$ and $c_j$ such that $c_i \in c_j$, where we define this inclusion to mean that $i < j$ and there exists some length $j-i$ sequence
\begin{equation}
    c_i \in \delta_{i+1}(c_{i+1}), ~ c_{i+1} \in \delta_{i+2}(c_{i+2}), ~ ..., ~ c_{j-1} \in \delta_j(c_j).
\end{equation}
Intuitively this means that $c_i$ is a part of some higher dimensional cell $c_j$. For instance, if $c_2$ is a square then $c_1 \in c_2$ means $c_1$ is an edge of this square while $c_0 \in c_2$ means $c_0$ is a vertex of this square. Higher dimensional simplicies are similarly defined, so that an $m$-simplex is associated to a set of $m+1$ different cells $c_i,c_j,...,c_k$ where $c_i \in c_j \in ... \in c_k$. $D$-simplices are then associated with sets containing exactly one cell of each dimension such that $c_i \in c_{i+1}$ for all $0 \leq i < D$. These $D$-simplices are sometimes referred to as \textit{flags}, and the $0$-simplices of a flag are naturally $(D+1)$-coloured by the $(D+1)$ different dimensions of $i$-cell to which they correspond. In fact, the simplices shown in \cref{fig:2dcc} are the flags of a square lattice where each red $0$-simplex corresponds to a vertex of this lattice, each light blue $0$-simplex corresponds to an edge and each dark blue $0$-simplex to a square face. 

This method of constructing simplicial complexes was used in \cite{vuillot2022quantum} to obtain generalised colour codes called \textit{pin codes}. In the language of pin codes the set of all $D$-simplices containing a given $k$-simplex is called a $k$-pinned set, and stabilisers are associated to $(D-x)$- and $(D-z)$-pinned sets where $x$ and $z$ are subject to the same constraints as in \cref{eq:x-z-constraints}. These codes correspond exactly to colour codes if the simplicial complex from which they are defined corresponds to a cellulation of a manifold, but in general this does not need to be the case. In particular, the hypergraph product of classical codes can be used to obtain a chain complex and then a simplicial complex via the method described previously. Pin codes defined on these complexes were observed to have very high encoding rates but only constant distance in most cases. In addition, the transversal non-Clifford gate of high-dimensional colour codes were not typically inherited by these pin codes prior to additional modifications. 

In the following sections we introduce a further generalisation of pin codes that addresses both of these issues. 

\subsection{The simplex graph}
\label{subsection:rainbow_simplex}

\begin{figure*}
    \centering
    \includegraphics[width=.9\textwidth]{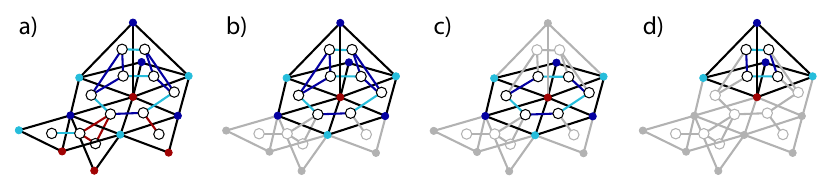}
    \caption{a) A $2$-dimensional simplicial complex that does not correspond to a cellulation of a manifold and its associated simplex graph. Vertices of the simplex graph can be part of more than one edge of the same colour. b) A $2$-maximal and c), d) $2$-rainbow subgraphs of the simplex graph, along with the associated parts of the complex.}
    \label{fig:nonmanifold_subgraphs}
\end{figure*}

The limitations of pin codes can be most easily understood if we think of them as arising not from a simplicial complex but from a closely related object that we call the \textit{simplex graph}. 

\begin{definition}
    Given a $D$-dimensional simplicial complex, $\mathcal{K}$, and a $(D+1)$-colouring of the $0$-simplices of this complex, the corresponding \textbf{simplex graph}, $\mathcal{G}_\mathcal{K}$, is the graph such that
    \begin{itemize}
        \item there is one vertex for every $D$-simplex
        \item there is an edge between vertices whenever the corresponding $D$-simplices share a $(D-1)$-simplex
        \item each edge has a colour $c$, which is the unique colour of $0$-simplex not contained in the $(D-1)$-simplex associated with this edge.
    \end{itemize}
\end{definition}

Note that to avoid confusion we will use the terms ``vertex'' and ``edge'' only when referring to objects of $\mathcal{G}_\mathcal{K}$. Vertices and edges of $\mathcal{K}$ will always be referred to as $0$- and $1$-simplices respectively. 

Perhaps the simplest example of a simplex graph is a 2D colour code lattice like \cref{fig:2dcc}, where we have a $2$-dimensional simplicial complex with a $3$-colouring defined on the $0$-simplices. There is then one graph/lattice vertex for each $2$-simplex and edges join these vertices whenever the intersection of the corresponding $2$-simplices is a $1$-simplex. Each such intersection contains two colours of $0$-simplex and the corresponding edge of the simplex graph is coloured with the third colour. More generally we have the following fact

\begin{lemma}
    \label{lemma:manifold_cellulation}
    Any $D$-dimensional simplicial cell complex with $(D+1)$-colourable $0$-simplices has its associated simplex graph correspond to a colour code lattice.
\end{lemma}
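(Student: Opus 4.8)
The plan is to verify directly that $\mathcal{G}_\mathcal{K}$ satisfies the three defining properties of a colour code lattice: that it is a simple graph, that it is $(D+1)$-regular, and that its edge-colouring is proper. The starting point is the observation that any $D$-simplex has exactly $D+1$ vertices, and since the $(D+1)$-colouring of the $0$-simplices is proper, these vertices must all receive distinct colours. Hence every $D$-simplex is \emph{rainbow}: it contains exactly one $0$-simplex of each of the $D+1$ colours. I would record this first, as it underpins everything that follows.

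Next I would establish $(D+1)$-regularity and properness of the edge-colouring together. The facets (the $(D-1)$-subsimplices) of a fixed $D$-simplex $\sigma$ are in bijection with its vertices via deletion, so $\sigma$ has exactly $D+1$ facets, and deleting the vertex of colour $c$ yields the unique facet of $\sigma$ whose missing colour is $c$. By definition an edge of $\mathcal{G}_\mathcal{K}$ incident to $\sigma$ arises from $\sigma$ sharing one of its facets with a neighbouring $D$-simplex, and that edge is coloured by the missing colour of the shared facet. The manifold hypothesis enters precisely here: in a cellulation of a closed $D$-manifold every $(D-1)$-simplex is contained in exactly two $D$-simplices (equivalently, the link of each $(D-1)$-simplex is a $0$-sphere). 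Consequently each of the $D+1$ facets of $\sigma$ is shared with exactly one other $D$-simplex, contributing exactly one incident edge of each of the $D+1$ colours. This gives $(D+1)$-regularity and, simultaneously, shows that the $D+1$ edges meeting any vertex carry all $D+1$ distinct colours, so the colouring is proper.

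Finally I would rule out self-loops and multi-edges. Self-loops cannot occur because the two $D$-simplices sharing a facet $F$ are distinct cells. Multi-edges cannot occur because two distinct $D$-simplices can share at most one facet: their vertex sets meet in at most $D$ vertices, and any shared $(D-1)$-face is forced to equal this common vertex set, which is unique. Together these establish simplicity, and combining with the previous paragraph shows that $\mathcal{G}_\mathcal{K}$ is a simple, $(D+1)$-regular, properly $(D+1)$-edge-coloured graph, hence a colour code lattice.

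The step I expect to be the crux is the two-sidedness property --- that each $(D-1)$-simplex lies in exactly two $D$-simplices --- since this is the only place the closed-manifold hypothesis is genuinely used and is exactly what fails in general. Indeed, for a complex that is not a manifold a $(D-1)$-simplex may be shared by more than two $D$-simplices, allowing a vertex of the simplex graph to meet several edges of the same colour and breaking both regularity and properness, as illustrated in \cref{fig:nonmanifold_subgraphs}. I would therefore either cite this local two-sidedness as a standard property of triangulated closed manifolds (via the $0$-sphere link condition) or derive it from the local Euclidean structure of the manifold.
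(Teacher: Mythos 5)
Your proof is correct and follows essentially the same route as the paper's: both hinge on the manifold condition forcing each $(D-1)$-simplex to lie in exactly two $D$-simplices, combined with the count of $D+1$ facets per $D$-simplex to get $(D+1)$-regularity and the inherited edge colouring. The only difference is that you explicitly verify simplicity (no self-loops or multi-edges) and properness of the colouring, details the paper's terser proof leaves implicit, and your verification of these is sound.
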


\begin{proof}
    Because the simplicial cell complex cellulates a $D$-dimensional manifold, a unique pair of $D$-simplices must meet at each $(D-1)$-simplex. The number of $(D-1)$-simplices contained in a $D$-simplex is $D+1$, so each vertex in the simplex graph is $(D+1)$-valent. A natural $(D+1)$-colouring of the edges of the graph is inherited from the colouring of the simplicial complex. These are exactly the requirements for a graph that can be identified with a colour code lattice. 
\end{proof}

For simplicial complexes that do not correspond to cellulations of manifolds there is no limit on how many $D$-simplices can meet at a $(D-1)$-simplex and so vertices in the simplex graph are not generally $(D+1)$-valent and can be part of multiple edges of the same colour. An example of such a graph is shown in \cref{fig:nonmanifold_subgraphs}.

Given some subset of colours $S = \{c_{i_1},...,c_{i_k}\}$, where $1 \leq k \leq D$, we can define two important types of subgraph of the simplex graph.

\begin{definition}
    An \textbf{$S$-maximal} subgraph, $\mathcal{M}^k$, of $\mathcal{G}_\mathcal{K}$ is a connected subgraph containing only edges with colours in $S$, and such that no more edges of colour in $S$ can be added to the subgraph while maintaining connectivity.
\end{definition}

\begin{definition}
    An \textbf{$S$-rainbow} subgraph, $\mathcal{R}^k$, of $\mathcal{G}_\mathcal{K}$ is a $k$-regular connected subgraph where each vertex is part of exactly one edge of each colour in $S$.
\end{definition}

We will also use the terms $k$-maximal and $k$-rainbow subgraph to refer to $S$-maximal or $S$-rainbow subgraphs for any choice of $S$ of size-$k$. Examples of both types of subgraph are shown in \cref{fig:nonmanifold_subgraphs}. Notice that $1$-maximal subgraphs must be cliques (fully connected subgraphs), as if multiple $D$-simplices meet at a common $(D-1)$-simplex their corresponding vertices in $\mathcal{G}_\mathcal{K}$ must all be connected by edges of the same colour. 

These subgraphs can be used to define both colour codes and pin codes. For a colour code (i.e. a code defined from a simplicial cell complex) $k$-maximal and $k$-rainbow subgraphs are equivalent since each vertex in the simplex graph is part of only one edge of each colour. Stabilisers of the colour code can then be thought of as being assigned to either $x$- and $z$-maximal subgraphs, $x$- and $z$-rainbow subgraphs or any mix of the two. For instance, every hexagonal face of the colour code lattice shown in \cref{fig:2dcc} is both a $2$-maximal and $2$-rainbow subgraph. To see the connection to pin codes, notice that the set of vertices contained in a $k$-maximal subgraph is equivalent to the set of $D$-simplices contained in a $(D-k)$-pinned set. This is because vertices in the simplex graph connected by $c_i$-coloured edges correspond to $D$-simplices which differ only by a $c_i$-coloured $0$-simplex. As a result, an $S$-maximal subgraph corresponds to a maximal set of $D$-simplices which differ from each other only by $0$-simplices with colours in $S$, and so there is a common $(D-k)$-simplex shared by all $D$-simplices in this set. For example, all $2$-simplices associated with the $2$-maximal subgraph shown in \cref{fig:nonmanifold_subgraphs} b) have a common $(2-2=0)$-simplex. We can therefore view pin codes as being defined by the assignment of $X$ and $Z$ stabilisers to all $x$- and $z$-maximal subgraphs of a simplex graph, and thus they are equivalent to colour codes when this simplex graph corresponds to a $D$-dimensional lattice.

An important property of pin codes (proposition 1 of \cite{vuillot2022quantum}) is that the nontrivial intersection of a pair of pinned sets is another pinned set. The equivalent result in terms of the simplex graph is as follows

\begin{lemma}
    \label{lemma:mm_intersection}
    The (non-empty) intersection of an $S_1$-maximal subgraph, $\mathcal{M}^{\lvert S_1 \rvert}$, and an $S_2$-maximal subgraph, $\mathcal{M}^{\lvert S_2 \rvert}$, is an $(S_1 \cap S_2)$-maximal subgraph (or multiple such subgraphs).
\end{lemma}

\begin{proof}
    Given any vertex in the intersection of $\mathcal{M}^{\lvert S_1 \rvert}$ and $\mathcal{M}^{\lvert S_2 \rvert}$, all edges of $\mathcal{G}_\mathcal{K}$ containing this vertex and with colours in $S_1 \cap S_2$ must be in both $\mathcal{M}^{\lvert S_1 \rvert}$ and $\mathcal{M}^{\lvert S_2 \rvert}$ and thus the intersection of these subgraphs is an $(S_1 \cap S_2)$-maximal subgraph.
\end{proof}

We can also show a similar result for intersections of maximal subgraphs and rainbow subgraphs

\begin{lemma}
    \label{lemma:mr_intersection}
    The (nontrivial) intersection of an $S_1$-maximal subgraph, $\mathcal{M}^{\lvert S_1 \rvert}$, and an $S_2$-rainbow subgraph, $\mathcal{R}^{\lvert S_2 \rvert}$, is an $(S_1 \cap S_2)$-rainbow subgraph (or multiple such subgraphs). 
\end{lemma}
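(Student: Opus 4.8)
The plan is to mimic the proof of \cref{lemma:mm_intersection} but to track the rainbow (rather than maximal) property, working vertex-by-vertex on the intersection graph whose vertex and edge sets are $V(\mathcal{M}^{\lvert S_1 \rvert}) \cap V(\mathcal{R}^{\lvert S_2 \rvert})$ and $E(\mathcal{M}^{\lvert S_1 \rvert}) \cap E(\mathcal{R}^{\lvert S_2 \rvert})$. Writing $T = S_1 \cap S_2$, the goal is to show that every vertex of this intersection is incident to exactly one edge of each colour in $T$ and to no edge of any other colour; this makes each connected component a $\lvert T \rvert$-regular graph with the rainbow property, i.e.\ a $T$-rainbow subgraph.

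First I would pin down the two facts I need about the constituent subgraphs. By definition every edge of $\mathcal{M}^{\lvert S_1 \rvert}$ has a colour in $S_1$, and maximality means that for any vertex $v \in \mathcal{M}^{\lvert S_1 \rvert}$ every edge of $\mathcal{G}_\mathcal{K}$ incident to $v$ whose colour lies in $S_1$ is already contained in $\mathcal{M}^{\lvert S_1 \rvert}$ together with its other endpoint (otherwise adjoining that edge would give a strictly larger connected $S_1$-coloured subgraph). On the rainbow side, $\mathcal{R}^{\lvert S_2 \rvert}$ is $\lvert S_2 \rvert$-regular with exactly one edge of each colour in $S_2$ at every vertex, so all of its edges have colours in $S_2$ and each colour of $S_2$ occurs exactly once at each vertex.

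Next I would establish the two inclusions at a vertex $v$ of the intersection. For existence and uniqueness, fix $c \in T$: since $c \in S_2$, the subgraph $\mathcal{R}^{\lvert S_2 \rvert}$ supplies a unique $c$-coloured edge $e$ at $v$; since $c \in S_1$ and $v \in \mathcal{M}^{\lvert S_1 \rvert}$, maximality forces $e$ and its far endpoint into $\mathcal{M}^{\lvert S_1 \rvert}$, so $e$ lies in both subgraphs and hence in the intersection, and it is the only $c$-edge there because it is already the only one in $\mathcal{R}^{\lvert S_2 \rvert}$. For the reverse containment, any edge at $v$ lying in the intersection is simultaneously an edge of $\mathcal{M}^{\lvert S_1 \rvert}$ (colour in $S_1$) and of $\mathcal{R}^{\lvert S_2 \rvert}$ (colour in $S_2$), so its colour lies in $T$. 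Combining the two, $v$ has exactly one incident edge of each colour in $T$ and none of any other colour.

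The one point that needs care — the analogue of the step that makes \cref{lemma:mm_intersection} work — is the appeal to maximality to guarantee that the rainbow edge $e$ does not merely touch $\mathcal{M}^{\lvert S_1 \rvert}$ at $v$ but is genuinely contained in it along with its second endpoint; this is exactly where \emph{maximal} (as opposed to an arbitrary connected $S_1$-coloured subgraph) is essential, since otherwise the edge could leave the subgraph and the intersection would fail to be regular. With that settled, every vertex of the intersection has degree $\lvert T \rvert$ with one edge per colour of $T$, so each connected component is a $T$-rainbow subgraph, and a disconnected intersection simply yields several such subgraphs, as stated.
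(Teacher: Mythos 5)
Your proof is correct and follows essentially the same route as the paper's: at each vertex of the intersection, maximality of $\mathcal{M}^{\lvert S_1 \rvert}$ guarantees that every incident edge with colour in $S_1 \cap S_2$ (in particular the unique such edge of each colour supplied by $\mathcal{R}^{\lvert S_2 \rvert}$) lies in both subgraphs, yielding exactly one edge of each colour in $S_1 \cap S_2$ and none of any other colour. Your version simply makes explicit what the paper compresses into one sentence --- notably the appeal to maximality to ensure the rainbow edge and its far endpoint are genuinely contained in $\mathcal{M}^{\lvert S_1 \rvert}$, and the reverse containment ruling out edges of other colours.
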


\begin{proof}
    Given any vertex in the intersection of $\mathcal{M}^{\lvert S_1 \rvert}$ and $\mathcal{R}^{\lvert S_2 \rvert}$, all edges of $\mathcal{G}_\mathcal{K}$ containing this vertex and with colours in $S_1 \cap S_2$ must be in $\mathcal{M}^{\lvert S_1 \rvert}$ and exactly one of each colour is in $\mathcal{R}^{\lvert S_2 \rvert}$ and thus the intersection of these subgraphs is an $(S_1 \cap S_2)$-rainbow subgraph.
\end{proof}

In general there is no restriction on the possible intersection of two rainbow subgraphs, and it is fairly straightforward to construct examples of simplex graphs where $x$- and $z$-rainbow subgraphs intersect at a single vertex. 

It will also be useful to define the following operation on $S$-maximal and $S$-rainbow subgraphs, which allows us to decompose them into collections of maximal/rainbow subgraphs containing fewer colours.

\begin{definition}
    Given an $S$-maximal subgraph $\mathcal{M}^{\lvert S \rvert}$ and a set $T \subseteq S$, the $T$-division of $\mathcal{M}^{\lvert S \rvert}$, denoted $\mathcal{M}^{\lvert S \rvert}/T$, is the graph obtained by removing all edges with colours in $T$ from $\mathcal{M}^{\lvert S \rvert}$. Due to the definition of $S$-maximal subgraphs, $\mathcal{M}^{\lvert S \rvert}/T$ must be a collection of disjoint $(S \setminus T)$-maximal subgraphs, i.e,
    \begin{equation}
        \mathcal{M}^{\lvert S \rvert}/T = \mathcal{M}_1^{\lvert S \setminus T \rvert} \cup ... \cup \mathcal{M}_m^{\lvert S \setminus T \rvert}
    \end{equation}
    An analogous operation is defined for $S$-rainbow subgraphs, in which case we have
    \begin{equation}
        \mathcal{R}^{\lvert S \rvert}/T = \mathcal{R}_1^{\lvert S \setminus T \rvert} \cup ... \cup \mathcal{R}_m^{\lvert S \setminus T \rvert}
    \end{equation}
\end{definition}

At this point readers are likely asking what (if any) role rainbow subgraphs play in pin codes. The answer to this question comes from the following key result.

\begin{proposition}
    \label{prop:even_intersection}
    An $x$-maximal and $z$-rainbow subgraph (or vice versa) with non-trivial intersection must share an even number of vertices.
\end{proposition}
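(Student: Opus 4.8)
The plan is to reduce the statement to a perfect-matching argument via \cref{lemma:mr_intersection}, with the key preliminary step being to confirm that the two subgraphs share at least one colour. Write $S_1$ (size $x$) and $S_2$ (size $z$) for the colour sets of the maximal and rainbow subgraphs respectively. Since there are only $D+1$ colours in total, inclusion--exclusion gives
\begin{equation*}
    \lvert S_1 \cap S_2 \rvert \geq x + z - (D+1) \geq 1,
\end{equation*}
where the final inequality is exactly the constraint $x + z \geq D + 2$ from \cref{eq:x-z-constraints}. Hence the shared colour set $W := S_1 \cap S_2$ is non-empty. This step is what makes the proposition depend essentially on the $x$--$z$ constraint rather than being a generic fact, and the argument is manifestly symmetric under swapping which subgraph is maximal and which is rainbow, so the ``vice versa'' case needs no separate treatment.

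Next I would invoke \cref{lemma:mr_intersection} to write the (non-trivial) intersection of the $x$-maximal and $z$-rainbow subgraph as a disjoint collection of $W$-rainbow subgraphs $\mathcal{R}_1^{\lvert W \rvert}, \dots, \mathcal{R}_m^{\lvert W \rvert}$. Because the total number of vertices shared by the two subgraphs is the sum of the vertex counts of these disjoint components, it suffices to establish that each individual component $\mathcal{R}_j^{\lvert W \rvert}$ has an even number of vertices.

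For a single component I would fix any colour $c \in W$ (which exists since $W \neq \emptyset$) and examine the $c$-coloured edges inside $\mathcal{R}_j^{\lvert W \rvert}$. By the definition of a rainbow subgraph, every vertex is incident to exactly one edge of colour $c$, so these edges constitute a perfect matching on the vertex set of the component. A perfect matching pairs all vertices, which forces their number to be even; summing the even contributions over the $m$ components yields an even total. The only real subtlety is the opening step: without the constraint $x+z \geq D+2$ the shared colour set could be empty, in which case the intersection would collapse to a $0$-rainbow subgraph of isolated vertices admitting no matching and hence no parity guarantee, so identifying a guaranteed shared colour is precisely the crux on which the whole argument turns.
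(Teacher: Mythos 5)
Your proof is correct and follows essentially the same route as the paper: both use \cref{lemma:mr_intersection} to decompose the intersection into rainbow components, both use the counting $x+z \geq D+2$ against the $D+1$ available colours to guarantee a shared colour, and your perfect-matching step is exactly the paper's $T$-division of each component into disjoint $1$-rainbow subgraphs (single edges) phrased in matching language. No gaps; your closing remark correctly identifies the non-empty shared colour set as the step where \cref{eq:x-z-constraints} enters.
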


\begin{proof}
   By \cref{lemma:mr_intersection}, for some $\mathcal{M}^x$ and $\mathcal{R}^z$ with nontrivial intersection we have $\mathcal{M}^x \cap \mathcal{R}^z = \mathcal{R}_1^m \cup \mathcal{R}_2^m \cup ...$ for some $m$. Because $x+z \geq D+2$ (by \cref{eq:x-z-constraints}) but there are only $D+1$ different colours in the graph the two subgraphs must have at least one colour of edge in common, and so $m \geq 1$. If $m > 1$ we can choose some $T$ such that each $\mathcal{R}_i^m/T$ is a collection of disjoint $1$-rainbow subgraphs (which are just single edges connecting pairs of vertices), and otherwise $\mathcal{R}_i^m$ itself is a collection of disjoint $1$-rainbow subgraphs. $\mathcal{M}^x \cap \mathcal{R}^z$ must therefore contain an even number of vertices.
\end{proof}

An immediate consequence of this is that

\begin{corollary}
    $x$- and $z$-rainbow subgraphs can support pin code logical operators.
\end{corollary}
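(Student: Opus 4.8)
The plan is to read the corollary through the CSS structure of the pin code and reduce it to the commutation statement just established. Recall that the qubits of the pin code live on the vertices of $\mathcal{G}_\mathcal{K}$ (equivalently, the $D$-simplices of $\mathcal{K}$), that each $X$ stabiliser is supported on the vertex set of an $x$-maximal subgraph, and that each $Z$ stabiliser is supported on the vertex set of a $z$-maximal subgraph. For a CSS code the only nontrivial commutation requirement is between operators of opposite Pauli type, and two such operators commute exactly when their supports overlap on an even number of qubits. So to show that a rainbow subgraph ``can support'' a logical operator, I only need to show that the operator it carries commutes with every stabiliser of the opposite type, which places it in the normaliser of the stabiliser group.

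First I would take an $X$-type Pauli operator whose support is the vertex set of an $x$-rainbow subgraph $\mathcal{R}^x$. To verify it is a legitimate logical-operator representative it suffices to check that it commutes with every $Z$ stabiliser, i.e.\ with the operator supported on each $z$-maximal subgraph. By \cref{prop:even_intersection} (in the ``vice versa'' case, pairing a $z$-maximal subgraph with an $x$-rainbow subgraph) any such pair with nontrivial intersection shares an even number of vertices, so the overlap is even and the operators commute; the case of empty intersection is immediate since the overlap is then zero. Hence the operator on $\mathcal{R}^x$ commutes with the full set of $Z$ stabilisers.

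By the identical argument with the roles of $x$ and $z$ (and of $X$ and $Z$) exchanged, a $Z$-type operator supported on a $z$-rainbow subgraph $\mathcal{R}^z$ commutes with every $X$ stabiliser, using the main (rather than ``vice versa'') case of the proposition. Thus both kinds of rainbow subgraph furnish operators that commute with the entire stabiliser group, which is precisely the requirement for them to act as logical operators rather than as detectable errors.

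The only remaining subtlety, and the point I would flag as the genuine obstacle, is that membership in the normaliser does not by itself guarantee that such an operator is nontrivial, i.e.\ that it is not already a product of stabilisers. The corollary claims only that rainbow subgraphs \emph{can} support logical operators, so this commuting property is exactly what is needed: it certifies them as candidates, while the question of when a given $\mathcal{R}^x$ or $\mathcal{R}^z$ operator is genuinely logical (outside the stabiliser group) is settled case by case in the explicit constructions that follow. Indeed, it is precisely this freedom to either promote some rainbow-subgraph operators to stabilisers or retain them as logicals that will distinguish rainbow codes from pin codes.
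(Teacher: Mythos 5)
Your proof is correct and follows essentially the same route as the paper: it invokes \cref{prop:even_intersection} (in both the main and ``vice versa'' cases) to show that operators supported on $x$- and $z$-rainbow subgraphs have even overlap with all stabilisers of the opposite type and hence lie in the normaliser. Your closing caveat — that such an operator is logical only if it is not itself a product of stabilisers — matches the paper's own qualification (``if these operators are not stabilisers they must be logicals'') and correctly accounts for the word ``can'' in the statement.
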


From \cref{prop:even_intersection} we can see that an $X$ operator supported on the vertices of an $x$-rainbow subgraph commutes with all $Z$ stabilisers (and similarly for $Z$ operators on $z$-rainbow subgraphs) and so if these operators are not stabilisers they must be logicals. We will see in \cref{section:rainbow_hgp} that this fact is the cause of the low distances observed for the majority of pin codes studied numerically in~\cite{vuillot2022quantum}, as the method used to generate these codes naturally gives rise to rainbow subgraphs of size $4$. In addition, in \cref{section:logical_gates} we will show that it prevents useful transversal non-Clifford gates in the majority of cases. 

\subsection{Rainbow codes}
\label{subsection:rainbow_rainbow}

A straightforward solution to the low distances observed in pin codes is to add operators supported on rainbow subgraphs to the stabiliser group. This leads us beyond pin codes to a more general code family that we call \textit{rainbow codes}. 

\begin{definition}
    A $D$-dimensional \textbf{rainbow code} is a CSS code with qubits associated to the vertices of a simplex graph with $D+1$ colours and stabiliser generators associated to a subset of the $x$-maximal and $x$-rainbow subgraphs (for $X$ stabilisers) and $z$-maximal and $z$-rainbow subgraphs (for $Z$ stabilisers) of this graph, for $x$ and $z$ satisfying \cref{eq:x-z-constraints}.
\end{definition}

This definition is extremely broad and includes both colour codes and pin codes, as well as many less interesting codes such as the trivial code with empty stabiliser group. In addition to these we define the following classes of rainbow codes.

\begin{definition}
    A \textbf{generic} rainbow code has $X$ stabilisers supported on all $x$-maximal subgraphs and $Z$ stabilisers supported on all $z$-rainbow subgraphs. 
\end{definition}

\begin{definition}
    An \textbf{anti-generic} rainbow code has $X$ stabilisers supported on all $x$-rainbow subgraphs and $Z$ stabilisers supported on all $z$-maximal subgraphs.
\end{definition}

\begin{definition}
    A \textbf{mixed} rainbow code has $X$ stabilisers supported on both $x$-maximal and $x$-rainbow subgraphs and $Z$ stabilisers supported on both $z$-maximal and $z$-rainbow subgraphs.
\end{definition}

Note that in general mixed rainbow codes are not required to have $X$ and $Z$ stabilisers supported on \textit{all} $x$- and $z$-rainbow subgraphs as such operators would not commute in general. The definitions of these different classes of rainbow code are summarised in \cref{tab:rainbow_classes}. Notice that all of these classes can be thought of as generalisations of colour codes as maximal and rainbow subgraphs in colour codes are equivalent. To understand how the properties of these classes can differ from each other we will need to look at some more specific examples, such as those presented in the next section. We also remark that, for all stabiliser assignments we will consider, the sets of all relevant maximal and rainbow subgraphs are efficiently computable and algorithms for finding them are presented in \cref{app:algorithms}.

\begin{table}
    \centering
    \begin{tabular}{c|c|c}
          & $X$-stabilisers & $Z$-stabilisers \\
         \hline
         pin & all $x$-maximal & all $z$-maximal \\
         generic & all $x$-maximal & all $z$-rainbow \\
         anti-generic & all $x$-rainbow & all $z$-maximal \\
         mixed & \makecell{some $x$-maximal \\ and $x$-rainbow} & \makecell{some $z$-maximal \\ and $z$-rainbow}
    \end{tabular}
    \caption{Stabiliser assignments to maximal and rainbow subgraphs in four classes of rainbow code. All four classes can be viewed as generalisations of colour codes.}
    \label{tab:rainbow_classes}
\end{table}

Finally, we comment on the stabiliser weights of these codes. These weights are upper-bounded by the size of the largest $D$-maximal subgraph, or equivalently by the largest number of $D$-simplices sharing a common $0$-simplex. In general this can be arbitrarily large but in more specific cases it can be bounded, such as in the case of the codes considered in the next section.

\section{Rainbow codes from hypergraph products and gluings of colour codes}
\label{section:rainbow_hgp}
Given a cell complex that defines a cellulation of a manifold, a colour code on this manifold can be defined using a mapping from the chain complex to a corresponding simplicial complex~\cite{bombin2007exact,zhu2023non}. Similar techniques were used in \cite{vuillot2022quantum} to define pin codes on more general chain complexes obtained from the hypergraph product of classical codes~\cite{tillich_quantum_2014}, and we could use these mappings and the results of the previous section to obtain other classes of rainbow code in an identical fashion. However, it turns out that there is an alternative perspective on the hypergraph product -- and on the codes derived from it -- that provides considerably more intuition into the structure and properties of these codes. In what follows we will show that any $D$-dimensional rainbow code defined via the hypergraph product can be viewed as a collection of $D$-dimensional colour codes joined together at $(D-1)$-dimensional domain walls. This makes understanding the code parameters and action of logical gates much more straightforward, and also implies connections between these codes and other constructions involving some forms of code gluing, such as welded codes~\cite{michnicki20143d}, defect networks~\cite{song2023topological}, and layer codes~\cite{williamson_layer_2024}.

We begin by studying gluing operations on graphs and their interplay with the hypergraph product in \cref{subsection:hgp_glue}. In \cref{subsection:hgp_flag} we show how to obtain flag graphs from the output of the hypergraph product and demonstrate the effects of gluing operations on these graphs. In \cref{subsection:hgp_join} we show how, when all inputs to the product are cycles, the resulting flag graph can be understood as a collection of colour code lattices joined at objects we call \textit{seams}, and finally we show how to assign stabilisers to these seams and study their effect on colour code logical operators in \cref{subsection:hgp_seam}.

\subsection{Gluings of graphs}
\label{subsection:hgp_glue}

To start, we define the hypergraph product (HGP) and show how its action commutes with gluing operations on graphs. This product is commonly defined in terms of tensor products of chain complexes but it can also be understood as a Cartesian product of bipartite graphs~\footnote{If we label the two kinds of vertex type-$0$ and type-$1$ then the adjacency matrix with type-$0$ vertices as rows and type-$1$ vertices as columns is equivalent to a representation of the boundary map from $1$-cells to $0$-cells in the chain complex picture}. Recall that a graph $\mathcal{G}$ is a set of vertices $V_\mathcal{G}$ and a set of edges $E_\mathcal{G}$, where an edge is an (unordered) pair of vertices $\{g_1,g_2\}$. Also, the Cartesian product of two graphs is defined as 

\begin{definition}
    Given two graphs $\mathcal{G} = (V_\mathcal{G},E_\mathcal{G})$ and $\mathcal{H} = (V_\mathcal{H}, E_\mathcal{H})$, the Cartesian product $\mathcal{G} \square \mathcal{H}$ is a graph with vertex set given by the Cartesian product of sets, $V_\mathcal{G} \times V_\mathcal{H}$, and edges between vertices $(g_1,h_1)$ and $(g_2,h_2)$ iff
    \begin{itemize}
        \item $g_1 = g_2$ and $\{h_1,h_2\} \in E_\mathcal{H}$ or 
        \item $h_1 = h_2$ and $\{g_1,g_2\} \in E_\mathcal{G}$
    \end{itemize}
\end{definition}

We then define the neighbourhood of a vertex, $\mathcal{N}(g_1)$, to be the set of all vertices $g_i \in V_\mathcal{G}$ such that $\{g_1,g_i\} \in E_\mathcal{G}$, and the \textit{gluing} of two vertices to be the following operation.

\begin{definition}
    Given two vertices $g_1$ and $g_2$ of a graph $\mathcal{G}$, where $g_1 \notin \mathcal{N}(g_2)$, the \textbf{gluing} $g_1 \leftarrow g_2 : \mathcal{G}$ removes $g_2$ from $V_\mathcal{G}$ and replaces each edge $\{g_2,g_i\} \in E_\mathcal{G}$ with an edge $\{g_1,g_i\}$.
\end{definition}

Gluing is both associative and commutative, and so we can think of multiple gluings happening simultaneously without needing to specify an order. 

\begin{lemma}
    \label{lemma:gluing_assoc_comm}
    Gluing is associative and commutative. 
\end{lemma}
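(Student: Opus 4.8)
The plan is to argue directly from the definition of gluing as an operation on the vertex and edge sets of a graph, since the claim is purely combinatorial. First I would fix a graph $\mathcal{G} = (V_\mathcal{G}, E_\mathcal{G})$ and recall that a single gluing $g_1 \leftarrow g_2$ does two things: it deletes $g_2$ from the vertex set, and it redirects every edge incident to $g_2$ so that it becomes incident to $g_1$ instead. The key observation I want to make precise is that gluing can be described more symmetrically: performing $g_1 \leftarrow g_2$ is the same as collapsing the two-element set $\{g_1, g_2\}$ to a single surviving representative $g_1$, and redirecting all edges accordingly. Phrasing it this way makes both properties almost immediate, because collapsing a set of vertices to a representative does not depend on the order in which we process its elements.

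For \textbf{commutativity}, I would consider two gluings $g_1 \leftarrow g_2$ and $g_3 \leftarrow g_4$ acting on disjoint or overlapping pairs and show that the resulting graph is the same regardless of order. The cleanest route is to track the final vertex set and the final edge set as functions of the input. The surviving vertex set is obtained by deleting $\{g_2, g_4\}$, which is order-independent, and each original edge $\{g_i, g_j\}$ is mapped to $\{r(g_i), r(g_j)\}$, where $r$ is the redirection map sending $g_2 \mapsto g_1$, $g_4 \mapsto g_3$, and fixing everything else. Since $r$ is a single well-defined function that does not care about the order in which we declared the two gluings, the image edge set is identical either way. I would note that the hypothesis $g_1 \notin \mathcal{N}(g_2)$ (and the analogous condition for the second pair) guarantees no self-loops are created, so the redirected edge sets remain genuine graphs and the two orders produce literally the same object.

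For \textbf{associativity}, I would handle the case of three gluings and show that the bracketing does not matter, again by reducing everything to the composite redirection map. The subtlety I want to flag is chained gluings, where the target of one gluing is itself the source of another (for instance $g_1 \leftarrow g_2$ followed by $g_2 \leftarrow g_3$, or a gluing whose target was deleted by an earlier gluing). Here I would argue that gluings equivalenced in this way partition the affected vertices into blocks, each of which collapses to one surviving representative, and the redirection map is just the assignment of every vertex to the representative of its block. The composite map is the transitive closure of the individual identifications, which is manifestly insensitive to association. I expect this chaining/transitivity case to be the main obstacle, since one must check that the surviving representative is well defined (i.e.\ that the definition is applied only when the current target vertex still exists) and that the order of processing does not change which vertex ends up as the block representative; once the operation is recast as "collapse each equivalence class of identified vertices to its representative and redirect edges through the quotient map," both properties follow from the corresponding properties of set-theoretic quotients and I would conclude the proof there.
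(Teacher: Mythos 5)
Your argument is correct, but it reaches the conclusion by a genuinely different route from the paper's. The paper's proof is a three-line reduction to set union: it views $g_1 \leftarrow g_2$ as an operation on neighbourhoods, mapping $\mathcal{N}(g_1) \mapsto \mathcal{N}(g_1) \cup \mathcal{N}(g_2)$ and $\mathcal{N}(g_2) \mapsto \emptyset$ with all other neighbourhoods preserved up to relabelling, and then inherits associativity and commutativity from the corresponding properties of $\cup$. You instead recast a family of gluings as a set-theoretic quotient: the identification pairs generate an equivalence relation on vertices, each block collapses to its surviving representative, and edges are pushed forward through the quotient map, so order-independence follows from the fact that a generated equivalence relation does not depend on the order or bracketing of its generating pairs. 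Your route buys more rigour exactly where the paper is most terse: you treat the chained case (a gluing whose target is itself the source of an earlier or later gluing) explicitly, which the neighbourhood-union sketch glosses over, and you make visible that the precondition $g_1 \notin \mathcal{N}(g_2)$ is what prevents an edge from lying inside a block and hence becoming a self-loop. The paper's route buys economy and, more importantly, the neighbourhood-update formula itself, which is reused verbatim later in the proof of \cref{prop:glue_prod_commute}; your quotient map recovers that formula, but only after unwinding the definition. If you were to write your plan out in full, the two points you flag do need to be pinned down: the block representative is well defined precisely when no vertex is the deleted (second) argument of two distinct gluings, so that the identifications form a forest with a unique root per block; and strict commutativity of chained gluings requires reading a later gluing's target as the current representative of its block (since the literal target vertex may already have been deleted), which your quotient formulation handles automatically but the sequential definition of gluing does not, strictly speaking, permit. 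Neither caveat arises in the paper's actual use case (simultaneous hyperplane gluings of pairwise disjoint, non-adjacent vertex pairs), which is presumably why the paper's proof does not address them.
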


\begin{proof}
    We can consider $g_1 \leftarrow g_2$ to be an operation on the neighbourhoods of $g_1$ and $g_2$. Specifically, it maps $\mathcal{N}(g_1)$ to $\mathcal{N}(g_1) \cup \mathcal{N}(g_2)$ and $\mathcal{N}(g_2)$ to the empty set, with all other neighbourhoods preserved up to relabellings. Since the union of sets is associative and commutative gluing also has both of these properties. 
\end{proof}

This lets us define \textit{gluing along a line} in a graph obtained from the Cartesian product in the following way

\begin{definition}
    Given a graph $\mathcal{G} \square \mathcal{H}$, the operation $g_1 \Leftarrow g_2 : \mathcal{G} \square \mathcal{H}$ is a gluing $((g_1,h_i) \leftarrow (g_2,h_i) : \mathcal{G} \square \mathcal{H}) ~ \forall ~ h_i \in V_\mathcal{H}$
\end{definition}

\noindent where the order of these gluings does not matter by \cref{lemma:gluing_assoc_comm}. This operation generalises straightforwardly to \textit{gluing along a hyperplane} in a graph $\mathcal{G} = \mathcal{A} \square \mathcal{B} \square ...$ since we can use the associativity and commutativity of the Cartesian product to write 
\begin{equation}
    \mathcal{G} = \mathcal{F} \square (\mathcal{A} \square \mathcal{B} \square ...) = \mathcal{F} \square \mathcal{H}
\end{equation}
for any factor $\mathcal{F}$ in the product, and then use the above definition to glue as $f_1 \Leftarrow f_2 : \mathcal{F} \square \mathcal{H}$. Additionally, because any graph $\mathcal{G}$ can be written as the Cartesian product $\mathcal{G} \square \mathcal{I}$ (where $\mathcal{I}$ is the graph with a single vertex and no edges) we can define 
\begin{equation}
    (g_1 \Leftarrow g_2 : \mathcal{G}) := (g_1 \Leftarrow g_2 : \mathcal{G} \square \mathcal{I}) = (g_1 \leftarrow g_2 : \mathcal{G}).
\end{equation}
We can then prove the following key result

\begin{figure}
    \centering
    \includegraphics[width=\linewidth]{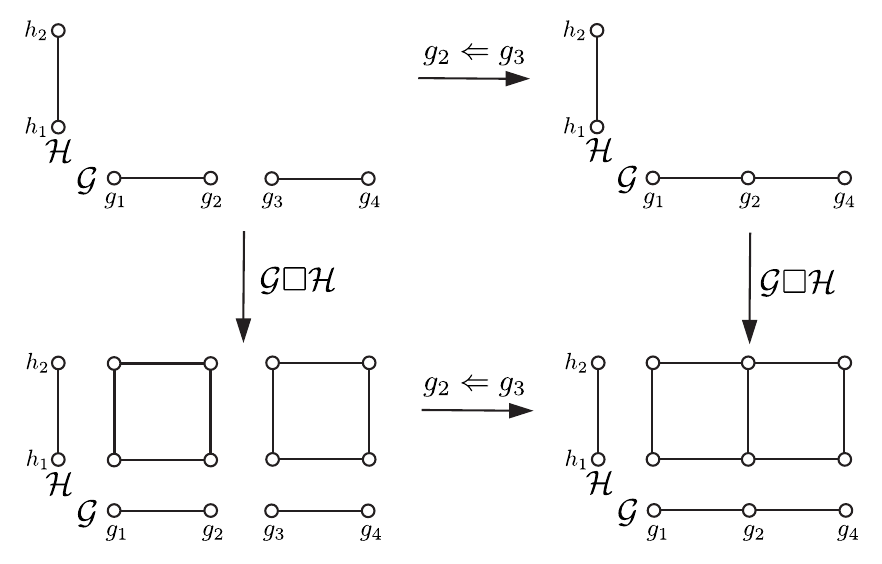}
    \caption{Commutativity of hyperplane gluing and Cartesian product.}
    \label{fig:commute}
\end{figure}

\begin{proposition}
    \label{prop:glue_prod_commute}
    Hyperplane gluing commutes with the Cartesian product, i.e.,
    \begin{equation}
        (g_1 \Leftarrow g_2 : \mathcal{G} \square \mathcal{H}) = (g_1 \Leftarrow g_2 : \mathcal{G} ) \square \mathcal{H}
    \end{equation}
\end{proposition}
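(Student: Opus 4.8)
The plan is to prove the identity by showing that both sides of the equation are the same graph, i.e.\ that they have identical vertex sets and identical edge sets. Since a graph is completely determined by these two sets, establishing equality on each suffices. I would set $\mathcal{L} := \mathcal{G} \square \mathcal{H}$ and unfold both the left-hand and right-hand operations explicitly in terms of the defining vertex and edge rules, then compare.

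First I would handle the vertex sets. On the left, we start from $\mathcal{L} = \mathcal{G} \square \mathcal{H}$ whose vertices are pairs $\big((g,h)\big)$ with $g \in V_\mathcal{G}$ and $h \in V_\mathcal{H}$, and then apply $g_1 \Leftarrow g_2$, which by definition performs the gluing $(g_1,h_i) \leftarrow (g_2,h_i)$ for every $h_i \in V_\mathcal{H}$. Each such gluing deletes the vertex $(g_2,h_i)$, so the surviving vertices are exactly those $(g,h)$ with $g \neq g_2$. On the right, $(g_1 \Leftarrow g_2 : \mathcal{G})$ is by definition $(g_1 \leftarrow g_2 : \mathcal{G})$, which deletes $g_2$ from $V_\mathcal{G}$, leaving $V_\mathcal{G} \setminus \{g_2\}$; taking the Cartesian product with $\mathcal{H}$ then gives vertex set $(V_\mathcal{G}\setminus\{g_2\}) \times V_\mathcal{H}$. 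These two descriptions coincide, so the vertex sets agree.

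Next I would compare the edge sets, which is the more delicate part and where I expect the main bookkeeping to live. The key is to track what happens to an arbitrary edge under each procedure. An edge of $\mathcal{L}$ is of one of two types from the Cartesian product definition: a ``$\mathcal{H}$-edge'' $\{(g,h_1),(g,h_2)\}$ with $\{h_1,h_2\}\in E_\mathcal{H}$, or a ``$\mathcal{G}$-edge'' $\{(g_1',h),(g_2',h)\}$ with $\{g_1',g_2'\}\in E_\mathcal{G}$. On the left, the gluing $g_1 \Leftarrow g_2$ rewrites every occurrence of the first coordinate $g_2$ to $g_1$ in all edges; I would verify case by case that an $\mathcal{H}$-edge at fixed first coordinate $g_2$ maps to an $\mathcal{H}$-edge at $g_1$ (consistent with the product structure over the glued graph), and that a $\mathcal{G}$-edge incident to $g_2$ becomes a $\mathcal{G}$-edge incident to $g_1$, matching exactly the edges produced by first gluing in $\mathcal{G}$ and then taking the product. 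On the right, $(g_1 \leftarrow g_2 : \mathcal{G})$ replaces each edge $\{g_2,g_i\}\in E_\mathcal{G}$ with $\{g_1,g_i\}$, and the subsequent product with $\mathcal{H}$ reproduces precisely these rewritten $\mathcal{G}$-edges across all layers $h$, together with the unchanged $\mathcal{H}$-edges. I would emphasise that the precondition $g_1 \notin \mathcal{N}(g_2)$ in $\mathcal{G}$ lifts to the required non-adjacency condition for each gluing $(g_1,h_i)\leftarrow(g_2,h_i)$ in $\mathcal{L}$, so the operations are well-defined on both sides, and that \cref{lemma:gluing_assoc_comm} guarantees the simultaneous gluings along the line may be carried out in any order without ambiguity.

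The anticipated obstacle is not conceptual but combinatorial: one must be careful that the first-coordinate rewrite on the left does not accidentally create multi-edges or merge distinct edges in a way that differs from the right-hand construction, particularly for $\mathcal{H}$-edges sitting over the glued vertex. Since the gluing definition only relabels endpoints and the non-adjacency hypothesis prevents self-loops, I expect each edge on one side to correspond to exactly one edge on the other, giving the set equality of edges and hence the claimed identity of graphs. The argument is essentially a direct verification that ``relabel the first coordinate'' commutes with ``form pairs and connect them,'' which is intuitively clear from \cref{fig:commute}, so the proof reduces to making this correspondence precise.
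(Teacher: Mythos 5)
Your proposal is correct and follows essentially the same route as the paper's proof: first establishing equality of vertex sets by the identical deletion argument, then equality of edge sets, which the paper carries out by comparing vertex neighbourhoods (explicitly noting this is equivalent to comparing edge sets) and in particular handles the same subtlety you flag, namely that $\mathcal{H}$-edges over the glued vertex merge with those over $g_1$, via the identity $(A \times C) \cup (B \times C) = (A \cup B) \times C$ applied to $\mathcal{N}(g_1) \cup \mathcal{N}(g_2)$. Your case-by-case edge classification versus the paper's neighbourhood bookkeeping is a cosmetic difference only, and your observation that the non-adjacency hypothesis $g_1 \notin \mathcal{N}(g_2)$ lifts layerwise to keep each gluing well-defined is a point the paper leaves implicit.
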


\begin{proof}
    First we can show that the two graphs have the same vertex set. The vertex set of $\mathcal{G} \square \mathcal{H}$ is $V_\mathcal{G} \times V_\mathcal{H}$ and the gluing $g_1 \Leftarrow g_2 : \mathcal{G} \square \mathcal{H}$ removes from this set all vertices of the form $\{g_2, h_j\}$, giving a vertex set
    \begin{equation}
        \{\{g_i, h_j\} ~ | ~ (g_i \neq g_2) \in V_\mathcal{G}, h_j \in V_\mathcal{H}\}.
    \end{equation}
    On the other hand, $g_1 \Leftarrow g_2 : \mathcal{G}$ results in a vertex set $V_\mathcal{G}' = \{g_i ~ | ~ (g_i \neq g_2) \in V_\mathcal{G}\}$ and the product $(g_1 \Leftarrow g_2 : \mathcal{G}) \square \mathcal{H}$ then has vertex set 
    \begin{equation}
        \{\{g_i, h_j\} ~ | ~ g_i \in V_\mathcal{G}', h_j \in V_\mathcal{H}\}
    \end{equation}
    which is equivalent to the above. 

    Secondly, we can show that each vertex has the same neighbourhood in both graphs (equivalent to showing that the edge sets are equivalent). First, notice that for a vertex $(g_i,h_j)$ of $\mathcal{G} \square \mathcal{H}$, the neighbourhood $\mathcal{N}((g_i,h_j))$ can be split into two parts, $N^g$ and $N^h$, where $N^g = \mathcal{N}(g_i) \times h_j$ and $N^h = g_i \times \mathcal{N}(h_j)$. Then, recalling the effect of gluing on neighbourhoods described in the proof of \cref{lemma:gluing_assoc_comm}, for $g_1 \Leftarrow g_2 : \mathcal{G} \square \mathcal{H}$ we have that
    \begin{equation}
        \begin{split}
            \mathcal{N}((g_1,h_j)) \mapsto (\mathcal{N}(g_1) &\times h_j) \cup (g_1 \times    \mathcal{N}(h_j)) \\
            &\cup (\mathcal{N}(g_2) \times h_j) \cup (g_2 \times \mathcal{N}(h_j))
        \end{split}
    \end{equation}
    but because $g_1 \Leftarrow g_2 : \mathcal{G} \square \mathcal{H}$ glues the vertices in the set $g_1 \times \mathcal{N}(h_j)$ to those in $g_2 \times \mathcal{N}(h_j)$ this is really
    \begin{equation}
        \mathcal{N}(g_i,h_j) \mapsto ((\mathcal{N}(g_1) \cup \mathcal{N}(g_2)) \times h_j) \cup (g_1 \times \mathcal{N}(h_j)) 
    \end{equation}
    where we have used the fact that, for sets $A,B,C$, $(A \times C) \cup (B \times C) = ((A \cup B) \times C)$. For $(g_1 \Leftarrow g_2 : \mathcal{G}) \square \mathcal{H}$, we instead have that 
    \begin{equation}
        \mathcal{N}(g_1) \mapsto \mathcal{N}(g_1) \cup \mathcal{N}(g_2)
    \end{equation}
    in $\mathcal{G}$, and then after taking the product with $\mathcal{H}$
    \begin{equation}
        \begin{split}
            \mathcal{N}((g_1,h_j)) &= N^g \cup N^h \\
            &= ((\mathcal{N}(g_1) \cup \mathcal{N}(g_2)) \times h_j) \cup (g_1 \times \mathcal{N}(h_j)) 
        \end{split}
    \end{equation}
    which is the same as above, completing the proof.
\end{proof}

\noindent An example of this commutation is shown in \cref{fig:commute}. 

We also define an opposite operation to gluing, which is \textit{ungluing}

\begin{definition}
    Given a vertex $g_1$ of a graph $\mathcal{G}$ and a set of edges $U \subseteq \mathcal{N}(g_1)$, an \textbf{ungluing} $g_1 \overset{U}{\leftrightarrow} g_2 : \mathcal{G}$ adds a new vertex $g_2$ to $V_\mathcal{G}$ and replaces all edges $\{g_1,g_i\} \in U$ with edges $\{g_2,g_i\}$
\end{definition}

Notice that unlike gluing, ungluing is not uniquely defined as we can distribute the edges originally connected to $g_1$ between $g_1$ and $g_2$ in multiple different ways. This means that for any choice of $U$ we are guaranteed to have
\begin{equation}
    \label{eq:unglue_glue}
    g_1 \leftarrow g_2 : (g_1 \overset{U}{\leftrightarrow} g_2 : \mathcal{G}) = \mathcal{G}
\end{equation}
but if we reverse the order then in general
\begin{equation}
    \label{eq:glue_unglue}
    g_1 \overset{U}{\leftrightarrow} g_2 : (g_1 \leftarrow g_2 : \mathcal{G}) \neq \mathcal{G}.
\end{equation}
Finally, we define ungluing along a hyperplane as
\begin{definition}
    Given a graph $\mathcal{G}\square\mathcal{H}$ and a set $U \in \mathcal{N}(g_1)$, the operation $g_1 \overset{U}{\Leftrightarrow} g_2 : \mathcal{G}\square\mathcal{H}$ is equivalent to the operation $(g_1 \overset{U}{\leftrightarrow} g_2: \mathcal{G})\square\mathcal{H}$.
\end{definition}

\noindent which commutes with the Cartesian product by definition. The fact that 
\begin{equation}
    \label{eq:unglue_prod_glue}
    \begin{split}
        &g_1 \Leftarrow g_2 : (g_1 \overset{U}{\Leftrightarrow} g_2 : \mathcal{G} \square \mathcal{H}) \\
        =~&g_1 \Leftarrow g_2 : ((g_1 \overset{U}{\leftrightarrow} g_2 : \mathcal{G}) \square \mathcal{H})\\
        =~&\mathcal{G} \square \mathcal{H}
    \end{split}
\end{equation}
is then guaranteed by \cref{prop:glue_prod_commute} and \cref{eq:unglue_glue}. 

The insight provided by \cref{eq:unglue_prod_glue} that instead of studying the product of a few very complex graphs, we can equivalently ``unglue'' these graphs into simpler graphs, take products of these and then glue together the results. However, we still need a way to define simplicial complexes from these graphs, as well as way to understand the effect of gluing on these complexes. 

\subsection{Flags and their gluings}
\label{subsection:hgp_flag}

The next step is to find a way to obtain simplex graphs from the graphs produced by the product, and understand how they are affected by gluing operations on the input graphs.

Given a bipartite graph $\mathcal{G}$ we can define two types of vertex, which we call level 0 and level 1, and write $g_i^0$ and $g_i^1$. In a graph $\mathcal{G} \square \mathcal{H}$ where $\mathcal{G}$ and $\mathcal{H}$ are both bipartite we then have three levels of vertex:

\begin{itemize}
    \item level 0 vertices are products of two level 0 vertices, $(g_i^0,h_j^0)^0$,
    \item level 1 vertices are products of a level 0 and a level 1 vertex, $(g_i^1,h_j^0)^1$ or $(g_i^0,h_j^1)^1$,
    \item level 2 vertices are products of two level 1 vertices, $(g_i^1,h_j^1)^2$.
\end{itemize}

More generally, for the Cartesian product of $D$ bipartite graphs we will have $D+1$ levels of vertex in the output, with the level of a given output vertex equal to the sum of the levels of all its corresponding input vertices. Typically we will be focused on a single input graph $\mathcal{G}$ (i.e. as the target of a gluing) and in this case we will use an abbreviated notation $(g_i^l,...)^L$ where $l \in \{0,1\}$, $L$ is the level of the product vertex and the $...$ represents input vertices from all other input graphs. A $D$-dimensional simplicial complex can then be obtained from this graph by associating a $D$-simplex with every path of length $D+1$ that consists of a sequence of vertices with levels $0,1,...,D$. In keeping with the terminology of \cite{vuillot2022quantum} we will refer to these paths as \textit{flags} and write them as $(D+1)$-uples $((g_i^0,...)^0,...,(g_{i'}^1,...)^D)$. Two examples for $D=2$ and $D=3$ are shown in \cref{fig:flags}. When discussing simplicial complexes obtained in this manner we will use the term ``flag graph'' (or $\mathcal{G}_\mathcal{F}$) to refer to the simplex graph and ``product graph'' (or $\mathcal{G}_\square$) to refer to the graph with levelled vertices obtained from the Cartesian product. Additionally, the vertex levels in the product graph will be used to label the edge colours in the flag graph, so that e.g. two flags which share product graph vertices with levels $1,2,...,D$ will correspond to flag graph vertices connected by an edge with colour $c_0$. 

\begin{figure}
    \centering
    \includegraphics[width=.8\linewidth]{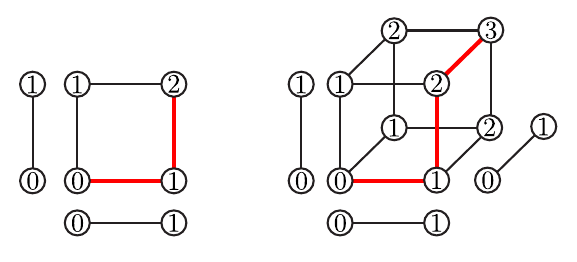}
    \caption{Examples of flags (paths of red edges) in Cartesian products of two and three graphs. Numbers show vertex levels.}
    \label{fig:flags}
\end{figure}

\begin{figure*}[b]
    \centering
    \includegraphics[width=.75\textwidth]{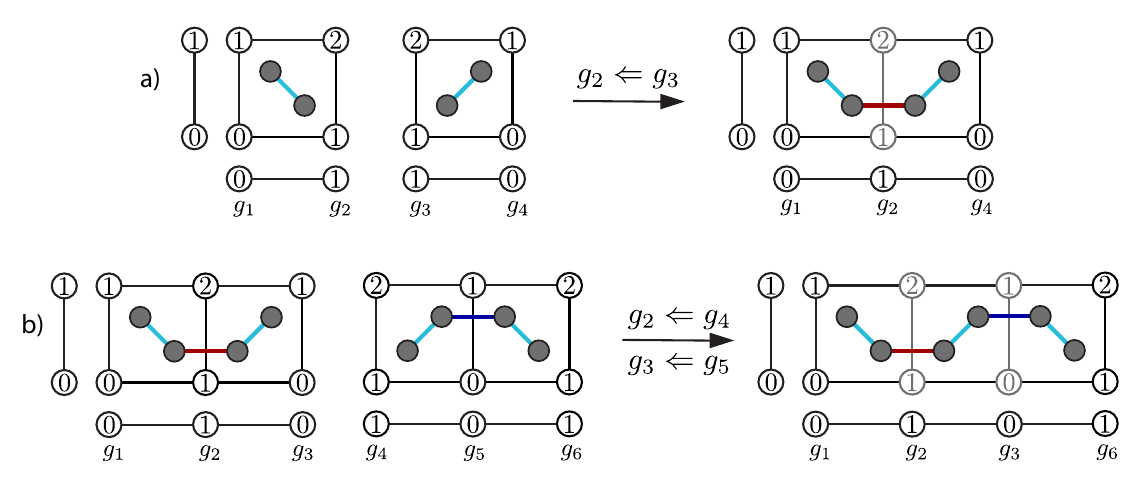}
    \caption{Gluings on product graphs (numbered vertices and black edges) and their effects on the corresponding flag graphs (dark vertices and coloured edges). a) A single gluing that joins two flags at $D-1$ vertices in the product graph and so adds a new $c_0$-coloured edge between the corresponding vertices of the flag graph. b) A pair of gluings that glue together pairs of flags in the product graph and so glue together the corresponding vertices in the flag graph. Seams in the product graph are shown in both cases by grey edges/vertices.}
    \label{fig:flag_gluings}
\end{figure*}

When considering gluings of bipartite graphs, bipartiteness is only preserved when gluings are performed between vertices of the same level, so we will consider only these types of gluings. The flag graph is modified by gluings of the product graph only when flags in the product graph are glued together at $D$ or $D+1$ of their vertices, with the former creating new edges in the flag graph and the latter gluing together vertices of the flag graph. Notice, however, that this second case cannnot be accomplished with a single hyperplane gluing as, for example, the gluing $g_i^0 \Leftarrow g_j^0$ can non-trivially affect at most $D$ vertices of any flag, as a flag where all $D$ vertices are of the form $(g_i^0,...)$ or $(g_j^0,...)$ would not contain a level $D$ vertex. On the other hand, we can identify two distinct cases where flags can be joined at $D$ vertices by a single gluing. Either we have a gluing $g_i^0 \Leftarrow g_j^0$ and a pair of flags
\begin{equation*}
    \begin{split}
       &F_1 = ((g_i^0,...)^0,...,(g_i^0,...)^{D-1},(g_{i'}^1,...)^D) \\
       &F_2 = ((g_j^0,...)^0,...,(g_j^0,...)^{D-1},(g_{j'}^1,...)^D),
    \end{split}
\end{equation*}
or we have a gluing $g_{i'}^1 \Leftarrow g_{j'}^1$ and a pair of flags
\begin{equation*}
    \begin{split}
        &F_1 = ((g_i^0,...)^0,(g_{i'}^1,...)^1,...,(g_{i'}^1,...)^D)\\
        &F_2 = ((g_j^0,...)^0,(g_{j'}^1,...)^1,...,(g_{j'}^1,...)^D).
    \end{split}
\end{equation*}
In the first case a new $c_D$ edge in the flag graph will be created between the vertices corresponding to $F_1$ and $F_2$, while in the second case a new $c_0$ edge will be created. No other colours of edge can be created by gluing. An example of such a gluing is shown in \cref{fig:flag_gluings} a). We then fix the following terminology. 

\begin{definition}
    Given a graph $g_i^l \Leftarrow g_j^l : \mathcal{G}_\square$ the \textbf{type-$l$ seam} associated to the gluing is the set of vertices of the form $(g_i^l,...)^L$. Additionally, a flag is said to lie on this seam if $D$ of its $D+1$ vertices are in the seam.
\end{definition}

Intuitively the seam is the set of vertices of $g_i^l \Leftarrow g_j^l : \mathcal{G}_\square$ that were modified by the gluing. We can then see that the new edges created by the gluing will always be between flags which lie on the seam. 

As alluded to above, it is possible to use multiple gluings to glue one flag fully onto another, and thus glue together a pair of vertices in the flag graph. This is accomplished using a pair of gluings $g_i^0 \Leftarrow g_j^0$ and $g_{i'}^1 \Leftarrow g_{j'}^1$ where $g_{i'}^1 \in \mathcal{N}(g_i^0)$ and $g_{j'}^1 \in \mathcal{N}(g_j^0)$. This creates a pair of seams, as in \cref{fig:flag_gluings} b). 

We also care about the action of ungluing on flag graphs. From \cref{eq:unglue_prod_glue} we know that $\mathcal{G}_\square$ and $g_1^l \Leftarrow g_2^l : (g_1^l \overset{U}{\Leftrightarrow} g_2^l : \mathcal{G}_\square)$ are identical and so have identical flag graphs. Since $g_1^l \Leftarrow g_2^l$ adds edges of colour $c_0$ or $c_D$, $g_1^l \overset{U}{\Leftrightarrow} g_2^l$ must therefore delete edges of colour $c_0$ or $c_D$. 

\subsection{Joining colour code lattices}
\label{subsection:hgp_join}

Now that we have a method for obtaining a simplex graph, the next step is to show that this graph is equivalent to a collection of colour code lattices which will be joined together by the gluing. 

Consider a set of $D$ cycle graphs of even length, $\{\mathcal{O}_1,\mathcal{O}_2,...,\mathcal{O}_D\}$. We will write the vertices of $\mathcal{O}_a$ as $o_{a,i}^l$, with $i$ and $l$ representing index and level as above (bipartiteness of the graph is guaranteed by the even length). The Cartesian product of these graphs, $\mathcal{G}_\square = \mathcal{O}_1 \square \mathcal{O}_2 \square ... \square \mathcal{O}_D$, is a $D$-dimensional hypercubic lattice on a $D$-dimensional torus, with each vertex having coordinates and level $(o_{1,i_1}^{l_1},o_{2,i_2}^{l_2},...,o_{D,i_D}^{l_D})^L$ with $L=l_1+\ldots+l_D$.

\begin{lemma}
    \label{lemma:cc_lattice}
    The flag graph $\mathcal{G}_\mathcal{F}$ of a graph $\mathcal{G}_\square = \mathcal{O}_1 \square \mathcal{O}_2 \square ... \square \mathcal{O}_D$ is a $D$-dimensional colour code lattice.
\end{lemma}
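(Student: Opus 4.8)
The plan is to reduce the statement to \cref{lemma:manifold_cellulation}, which already guarantees that the simplex graph of any $(D+1)$-coloured simplicial complex cellulating a closed $D$-manifold is a colour code lattice. Since $\mathcal{G}_\square = \mathcal{O}_1 \square \cdots \square \mathcal{O}_D$ is, as noted immediately above the statement, a hypercubic cellulation of the $D$-torus, it suffices to show that the flag graph $\mathcal{G}_\mathcal{F}$ is exactly the simplex graph of a simplicial complex that cellulates this same torus and carries a $(D+1)$-colouring of its $0$-simplices.

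First I would make the cell structure of $\mathcal{G}_\square$ explicit. The even length of each $\mathcal{O}_a$ makes it bipartite, so every product vertex $(o_{1,i_1}^{l_1},\ldots,o_{D,i_D}^{l_D})^L$ has a well-defined level $L=\sum_a l_a$, and a level-$k$ vertex (exactly $k$ of the $l_a$ equal to $1$) is naturally identified with a $k$-cell of the hypercubic lattice, while an edge of $\mathcal{G}_\square$, which raises or lowers exactly one $l_a$, records the incidence of a $k$-cell in a $(k{+}1)$-cell. A flag---a path whose vertices have levels $0,1,\ldots,D$---is then precisely a complete chain $c_0 \in c_1 \in \cdots \in c_D$ of cells of strictly increasing dimension, so the simplicial complex built from flags is the barycentric subdivision of the hypercubic cellulation.

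The key topological input is that the barycentric subdivision of a cellulation of a closed manifold is again a cellulation (a triangulation) of the same closed manifold, so the flag complex cellulates the $D$-torus. I would then supply the required $(D+1)$-colouring by colouring each $0$-simplex with the level, equivalently the dimension, of the cell it represents: any two $0$-simplices lying in a common simplex belong to the same chain and therefore correspond to cells of distinct dimensions, so they receive distinct colours and the colouring is proper with exactly $D+1$ colours. With these two facts in hand, \cref{lemma:manifold_cellulation} applies directly and yields that $\mathcal{G}_\mathcal{F}$ is a $D$-dimensional colour code lattice.

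The main obstacle is the non-degeneracy needed for the flag complex to be a bona fide cellulation of the torus rather than a singular or identified object: I must ensure that distinct flags give distinct simplices and that the subdivision faithfully reproduces the torus. This is exactly where the even-length hypothesis is used, since it guarantees the factor cycles are bipartite and (for length at least four) simple, so the product is a regular CW structure with each level-$k$ vertex in bijection with a genuine $k$-cell. Concretely one checks the local manifold condition that each $(D-1)$-simplex, i.e. each flag with one cell deleted, is completed to a full flag in exactly two ways, which is precisely what reproduces the $(D+1)$-regularity of the simplex graph demanded by \cref{lemma:manifold_cellulation}; once this local rigidity is verified the remainder is routine bookkeeping.
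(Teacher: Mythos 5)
Your proposal is correct and follows essentially the same route as the paper's proof: recognise the flag complex as the (barycentric) subdivision of the hypercubic cellulation of the $D$-torus given by $\mathcal{G}_\square$, note that it therefore cellulates the same closed manifold with the $(D+1)$-colouring by cell dimension, and invoke \cref{lemma:manifold_cellulation}. You simply spell out the details (the identification of level-$k$ vertices with $k$-cells, the properness of the level colouring, and the two-way flag-completion condition) that the paper's two-line proof leaves implicit.
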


\begin{proof}
    Because $\mathcal{G}_\square$ is a cellulation of a $D$-dimensional manifold the simplicial complex described by $\mathcal{G}_\mathcal{F}$ is also a cellulation of this same manifold (it is simply a subdivision of the hypercubic lattice into simplices), and by \cref{lemma:manifold_cellulation} $\mathcal{G}_\mathcal{F}$ is then a $D$-dimensional colour code lattice. 
\end{proof}

\begin{figure*}
    \centering
    \includegraphics[width=\textwidth]{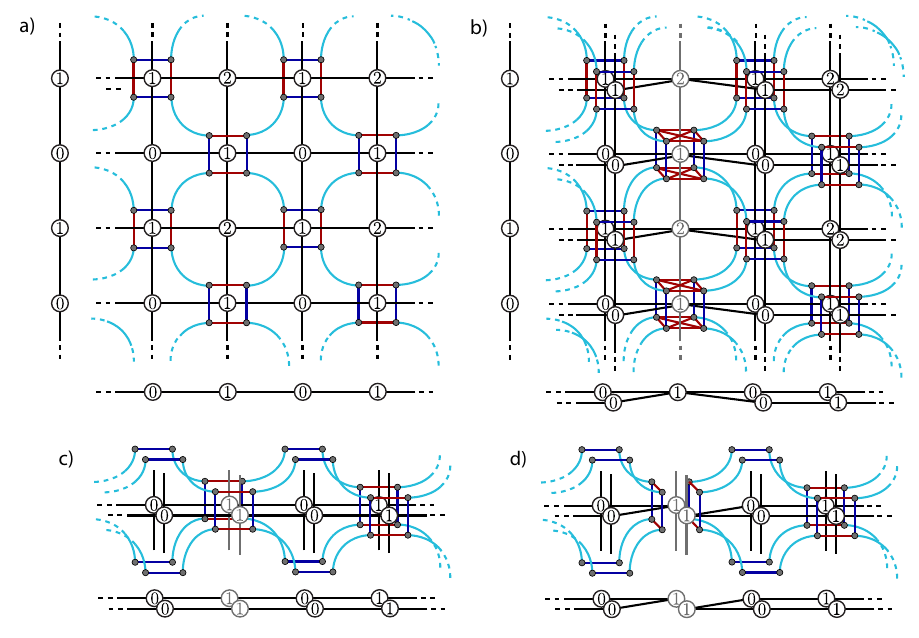}
    \caption{a) A flag graph equivalent to a 2D colour code lattice produced from the product of a pair of cycles. b) A pair of 2D colour code lattices joined at a type-1 seam. c) and d) Subsections of split-equivalent lattices obtained from different splits of the lattice in b). In c) we have a pair of disjoint lattices and in d) we have a single large lattice.}
    \label{fig:glued_ccs}
\end{figure*}

An example of this is shown in \cref{fig:glued_ccs} a). This statement is equivalent to results in other works where $\mathcal{G}_\square$ is interpreted as a chain complex rather than a graph~\cite{bombin2007exact,vuillot2022quantum}. 

\begin{figure*}
    \centering
    \includegraphics[width=\textwidth]{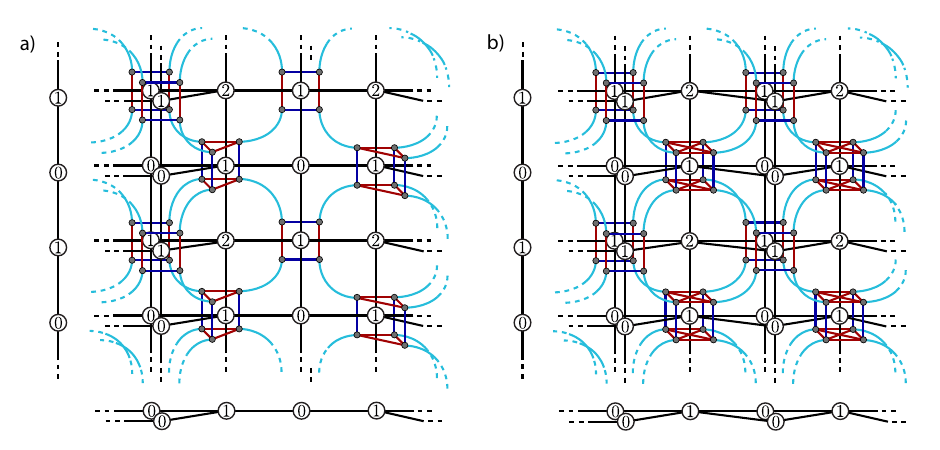}
    \caption{a) Flag graph obtained from a gluing of two cycles at a set of common edges, resuling in a pair of non-splittable seams. b) Flag graph obtained from a gluing of three cycles at a common set of edges resulting in a pair of splittable seams. This graph can be split into two (but not three) colour code lattices.}
    \label{fig:glued_css_2}
\end{figure*}

Now consider a graph $\mathcal{G}_\square = (\mathcal{O}_1 \cup \mathcal{O}_2) \square \mathcal{O}_3 \square ... \square \mathcal{O}_{D+1}$ which is equivalent to a pair of disjoint $D$-dimensional hypercubic lattices, and so has a flag graph equivalant to a pair of disjoint $D$-dimensional colour code lattices. The gluing $o_{1,i}^l \Leftarrow o_{2,j}^l : \mathcal{G}_\square$ which glues together these two hypercubic lattices then also modifies the associated colour code lattices/flag graph in accordance with the discussion in the previous section. Specifically, it creates new $c_0$ or $c_D$ edges (depending on $l=1$ or $0$) between vertices of $\mathcal{G}_\mathcal{F}$ that correspond to flags lying on the seam assocaited to the gluing, and so joins the whole flag graph into a single connected component. An example is shown in \cref{fig:glued_ccs} b), where we can see that the flag graph post-gluing remains locally identical to a colour code lattice everywhere except in the neighbourhood of the seam. We refer to this operation as the \textit{join} of two colour code lattices.

\begin{definition}
    Given a pair of product graphs $\mathcal{G}_\square^1$ and $\mathcal{G}_\square^2$ whose flag graphs are colour code lattices, a \textbf{join} of these lattices is the transformation induced by the gluing $g_i^l \Leftarrow g_j^l: (\mathcal{G}_\square^1 \cup \mathcal{G}_\square^2)$ for $g_i^l \in \mathcal{G}_\square^1$ and $g_j^l \in \mathcal{G}_\square^2$.
\end{definition}

We can also define an opposite operation, which is the \textit{split} of two colour code lattices. 

\begin{definition}
    Given a pair of joined colour code lattices defined from a graph $\mathcal{G} = g_i^l \Leftarrow g_j^l: (\mathcal{G}_\square^1 \cup \mathcal{G}_\square^2)$, a \textbf{split} of these lattices is the transformation induced by the ungluing $g_i^l \overset{U}\Leftrightarrow g_j^l : \mathcal{G}$, where $\lvert U \rvert = 2$.
\end{definition}

Notice that, due to \cref{eq:glue_unglue}, it is not guaranteed that performing a join and then a split recovers the original pair of colour code lattices. However, the requirement that $\lvert U \rvert = 2$ ensures that we recover either a pair of colour code lattice or a single larger colour code lattice, as this ungluing necessarily divides the underlying glued cycle graphs back into a disjoint union of cycle graphs. We can also convince ourselves of this diagramatically by considering a gluing of two cycles

\begin{center}
    \begin{tikzpicture}[scale=0.45]
        \filldraw (0,0) circle (3pt);
        \filldraw (-1,-1) circle (3pt);
        \filldraw (1,-1) circle (3pt);
        \filldraw (0,-2) circle (3pt);
        \draw (0,0) -- (-1,-1);
        \draw (0,0) -- (1,-1);
        \draw (-1,-1) -- (0,-2);
        \draw (1,-1) -- (0,-2);

        \filldraw (0,-2.5) circle (3pt);
        \filldraw (-1,-3.5) circle (3pt);
        \filldraw (1,-3.5) circle (3pt);
        \filldraw (0,-4.5) circle (3pt);
        \draw (0,-2.5) -- (-1,-3.5);
        \draw (0,-2.5) -- (1,-3.5);
        \draw (-1,-3.5) -- (0,-4.5);
        \draw (1,-3.5) -- (0,-4.5);

        \draw[->] (2,-2.25) -- (3,-2.25);

        \filldraw (5,-0.25) circle (3pt);
        \filldraw (4,-1.25) circle (3pt);
        \filldraw (6,-1.25) circle (3pt);
        \filldraw (5,-2.25) circle (3pt);
        \filldraw (4,-3.25) circle (3pt);
        \filldraw (6,-3.25) circle (3pt);
        \filldraw (5,-4.25) circle (3pt);
        \draw (5,-0.25) -- (4,-1.25);
        \draw (5,-0.25) -- (6,-1.25);
        \draw (4,-1.25) -- (5,-2.25);
        \draw (6,-1.25) -- (5,-2.25);
        \draw (5,-2.25) -- (4,-3.25);
        \draw (5,-2.25) -- (6,-3.25);
        \draw (4,-3.25) -- (5,-4.25);
        \draw (6,-3.25) -- (5,-4.25);
    \end{tikzpicture}
\end{center}

\noindent and the set of possible ungluings satisfying $\lvert U \rvert = 2$

\begin{center}
    \begin{tikzpicture}[scale=0.45]
        \filldraw (0,0) circle (3pt);
        \filldraw (-1,-1) circle (3pt);
        \filldraw (1,-1) circle (3pt);
        \filldraw (0,-2) circle (3pt);
        \draw (0,0) -- (-1,-1) -- (0,-2) -- (1,-1) -- cycle;

        \filldraw (0,-2.5) circle (3pt);
        \filldraw (-1,-3.5) circle (3pt);
        \filldraw (1,-3.5) circle (3pt);
        \filldraw (0,-4.5) circle (3pt);
        \draw (0,-2.5) -- (-1,-3.5) -- (0,-4.5) -- (1,-3.5) --cycle;

        \filldraw (3,0) circle (3pt);
        \filldraw (2,-1) circle (3pt);
        \filldraw (4,-1) circle (3pt);
        \filldraw (3,-2) circle (3pt);
        
        \filldraw (3,-2.5) circle (3pt);
        \filldraw (2,-3.5) circle (3pt);
        \filldraw (4,-3.5) circle (3pt);
        \filldraw (3,-4.5) circle (3pt);
        \draw (3,0) -- (2,-1) -- (3,-2.5) -- (4,-1) -- cycle;
        \draw (3,-2) -- (2,-3.5) -- (3,-4.5) -- (4,-3.5) -- cycle;

        \filldraw (6,0) circle (3pt);
        \filldraw (5,-1) circle (3pt);
        \filldraw (7,-1) circle (3pt);
        \filldraw (6,-2) circle (3pt);

        \filldraw (6,-2.5) circle (3pt);
        \filldraw (5,-3.5) circle (3pt);
        \filldraw (7,-3.5) circle (3pt);
        \filldraw (6,-4.5) circle (3pt);
        \draw (6,0) -- (5,-1) -- (6,-2) -- (5,-3.5) -- (6,-4.5) -- 
            (7,-3.5) -- (6,-2.5) -- (7,-1) -- cycle;

        \filldraw (9,0) circle (3pt);
        \filldraw (8,-1) circle (3pt);
        \filldraw (10,-1) circle (3pt);
        \filldraw (9,-2) circle (3pt);

        \filldraw (9,-2.5) circle (3pt);
        \filldraw (8,-3.5) circle (3pt);
        \filldraw (10,-3.5) circle (3pt);
        \filldraw (9,-4.5) circle (3pt);
        \draw (9,0) -- (10,-1) -- (9,-2) -- (10,-3.5) -- (9,-4.5) -- 
            (8,-3.5) -- (9,-2.5) -- (8,-1) -- cycle;

        \filldraw (12,0) circle (3pt);
        \filldraw (11,-1) circle (3pt);
        \filldraw (13,-1) circle (3pt);
        \filldraw (12,-2) circle (3pt);

        \filldraw (12,-2.5) circle (3pt);
        \filldraw (11,-3.5) circle (3pt);
        \filldraw (13,-3.5) circle (3pt);
        \filldraw (12,-4.5) circle (3pt);
        \draw (12,0) -- (11,-1) -- (12,-2) -- (13,-3.5) -- (12,-4.5) --
            (11,-3.5) -- (12,-2.5) -- (13,-1) -- cycle;

        \filldraw (15,0) circle (3pt);
        \filldraw (14,-1) circle (3pt);
        \filldraw (16,-1) circle (3pt);
        \filldraw (15,-2) circle (3pt);

        \filldraw (15,-2.5) circle (3pt);
        \filldraw (14,-3.5) circle (3pt);
        \filldraw (16,-3.5) circle (3pt);
        \filldraw (15,-4.5) circle (3pt);
        \draw (15,0) -- (16,-1) -- (15,-2) -- (14,-3.5) -- (15,-4.5) -- 
            (16,-3.5) -- (15,-2.5) -- (14,-1) -- cycle;
    \end{tikzpicture}
\end{center}

Sets of colour code lattices that can be mapped into each other by performing a join and then a split will be referred to as \textbf{split-equivalent}. Examples are shown in \cref{fig:glued_ccs} c) and d).

Finally we consider more complex gluings. Consider (n-1) gluings of $n$ cycles at a common vertex
\begin{equation}
    \prod_{a=2}^n \left(o_{1,i_a}^l \Leftarrow o_{a,i_a}^l\right) : O_1 \cup \left(\bigcup_{a=2}^n O_a\right)
\end{equation}
where $\prod$ is used to mean composition of gluings. These gluings generalise the gluing of two cycles at single vertex that we have considered previously, and are significant as their only action on the corresponding flag graphs is to create new edges, and the vertices of the flag graphs are always preserved. We can think of them as joins of multiple colour code lattices, and these lattices (or a set that is split-equivalent) can be recovered by an appropriate sequence of ungluings. We call such seams \textit{splittable}.

\begin{definition}
    A seam in a $(D+1)$-coloured flag graph is \textbf{splittable} if there exists a sequence of ungluings such that all flags lying on the seam are part of exactly one edge of each colour in the flag graph produced by the ungluing.
\end{definition}

A set of joined colour codes can therefore be split into a set of disjoint colour codes \textit{iff} all seams are splittable (since a colour code lattice is defined by being $(D+1)$-valent and $(D+1)$-colourable). An example where this is not possible (and thus that contains unsplittable seams) is given by the gluing

\begin{center}
    \begin{tikzpicture}[scale=0.45]
        \filldraw (0,0) circle (3pt);
        \filldraw (1,0) circle (3pt);
        \filldraw (2,0) circle (3pt);
        \filldraw (0,1) circle (3pt);
        \filldraw (2,1) circle (3pt);
        \filldraw (0,2) circle (3pt);
        \filldraw (1,2) circle (3pt);
        \filldraw (2,2) circle (3pt);
        \draw (0,0) -- (2,0) -- (2,2) -- (0,2) -- cycle;

        \filldraw (3,0) circle (3pt);
        \filldraw (4,0) circle (3pt);
        \filldraw (5,0) circle (3pt);
        \filldraw (3,1) circle (3pt);
        \filldraw (5,1) circle (3pt);
        \filldraw (3,2) circle (3pt);
        \filldraw (4,2) circle (3pt);
        \filldraw (5,2) circle (3pt);
        \draw (3,0) -- (5,0) -- (5,2) -- (3,2) -- cycle;

        \draw[->] (6,1) -- (7,1);

        \filldraw (8,0) circle (3pt);
        \filldraw (9,0) circle (3pt);
        \filldraw (10,0) circle (3pt);
        \filldraw (11,0) circle (3pt);
        \filldraw (12,0) circle (3pt);
        \filldraw (8,1) circle (3pt);
        \filldraw (10,1) circle (3pt);
        \filldraw (12,1) circle (3pt);
        \filldraw (8,2) circle (3pt);
        \filldraw (9,2) circle (3pt);
        \filldraw (10,2) circle (3pt);
        \filldraw (11,2) circle (3pt);
        \filldraw (12,2) circle (3pt);
        \draw (8,0) -- (12,0) -- (12,2) -- (8,2) -- cycle;
        \draw (10,0) -- (10,2);
    \end{tikzpicture}
\end{center}

\noindent where two cycles are glued at three vertices/two edges. Because two of the vertices in the resulting graph have degree three there is no way to recover a pair of cycles using the ungluing operation that we have defined. We show in \cref{fig:glued_css_2} the effect of such a gluing on a pair of colour code lattices. Because the gluing is not reversible there is no splitting of this flag graph that produces a set of disjoint colour codes the flag graph contains unsplittable seams (corresponding to the two degree-3 vertices in the base graph). We generalise this idea with the following statement

\begin{lemma}
\label{lemma:splittable_seams}
    If all vertices in a graph $\mathcal{G}_\square = \mathcal{G}_1 \square \mathcal{G}_2 \square ... \square \mathcal{G}_D$ are of even degree then all seams are splittable. 
\end{lemma}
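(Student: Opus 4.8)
The plan is to push the problem down to the individual factor graphs $\mathcal{G}_a$ and then invoke the elementary fact that a graph all of whose vertices have even degree is a union of edge-disjoint cycles. First I would observe that, because hyperplane gluing acts inside a single factor (by \cref{prop:glue_prod_commute}) and the gluings producing seams only ever create flag-graph edges of colour $c_0$ or $c_D$, every seam is localised at a vertex of some factor $\mathcal{G}_a$ whose degree exceeds $2$. This suggests the following global strategy: decompose each factor into cycles by ungluings, so that $\mathcal{G}_\square$ is turned into a disjoint union of products of cycles; by \cref{lemma:cc_lattice} each such component is a colour code lattice, in which \emph{every} flag lies on exactly one edge of each colour, and this single fact simultaneously certifies that every seam is splittable.

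The enabling step is to argue that the even-degree hypothesis descends to the factors. Since the degree of a product vertex is the sum of the degrees of its factor components, all product degrees being even forces, for each factor, all of its vertices to share a single degree parity (comparing two factor vertices while holding the other components fixed). Because each factor inherits degree-$2$ vertices from the underlying cycles, that common parity must be even, so every $\mathcal{G}_a$ is itself an even graph. As $\mathcal{G}_a$ is bipartite, its edge set partitions into \emph{even} cycles, and equivalently every vertex of degree $2m$ can be split into $m$ vertices of degree $2$ by grouping its incident edges into the $m$ pairs dictated by this cycle decomposition.

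Next I would realise this combinatorial splitting by the admissible operations. At a degree-$2m$ factor vertex I would perform $m-1$ hyperplane ungluings with $\lvert U\rvert = 2$ (each a legal split), peeling off one edge-pair at a time until the vertex is replaced by $m$ degree-$2$ vertices distributed along the cycles of the fixed decomposition. Carrying this out consistently at every vertex of every factor replaces each $\mathcal{G}_a$ by the disjoint union of its cycles, hence replaces $\mathcal{G}_\square$ by a disjoint union of products of even cycles. By \cref{lemma:cc_lattice} these are colour code lattices, so in the resulting flag graph every flag --- in particular every flag lying on any seam --- is part of exactly one edge of each colour, which is precisely the splittability condition.

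The main obstacle I expect is the descent of the hypothesis in the second step: product-evenness only constrains the factor degree parities jointly, so one must genuinely use that the factors contain degree-$2$ vertices to exclude the ``all factors odd'' possibility that would otherwise break the argument. The second delicate point is the bookkeeping in the third step --- checking that the $\lvert U\rvert=2$ ungluings, applied according to one globally fixed cycle decomposition, really reassemble into disjoint cycles rather than interfering with one another --- but this follows because each ungluing is local to a single vertex's neighbourhood and commutes with the Cartesian product by definition.
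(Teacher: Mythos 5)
Your proof follows essentially the same route as the paper's: both rest on the classical fact that a graph admits a cycle decomposition \emph{iff} all of its vertices have even degree, use this to unglue each factor into a disjoint union of cycles, and then invoke \cref{lemma:cc_lattice} to conclude that the resulting flag graph is a disjoint union of colour code lattices, in which every flag --- in particular every flag lying on a seam --- is part of exactly one edge of each colour, which is exactly the splittability condition. Your explicit realisation of the cycle decomposition by $m-1$ ungluings with $\lvert U \rvert = 2$ at each degree-$2m$ vertex is a correct elaboration of a step the paper merely asserts (``such a decomposition then implies the existence of an ungluing of $\mathcal{G}$ into disjoint cycles'').

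The one step of yours that does not survive scrutiny is the descent argument. Your parity observation is fine: since product degrees are sums of factor degrees, evenness of all product degrees forces all vertices within a single factor to share one parity. But the claim you use to pin that parity to ``even'' --- that ``each factor inherits degree-$2$ vertices from the underlying cycles'' --- is false in general: the fully connected bipartite graph $K_{4,4}$, which the paper itself uses as an input factor, is $4$-regular and has no degree-$2$ vertex (it is a gluing of two $8$-cycles in which \emph{every} vertex participates in a gluing). Worse, the descent itself fails as pure graph theory: $K_{3,3} \square K_{3,3}$ is $6$-regular, so every product vertex has even degree, while both factors are $3$-regular and admit no cycle decomposition, so no parity bookkeeping can rescue the inference. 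The hypothesis of the lemma must therefore really be read as applying to the factors $\mathcal{G}_i$ --- which is how the paper's (terser) proof implicitly treats it, and which is automatic in the intended HGP setting, where each factor is by construction a gluing of even-length cycles and hence has all degrees even. With the hypothesis read that way, your argument is complete and coincides with the paper's; the degree-$2$ step should simply be deleted rather than repaired.
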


\begin{proof}
    A \textit{cycle decomposition} of a graph $\mathcal{G}$ is a partitioning of the edges of $\mathcal{G}$ into cycles. Such a decomposition is known to exist \textit{iff} every vertex in the graph is of even degree. Such a decomposition then implies the existence of an ungluing of $\mathcal{G}$ into disjoint cycles. Accordingly, if all vertices in $\mathcal{G}_\square$ are of even degree then there is a sequence of ungluing $U$ that unglues each factor $\mathcal{G}_i$ into a set of disjoint cycles. $U : \mathcal{G}_\square$ then has a flag graph equivalent to a collection of disjoint colour code lattices and thus all seams in $\mathcal{G}_\square$ are splittable.
\end{proof}

Finally, notice that given a flag graph $\mathcal{G}_\mathcal{F}$ obtained from a joining of colour code lattices it is not always possible to recover these original lattices by splitting even if all seams are splittable. For an example consider

\begin{center}
    \begin{tikzpicture}[scale=0.4]
        \filldraw (0,0) circle (3pt);
        \filldraw (1,0) circle (3pt);
        \filldraw (2,0) circle (3pt);
        \filldraw (0.4,0.75) circle (3pt);
        \filldraw (2.4,0.75) circle (3pt);
        \filldraw (0.8,1.5) circle (3pt);
        \filldraw (1.8,1.5) circle (3pt);
        \filldraw (2.8,1.5) circle (3pt);
        \draw (0,0) -- (2,0) -- (2.8,1.5) -- (0.8,1.5) -- cycle;

        \filldraw (3,0) circle (3pt);
        \filldraw (4,0) circle (3pt);
        \filldraw (5,0) circle (3pt);
        \filldraw (3.4,0.75) circle (3pt);
        \filldraw (5.4,0.75) circle (3pt);
        \filldraw (3.8,1.5) circle (3pt);
        \filldraw (4.8,1.5) circle (3pt);
        \filldraw (5.8,1.5) circle (3pt);
        \draw (3,0) -- (5,0) -- (5.8,1.5) -- (3.8,1.5) -- cycle;

        \filldraw (2.5,0.25) circle (3pt);
        \filldraw (2.9,1) circle (3pt);
        \filldraw (3.3,1.75) circle (3pt);
        \filldraw (2.5,1.25) circle (3pt);
        \filldraw (3.3,2.75) circle (3pt);
        \filldraw (2.5,2.25) circle (3pt);
        \filldraw (2.9,3) circle (3pt);
        \filldraw (3.3,3.75) circle (3pt);
        \draw (2.5,0.25) -- (3.3,1.75) -- (3.3,3.75) -- (2.5,2.25) -- cycle;

        \draw[->] (6,1) -- (7,1);
        
        \filldraw (7,0) circle (3pt);
        \filldraw (8,0) circle (3pt);
        \filldraw (9,0) circle (3pt);
        \filldraw (10,0) circle (3pt);
        \filldraw (11,0) circle (3pt);
        \filldraw (7.4,0.75) circle (3pt);
        \filldraw (9.4,0.75) circle (3pt);
        \filldraw (11.4,0.75) circle (3pt);
        \filldraw (7.8,1.5) circle (3pt);
        \filldraw (8.8,1.5) circle (3pt);
        \filldraw (9.8,1.5) circle (3pt);
        \filldraw (10.8,1.5) circle (3pt);
        \filldraw (11.8,1.5) circle (3pt);
        \draw (7,0) -- (11,0) -- (11.8,1.5) -- (7.8,1.5) -- cycle;
        \draw (9,0) -- (9.8,1.5);
        \filldraw (9,1) circle (3pt);
        \filldraw (9,2) circle (3pt);
        \filldraw (9.4,2.75) circle (3pt);
        \filldraw (9.8,3.5) circle (3pt);
        \filldraw (9.8,2.5) circle (3pt);
        \draw (9,0) -- (9,2) -- (9.8,3.5) -- (9.8,1.5);

        \draw[->] (12,1) -- (13,1);

        \filldraw (13,0) circle (3pt);
        \filldraw (14,0) circle (3pt);
        \filldraw (15,0) circle (3pt);
        \filldraw (13.4,0.75) circle (3pt);
        \filldraw (13.8,1.5) circle (3pt);
        \filldraw (14.8,1.5) circle (3pt);
        \filldraw (15.8,1.5) circle (3pt);
        \filldraw (15,1) circle (3pt);
        \filldraw (15,2) circle (3pt);
        \filldraw (15.4,2.75) circle (3pt);
        \filldraw (15.8,3.5) circle (3pt);
        \filldraw (15.8,2.5) circle (3pt);
        \draw (13,0) -- (15,0) -- (15,2) -- (15.8,3.5) -- (15.8,1.5) -- (13.8,1.5) -- cycle;

        \filldraw (16,0) circle (3pt);
        \filldraw (17,0) circle (3pt);
        \filldraw (18,0) circle (3pt);
        \filldraw (16.4,0.75) circle (3pt);
        \filldraw (18.4,0.75) circle (3pt);
        \filldraw (16.8,1.5) circle (3pt);
        \filldraw (17.8,1.5) circle (3pt);
        \filldraw (18.8,1.5) circle (3pt);
        \draw (16,0) -- (18,0) -- (18.8,1.5) -- (16.8,1.5) -- cycle;
    \end{tikzpicture}
\end{center}

\noindent where three cycles are glued together at a set of edges. There is no ungluing that can recover these three cycles, but we \textit{can} recover a pair of cycles (as shown). A flag graph obtained from three colour code lattices by a similar gluing is shown in \cref{fig:glued_css_2} b). Even though all the seams are splittable we cannot recover these three lattices via splitting, we can only obtain a pair of lattices. 

\subsection{Seams, stabilisers and logical operators}
\label{subsection:hgp_seam}

Finally we want to understand how to assign stabilisers to the maximal and rainbow subgraphs that exist at seams, and how this choice transforms the logical operators of the joined codes. 

 Consider a pair of $D$-dimensional colour codes joined at a single splittable seam, as in \cref{fig:glued_ccs}. In unjoined colour codes stabilisers can be assigned to either $x$/$z$-maximal or $x$/$z$-rainbow subgraphs as these two types of subgraph are equivalent, but at the seam where two colour code lattices have been joined this is no longer the case. What possible assignments of stabilisers to subgraphs exist at the seam, and how do these assignments affect the logical operator structure of the resulting code?
 
 Recall that the single-qubit logical $Z$ operators of a colour code on a $D$-dimensional torus (for $x=D$ and $z=2$) have a canonical basis in which each operator is supported on the vertices of a set of $c_i$ coloured edges running in a direction $d_j$ around a specific handle of the torus~\cite{kubica2018abcs}. We will write logical $Z$ operators of a colour code $C_m$ as $\overline{Z}_{(c_i,d_j)}^m$. Joining a pair of these codes at a seam then corresponds to joining the two torii at a hyperplane, and so logical $Z$ operators from each code can be described as being either \textit{parallel} or $\textit{normal}$ to this seam. The interactions of each type of logical operator with the seam are encapsulated by the following results, 

 \begin{lemma}
     \label{lemma:parallel_to_seam}
     Given two colour codes $C_1$ and $C_2$ joined at a splittable seam, logical $Z$ operators $\overline{Z}_{(c_i,d_j)}^1$ and $\overline{Z}_{(c_i,d_j)}^2$ with $d_j$ parallel to the seam are equivalent up to composition with $Z$ operators supported on $z$-rainbow subgraphs, but not with $Z$ operators supported on $z$-maximal subgraphs.
 \end{lemma}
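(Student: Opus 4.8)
The plan is to prove the two halves separately: first that $\overline{Z}^1_{(c_i,d_j)} \cdot \overline{Z}^2_{(c_i,d_j)}$ lies in the group generated by $Z$ operators on $z$-rainbow subgraphs, and then that it does \emph{not} lie in the group generated by $Z$ operators on $z$-maximal subgraphs. Since $Z$ is its own inverse, ``equivalent up to composition'' with a class of stabilisers is the statement that the product of the two logicals is a product of stabilisers from that class. Throughout I would use the canonical basis on each torus in which $\overline{Z}^m_{(c_i,d_j)}$ is supported on the vertices of the $c_i$-coloured edges forming a non-contractible loop in direction $d_j$. As $d_j$ is parallel to the seam, both loops run within the hyperplane directions spanning the seam, sitting at different positions along the normal direction: one in the bulk of $C_1$, one in the bulk of $C_2$.

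For the positive half I would build a membrane of $z$-rainbow stabilisers whose boundary is exactly the pair of loops. Away from the seam the flag graph is locally a colour-code lattice (by \cref{lemma:cc_lattice}), so $z$-maximal and $z$-rainbow subgraphs coincide there and I can deform each loop toward the seam by the standard colour-code manoeuvre of multiplying by adjacent face ($z$-rainbow) stabilisers. The crux is then to cancel the two now-adjacent loops against one another using stabilisers that cross the seam. The definition of a splittable seam guarantees an ungluing under which the flags on the seam each become part of exactly one edge of each colour, i.e. genuine $z$-rainbow subgraphs thread through the seam using the new $c_0$/$c_D$ edges created by the gluing. A ribbon of these seam-crossing rainbow subgraphs running parallel to the seam has boundary equal to the two deformed loops, so the telescoping product of bulk faces and seam-crossing rainbow subgraphs equals $\overline{Z}^1 \cdot \overline{Z}^2$.

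For the negative half I would exhibit a single $X$-type operator that detects the difference between the two logicals but is invisible to every $z$-maximal stabiliser. Since the join merges the two normal-direction cycles into a wedge sharing the seam, there is a non-contractible loop confined to the $C_1$ side; I would support on it the conjugate partner $\overline{X}^1$ of $\overline{Z}^1$, represented on an $x$-rainbow subgraph and routed through the seam as a rainbow subgraph (again possible by splittability). By \cref{prop:even_intersection}, $\overline{X}^1$ shares an even number of vertices with every $z$-maximal subgraph and hence commutes with every $z$-maximal stabiliser. On the other hand, as a conjugate logical pair on the $C_1$ torus $\overline{X}^1$ meets $\overline{Z}^1$ in an odd number of vertices, and being confined to $C_1$ it does not meet the bulk representative of $\overline{Z}^2$ at all, so it anticommutes with $\overline{Z}^1 \cdot \overline{Z}^2$. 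Because this parity is well defined modulo $z$-maximal support (as $\overline{X}^1$ commutes with all such generators), an operator anticommuting with $\overline{Z}^1 \cdot \overline{Z}^2$ yet commuting with every $z$-maximal generator certifies that $\overline{Z}^1 \cdot \overline{Z}^2$ is not a product of $z$-maximal stabilisers.

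I expect the main obstacle to be the positive half, specifically making rigorous the seam-crossing membrane: one must verify that the rainbow subgraphs threading the seam close up into a ribbon whose boundary is precisely the pair of deformed loops, with all interior support cancelling. This is exactly where splittability does the work, by certifying that the seam locally admits the same rainbow face structure as a manifold, and I would organise the step by appealing to the split-equivalent lattice produced by the ungluing that the definition of a splittable seam provides.
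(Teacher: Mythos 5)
Your proposal is correct, but only half of it follows the paper's route. For the positive half you ultimately invoke exactly the paper's mechanism: the splittable seam makes $C_1$ and $C_2$ split-equivalent to a single large colour code, in which the two parallel loops are homologous and differ by stabilisers supported on rainbow subgraphs, and these subgraphs survive in the joined code because joining only adds edges (equivalently, splitting only deletes them). Your bulk-deformation-plus-seam-crossing-ribbon narrative is a geometric gloss on this, and since you yourself concede that the ribbon's closure is certified by passing to the split-equivalent lattice, this half is essentially the paper's proof. For the negative half you genuinely diverge. The paper argues by decomposition: splitting back into the original pair partitions the support of any putative $Z_{\mathcal{M}}$ on $z$-maximal subgraphs into $z$-rainbow subgraphs that are stabilisers of the original disjoint codes, whose product cannot carry $\overline{Z}_{(c_i,d_j)}^1$ to $\overline{Z}_{(c_i,d_j)}^2$. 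You instead exhibit an explicit anticommuting witness: the conjugate membrane $\overline{X}_{(c_i,d_j)}^1$ confined to the $C_1$ side, which commutes with every $Z$ operator supported on a $z$-maximal subgraph by \cref{prop:even_intersection}, yet anticommutes with $\overline{Z}_{(c_i,d_j)}^1 \cdot \overline{Z}_{(c_i,d_j)}^2$ because it meets the first oddly and the second not at all. This is sound because a join preserves all flag-graph vertices and only adds edges, so the membrane's support remains inside $C_1$ (disjoint from $\overline{Z}_{(c_i,d_j)}^2$) and the rainbow property, being intrinsic to the subgraph rather than to the ambient graph, is unaffected by the new seam edges. Your witness argument buys two things the paper's does not: it requires no splittability at all for this direction (the paper's decomposition step does), and it reduces the obstruction to a single commutation parity rather than an appeal to logical independence in the disjoint union. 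Two small trims: the remark that $\overline{X}^1$ must be ``routed through the seam\ldots by splittability'' is unnecessary and slightly misleading, since the membrane never uses the seam-crossing edges; and if the membrane support is a disjoint union of rainbow subgraphs rather than a single connected one, you should apply \cref{prop:even_intersection} componentwise, as rainbow subgraphs are defined to be connected.
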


\begin{proof}
    Because this seam is splittable $C_1$ and $C_2$ are split-equivalent to a single large colour code, in which $\overline{Z}_{(c_i,d_j)}^1$ and $\overline{Z}_{(c_i,d_j)}^2$ must be equivalent up to composition with $Z$ operators supported on $z$-rainbow subgraphs as these are $Z$ stabilisers of this code and $\overline{Z}_{(c_i,d_j)}^1$ and $\overline{Z}_{(c_i,d_j)}^2$ are logically equivalent in this code. These same rainbow subgraphs must then also exist in the joined code because splitting can only delete edges and not create them. 

    To see that the same is not true for maximal subgraphs notice that splitting the joined codes back into the original pair also splits all maximal subgraphs into disjoint collections of rainbow subgraphs (which are supports of colour code stabilisers). Assume that there is an operator $Z_\mathcal{M}$ supported on maximal subgraphs that transforms $\overline{Z}_{(c_i,d_j)}^1$ into $\overline{Z}_{(c_i,d_j)}^2$. Splitting these maximal subgraphs into rainbow subgraphs then partitions this support into the supports of stabilisers of the original colour codes, but the product of these stabilisers with $\overline{Z}_{(c_i,d_j)}^1$ cannot be $\overline{Z}_{(c_i,d_j)}^2$ and so $Z_\mathcal{M}$ cannot exist.
\end{proof}

\begin{lemma}
    \label{lemma:normal_to_seam}
    Given two colour codes $C_1$ and $C_2$ joined at a splittable seam, logical $Z$ operators $\overline{Z}_{(c_i,d_j)}^1$ and $\overline{Z}_{(c_i,d_j)}^2$ with $d_j$ normal to the seam anticommute with $X$ operators supported on $x$-rainbow subgraphs at the seam, but not with $X$ operators supported on $x$-maximal subgraphs.
\end{lemma}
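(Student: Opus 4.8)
The plan is to mirror the structure of the proof of \cref{lemma:parallel_to_seam}, again exploiting the split-equivalence of the joined code $C_1 \cup C_2$ to a single large colour code $C_L$ guaranteed by splittability, together with the fact that splitting only deletes flag-graph edges (of the seam colour $c_0$ or $c_D$) while preserving the vertex set. Since commutation of two Pauli operators is determined entirely by the parity of the number of qubits (flag-graph vertices) on which they both act, in each case the task reduces to a parity count of the overlap between the support of $\overline{Z}^1_{(c_i,d_j)}$ and the vertices of the relevant subgraph at the seam. Throughout I write $\langle\cdot,\cdot\rangle$ for the mod-$2$ commutator, equal to $0$ for commuting and $1$ for anticommuting operators.

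First, the commutation with $x$-maximal subgraphs. Given an $x$-maximal subgraph $\mathcal{M}^x$ meeting the seam, I would perform the split to $C_L$. Removing the deleted seam-colour edges decomposes $\mathcal{M}^x$ into a disjoint union of $x$-maximal subgraphs of $C_L$ (the division operation defined earlier, in the spirit of \cref{lemma:mm_intersection}), and because $C_L$ is a genuine colour code lattice these are $x$-rainbow subgraphs supporting honest $X$-stabilisers of $C_L$. As the split preserves vertices, the vertex set of $\mathcal{M}^x$ is exactly the disjoint union of the vertex sets of these stabilisers. Now $\overline{Z}^1_{(c_i,d_j)}$ is a legitimate colour-code $Z$-string in $C_L$ (the same vertices lying along a path of $c_i$-coloured edges), so it commutes with every $X$-stabiliser of $C_L$ and hence shares an even number of vertices with each component. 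Summing, the overlap with $\mathcal{M}^x$ is even, so $\overline{Z}^1$ commutes with $X_{\mathcal{M}^x}$.

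The anticommutation with $x$-rainbow subgraphs is the substantive part, and the cleanest route I see is a duality argument. For the relevant case $x=D$, $z=2$, the canonical $\overline{X}$ partner of the normal string $\overline{Z}^1_{(c_i,d_j)}$ is a codimension-one membrane perpendicular to $d_j$, i.e.\ \emph{parallel} to the seam; call it $\overline{X}^1$. Applying the CSS-dual of \cref{lemma:parallel_to_seam} (swapping $X\leftrightarrow Z$ and $x\leftrightarrow z$) to the parallel partners $\overline{X}^1$ and $\overline{X}^2$ yields the operator identity $\overline{X}^1 = \overline{X}^2 \cdot \prod_k X_{\mathcal{R}^x_k}$, where the $\mathcal{R}^x_k$ are $x$-rainbow subgraphs at the seam. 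Taking the mod-$2$ commutator with $\overline{Z}^1$ and using $\langle \overline{Z}^1, \overline{X}^1\rangle = 1$ (canonical partners), it suffices to show $\langle \overline{Z}^1, \overline{X}^2\rangle = 0$; the sum $\sum_k \langle \overline{Z}^1, X_{\mathcal{R}^x_k}\rangle$ is then odd, so $\overline{Z}^1$ must anticommute with at least one $x$-rainbow seam operator.

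The main obstacle is establishing $\langle \overline{Z}^1, \overline{X}^2\rangle = 0$ and, more generally, pinning down the geometry of the crossing. Since $\overline{X}^2$ is a logical of $C_2$ while $\overline{Z}^1$ crosses the seam from the $C_1$ side, I would take the representative of $\overline{X}^2$ translated into the bulk of $C_2$ by $C_2$-stabilisers (which are unaffected by the join and disjoint from $\overline{Z}^1$), making the two supports disjoint so that the commutator vanishes; one must check that this translation is legitimate in the joined code and that the partner anticommutation $\langle \overline{Z}^1, \overline{X}^1\rangle = 1$ survives the join. Alternatively one can bypass the duality and argue directly that a normal string crosses a suitably localised $x$-rainbow subgraph at the seam in exactly one vertex, the asymmetry with the maximal case arising because such a rainbow subgraph uses only one of the doubled seam edges and is therefore precisely the kind of subgraph destroyed, rather than preserved, by the split to $C_L$.
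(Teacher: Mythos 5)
Your argument for the commutation half (the $x$-maximal subgraphs) contains a genuine error. You split the joined code into the \emph{single large} colour code $C_L$ and then assert that $\overline{Z}^1_{(c_i,d_j)}$ ``is a legitimate colour-code $Z$-string in $C_L$\,\ldots\ so it commutes with every $X$-stabiliser of $C_L$.'' This is false precisely when $d_j$ is normal to the seam, and its falsity is the engine of the \emph{other} half of the same lemma: in $C_L$ the $d_j$-cycle is the concatenation of the $d_j$-cycles of $C_1$ and $C_2$, so the support of $\overline{Z}^1$ covers only half of it and is an \emph{open} string in $C_L$ (only $\overline{Z}^1 \times \overline{Z}^2$ is a closed logical there). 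An open string anticommutes with the $X$ stabilisers of $C_L$ at its endpoints --- indeed the paper's proof of the anticommutation half consists exactly of producing such $C_L$-stabilisers, whose $x$-rainbow supports persist in the joined graph. So if your premise were true, the anticommutation half of the lemma would fail. The fix is the paper's move, which uses a \emph{different} split: split the joined code back into the original pair $C_1 \sqcup C_2$ (also available at a splittable seam, as in the proof of \cref{lemma:parallel_to_seam}). Under that split, $\mathcal{M}^x$ decomposes into $x$-rainbow subgraphs that are stabiliser supports of $C_1$ and of $C_2$, and $\overline{Z}^1$ --- being an honest closed logical of $C_1$ and disjoint from $C_2$ --- commutes with each component, hence with $X_{\mathcal{M}^x}$. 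The choice of split matters: component-wise commutation holds for the pair split but not for the $C_L$ split.

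Your anticommutation half is sound in outline and is a genuinely different route from the paper's. The paper argues directly: since $\overline{Z}^1$ is neither a logical nor a stabiliser of $C_L$, some $X$ stabiliser of $C_L$ --- supported on an $x$-rainbow subgraph, which survives the join because splitting only deletes edges --- anticommutes with it. Your duality argument via the canonical partner $\overline{X}^1$ reaches the same existence statement by a parity count, and your observation that $\langle \overline{Z}^1, \overline{X}^2\rangle = 0$ is immediate (the joining here preserves flag-graph vertices, so the canonical representative of $\overline{X}^2$ on $C_2$'s vertices is already disjoint from $\overline{Z}^1$; no translation argument is needed). Two caveats: the membrane identity you invoke is not literally the CSS-dual of \cref{lemma:parallel_to_seam} but its membrane analogue, which you should prove by the same split-to-$C_L$ homology argument; and the correction product $\prod_k X_{\mathcal{R}^x_k}$ cannot in general be taken to consist only of seam subgraphs --- deforming a membrane from the bulk of $C_1$ to the bulk of $C_2$ uses bulk rainbow subgraphs too. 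The conclusion survives because each bulk $\mathcal{R}^x_k$ is either a stabiliser support of $C_1$ (commutes with the $C_1$-logical $\overline{Z}^1$) or disjoint from it, so the odd parity must be carried by a seam subgraph; but this step is missing from your write-up and should be stated. On balance your route is heavier than the paper's one-line argument, which you could have obtained directly from the very fact that sinks your first half: $\overline{Z}^1$ is an open string in $C_L$.
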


\begin{proof}
    As before, we use the fact that $C_1$ and $C_2$ are split-equivalent to a single large colour code. As $\overline{Z}_{(c_i,d_j)}^1$ and $\overline{Z}_{(c_i,d_j)}^2$ are not logical operators of this code (but their product is) there must be $X$ stabilisers of this code (supported on $x$-rainbow subgraphs) that anticommute with these operators and then, once again, we know that these subgraphs must also exist in the joined code as splitting cannot create edges. 

    To see that these $Z$ logicals commute with all $X$ operators supported on $x$-maximal subgraphs note that, as in the previous proof, we can split any such subgraph into $x$-rainbow subgraphs of the original codes (which are supports of $X$ stabilisers of these codes). Any $Z$ logical of code $C_1$ or $C_2$ commutes with all $X$ stabilisers of codes $C_1$ and $C_2$ and so also commutes with products of these stabilisers, of which operators supported on $x$-maximal subgraphs must be a subset. 
\end{proof}


Each of the four classes of codes defined in \cref{tab:rainbow_classes} describes a different assignment of stabilisers to the subgraphs at the seam and in \cref{tab:classes_at_seam} we summarise how logical operators of the original codes are modified in each case (based on \cref{lemma:parallel_to_seam} and \cref{lemma:normal_to_seam}). The first three are fairly straightforward but it is interesting to discuss the mixed case in more detail, as the exact assignment of stabilisers for this case has not yet been explicitly defined. 

\begin{table}[]
    \centering
    \begin{tabular}{c|c|c}
        & $d_j$ parallel & $d_j$ normal \\
        \hline \\
        pin & $\overline{Z}_{(c_i,d_j)}^1 \neq \overline{Z}_{(c_i,d_j)}^2$ & $\overline{Z}_{(c_i,d_j)}^1$ and $\overline{Z}_{(c_i,d_j)}^2$ \\
        generic & $\overline{Z}_{(c_i,d_j)}^1 = \overline{Z}_{(c_i,d_j)}^2$ & $\overline{Z}_{(c_i,d_j)}^1$ and $\overline{Z}_{(c_i,d_j)}^2$ \\
        anti-generic & $\overline{Z}_{(c_i,d_j)}^1 \neq \overline{Z}_{(c_i,d_j)}^2$ & $\overline{Z}_{(c_i,d_j)}^1 \times \overline{Z}_{(c_i,d_j)}^2$ \\
        mixed & depends on $c_i$ & depends on $c_i$ \\
    \end{tabular}
    \caption{Modification of $Z$ logicals in a pair of joined colour codes for various stabiliser assignment strategies at the seam. Parallel logicals either remain distinct or become equivalent. Normal logicals either remain distinct or must be combined. In mixed codes we can have a mix of these behaviours depending on stabiliser and logical colours, as discussed in the main text.}
    \label{tab:classes_at_seam}
\end{table}

\begin{lemma}
    \label{lemma:mixed_assignment}
    The following is a valid assignment of stabilisers to subgraphs for a pair of colour code lattices joined at a splittable seam. For $X$ stabilisers
    \begin{itemize}
        \item $\{c_0,...,c_{D-1}\}$-rainbow subgraphs
        \item $\{c_1,...,c_D\}$-rainbow subgraphs
        \item $D$-maximal subgraphs for all other colourings 
    \end{itemize}
    and for $Z$ stabilisers 
    \begin{itemize}
        \item $\{c_0,c_D\}$-maximal subgraphs
        \item $2$-rainbow subgraphs for all other colourings
    \end{itemize}
\end{lemma}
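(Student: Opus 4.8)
The plan is to verify that the proposed assignment yields a valid CSS code by checking that every $X$-stabiliser support and every $Z$-stabiliser support share an even number of vertices (commutation within a single Pauli type is automatic). The essential structural input is that, because seams arise from gluings $g_i^l \Leftarrow g_j^l$ of the product graph, the only flag-graph edges ever created are of colour $c_0$ (when $l=1$) or $c_D$ (when $l=0$); consequently branching can occur only in the colours $c_0$ and $c_D$, while every interior colour $c_1,\dots,c_{D-1}$ still meets each vertex in exactly one edge, exactly as in an unjoined colour code lattice. I would record this observation first, since it is what makes the specific colour choices in the statement work. With $x=D$ and $z=2$ the $X$-supports are $D$-subgraphs and the $Z$-supports are $2$-subgraphs, so there are four kinds of pair to consider: (maximal, rainbow), (rainbow, maximal), (rainbow, rainbow) and (maximal, maximal).

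The two mixed-type pairs are immediate. An $X$-maximal ($D$-maximal) against a $Z$-rainbow ($2$-rainbow), and an $X$-rainbow ($D$-rainbow) against the $Z$-maximal $\{c_0,c_D\}$-subgraph, are both instances of the maximal/rainbow intersection covered by \cref{prop:even_intersection} (which uses $x+z\ge D+2$ from \cref{eq:x-z-constraints}), so they share an even number of vertices with no further work.

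For the (rainbow, rainbow) pairs I would prove a short pairing argument: if two rainbow subgraphs $\mathcal{R}_1$ and $\mathcal{R}_2$ share a common colour $c$ that is non-branching, then the edges of colour $c$ define a fixed-point-free involution on $\mathcal{R}_1\cap\mathcal{R}_2$. Indeed, each vertex of the intersection has a unique $c$-edge, which (since $c$ lies in both colour sets) belongs to both subgraphs and therefore connects it to another vertex of the intersection, so the intersection is even. The point is then that this hypothesis always holds here: every admissible $X$-rainbow colour set, $\{c_0,\dots,c_{D-1}\}$ or $\{c_1,\dots,c_D\}$, contains all interior colours $c_1,\dots,c_{D-1}$, and every admissible $Z$-rainbow colour set is a pair different from $\{c_0,c_D\}$ and hence must contain at least one interior colour. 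The shared interior colour is non-branching, so the pairing argument applies. I expect this to be the main obstacle, both because the paper has already warned that rainbow--rainbow intersections are otherwise unconstrained, and because it is here that the exact list of permitted colourings must be used; the crux is recognising that excluding precisely $\{c_0,c_D\}$ from the $Z$-rainbows guarantees a non-branching overlap colour.

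Finally, for the (maximal, maximal) pair, the only $Z$-maximal generator is the $\{c_0,c_D\}$-subgraph, and every $X$-maximal generator omits an interior colour and hence contains both $c_0$ and $c_D$; by \cref{lemma:mm_intersection} their intersection is a (possibly disconnected) $\{c_0,c_D\}$-maximal subgraph. To show it is even I would pass to the split picture: since the seam is splittable, ungluing the $c_0$ and $c_D$ edges turns the joined flag graph into colour code lattices while preserving all vertices, and in a colour code lattice a $\{c_0,c_D\}$-rainbow subgraph is a cycle whose edges alternate between the two colours and so has even length. Because a maximal subgraph is a full connected component, its vertex set is partitioned by these even $\{c_0,c_D\}$-rainbow cycles after splitting, hence it has even cardinality. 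Assembling the four cases shows that all generators pairwise commute, which is exactly the validity claim.
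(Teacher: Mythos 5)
Your proposal is correct and follows essentially the same route as the paper: the heart of both arguments is the observation that only $c_0$- and $c_D$-coloured edges are created by the gluing, so any two rainbow subgraphs sharing an interior colour $c_i \in \{c_1,\dots,c_{D-1}\}$ must share the unique $c_i$-edge at each common vertex (your fixed-point-free involution is exactly the paper's pairing step), and the admissible colourings are precisely those forcing such a shared interior colour, with the mixed maximal/rainbow pairs dispatched by \cref{prop:even_intersection}. The one place you go beyond the paper's text is the explicit treatment of the maximal--maximal pair via splitting the seam into colour code lattices and partitioning the $\{c_0,c_D\}$-maximal intersection into even alternating cycles; the paper leaves this case implicit, so your argument is a sound and welcome completion rather than a divergence.
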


\begin{proof}
Recall that all vertices in these joined lattices can be part of only one edge of each colour apart from $c_0$ and $c_D$, as these are the only colours of edge that can be created by the joining. This means that $D$-rainbow and $2$-rainbow subgraphs can only have odd intersection when their only shared colour is $c_0$ or $c_D$ (or both), as if they share a vertex $v$ and another colour of edge, $c_i$, then they must also share the unique $c_i$-coloured edge connected to $v$, resulting in an even intersection. We can then see that $2$-rainbow subgraphs with colouring $\{c_0,c_D\}$ can have odd intersection with all colours of $D$-rainbow subgraph, while $D$-rainbow subgraphs with colouring $\{c_0,...,c_{i-1},c_{i+1},...,c_D\}$ can have odd intersection with $2$-rainbow subgraphs coloured $\{c_0,c_i\}$ or $\{c_i,c_D\}$, as well as $\{c_0,c_D\}$. The choice of stabiliser assignment described above therefore results in a valid, commuting stabiliser group.
\end{proof}

\begin{lemma}
    \label{lemma:mixed_seam_logicals}
    The effect of the stabiliser assignment described in \cref{lemma:mixed_assignment} is that, for parallel $d_j$
    \begin{itemize}
        \item $\overline{Z}_{(c_i,d_j)}^1$ and $\overline{Z}_{(c_i,d_j)}^2$ are always logical operators.
        \item $\overline{Z}_{(c_i,d_j)}^1$ and $\overline{Z}_{(c_i,d_j)}^2$ are equivalent for $c_i \neq c_0,c_D$
        \item $\overline{Z}_{(c_i,d_j)}^1$ and $\overline{Z}_{(c_i,d_j)}^2$ are inequivalent for $c_i = c_0,c_D$
    \end{itemize}
    and for normal $d_j$
    \begin{itemize}
        \item $\overline{Z}_{(c_i,d_j)}^1$ and $\overline{Z}_{(c_i,d_j)}^2$ are inequivalent logical operators for $c_i \neq c_0, c_D$
        \item Only $\overline{Z}_{(c_i,d_j)}^1 \times \overline{Z}_{(c_i,d_j)}^2$ is a logical operator for $c_i = c_0, c_D$
    \end{itemize}
\end{lemma}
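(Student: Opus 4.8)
The plan is to leverage the split-equivalence of $C_1$ and $C_2$ to a single larger colour code, which I will call $C_L$, exactly as in the proofs of \cref{lemma:parallel_to_seam} and \cref{lemma:normal_to_seam}, and then to track colour by colour which of the subgraphs appearing in those two lemmas survive as stabilisers under the specific assignment of \cref{lemma:mixed_assignment}. Throughout I would use the fact established in the proof of \cref{lemma:mixed_assignment} that a seam vertex carries exactly one edge of each colour except $c_0$ and $c_D$, so that deforming an operator from the $C_1$ side to the $C_2$ side of the seam necessarily involves an edge of colour $c_0$ or $c_D$.

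For parallel $d_j$ I would first argue that $\overline{Z}_{(c_i,d_j)}^1$ and $\overline{Z}_{(c_i,d_j)}^2$ are always logical operators. Each is a $Z$-operator and so commutes with every $Z$-stabiliser; it remains only to check the $X$-stabilisers of the mixed code. Commutation with the $D$-maximal generators follows from the splitting argument used in \cref{lemma:normal_to_seam}, since such a generator splits into a disjoint union of $x$-rainbow subgraphs that are $X$-stabilisers of $C_1$ and $C_2$. Commutation with the two retained $D$-rainbow generators follows because a parallel operator is a genuine logical of $C_L$ (implicit in \cref{lemma:parallel_to_seam}) and these generators are $X$-stabilisers of $C_L$. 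For the equivalence question I would use that, by \cref{lemma:parallel_to_seam}, the two operators differ in $C_L$ by a cylinder of $2$-rainbow (equivalently $2$-maximal) stabilisers, and then identify the colourings the cylinder uses to cross the seam: deforming a colour-$c_i$ string across a seam edge requires a $2$-rainbow subgraph containing $c_i$ together with a seam colour, i.e. a $\{c_i,c_0\}$- or $\{c_i,c_D\}$-subgraph. For $c_i \neq c_0,c_D$ these are retained, giving $\overline{Z}_{(c_i,d_j)}^1 = \overline{Z}_{(c_i,d_j)}^2$, whereas for $c_i = c_0$ or $c_i = c_D$ the only available crossing face is the $\{c_0,c_D\}$-rainbow subgraph, which \cref{lemma:mixed_assignment} has replaced by its maximal counterpart, so the operators remain inequivalent.

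For normal $d_j$ I would appeal directly to \cref{lemma:normal_to_seam}: each $\overline{Z}_{(c_i,d_j)}^m$ commutes with all $x$-maximal subgraphs at the seam, so the only generators that can detect it are the two retained $D$-rainbow subgraphs, coloured $\{c_0,\ldots,c_{D-1}\}$ and $\{c_1,\ldots,c_D\}$. The key step is to show that a normal colour-$c_i$ string is detected by one of these two generators precisely when $c_i \in \{c_0,c_D\}$, which I would verify from the matching between a $Z$-string's colour and the colouring of the $X$-generator its endpoints violate (the retained generators being exactly those missing $c_0$ and missing $c_D$). Hence for $c_i \neq c_0,c_D$ both $\overline{Z}_{(c_i,d_j)}^1$ and $\overline{Z}_{(c_i,d_j)}^2$ commute with every mixed $X$-stabiliser and survive as independent logicals, their product being the nontrivial loop logical of $C_L$, while for $c_i = c_0,c_D$ each operator individually anticommutes with a retained rainbow generator and only the product $\overline{Z}_{(c_i,d_j)}^1 \times \overline{Z}_{(c_i,d_j)}^2$, whose two endpoint violations cancel, is a logical operator.

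The main obstacle is the colour-matching claim used in both cases: that the faces a deformation cylinder uses to cross the seam, and the rainbow generators a normal string's endpoints violate, are precisely those whose colourings pair $c_i$ with a seam colour. Establishing this cleanly requires translating the abstract $(c_i,d_j)$ labelling of the colour-code logicals into concrete statements about supports on the flag graph and computing intersection parities with the seam subgraphs, for which \cref{prop:even_intersection} is the natural tool: a colour-$c_i$ string is locally a $2$-rainbow subgraph, so it can have odd overlap with a seam generator only when their shared colours force an unpaired $c_0$ or $c_D$ edge, and this happens exactly for the colourings identified above. Once this combinatorial rule is in place, the remaining steps are routine bookkeeping with \cref{lemma:parallel_to_seam}, \cref{lemma:normal_to_seam} and the stabiliser list of \cref{lemma:mixed_assignment}.
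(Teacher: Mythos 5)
Your proposal is correct and takes essentially the same approach as the paper's proof: both arguments run through \cref{lemma:parallel_to_seam} and \cref{lemma:normal_to_seam} combined with the standard colour-code rules that deforming a $c_i$-coloured string requires $\{c_i,c_j\}$-rainbow $Z$ stabilisers and that such strings anticommute only with $X$ stabilisers supported on subgraphs missing colour $c_i$, checked colour by colour against the stabiliser list of \cref{lemma:mixed_assignment}. Your additional explicit verification that the parallel operators remain logical at all, and your appeal to \cref{prop:even_intersection} for the colour-matching rule, merely spell out steps the paper leaves implicit.
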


\begin{proof}
    For the case of parallel $d_j$, recall that in a colour code, in order to deform a $Z$ logical supported on $c_i$ edges across a 2D subregion of the code we must take the product of this logical with all $Z$ stabilisers supported on $\{c_i,c_j\}$-rainbow subgraphs within this region. Recalling the proof of \cref{lemma:parallel_to_seam} we can then see that if $\{c_0,c_D\}$-rainbow subgraphs of the joined lattice are not supports of $Z$ stabilisers, $c_0$ and $c_D$-coloured logicals parallel to the seam cannot be equivalent. 
    
    For the case of normal $d_j$ recall that $c_i$-coloured $Z$ string operators in a colour code anticommute with $X$ stabilisers only when these stabiliers are supported on subgraphs not containing edges of colour $c_i$. By \cref{lemma:normal_to_seam} we then have that only $\overline{Z}_{(c_i,d_j)}^m$ with $c_i = c_0$ or $c_D$ anticommute with $X$ stabilisers at the seam. $\overline{Z}_{c_i,d_j}^m$ for all other $c_i$ are therefore logical operators in the joined code, whereas for $c_i=c_0$ or $c_D$ only products $\overline{Z}_{(c_i,d_j)}^1 \times \overline{Z}_{(c_i,d_j)}^2$ are logical operators. 
\end{proof}

\begin{figure}
    \centering
    \includegraphics[width=.5\textwidth]{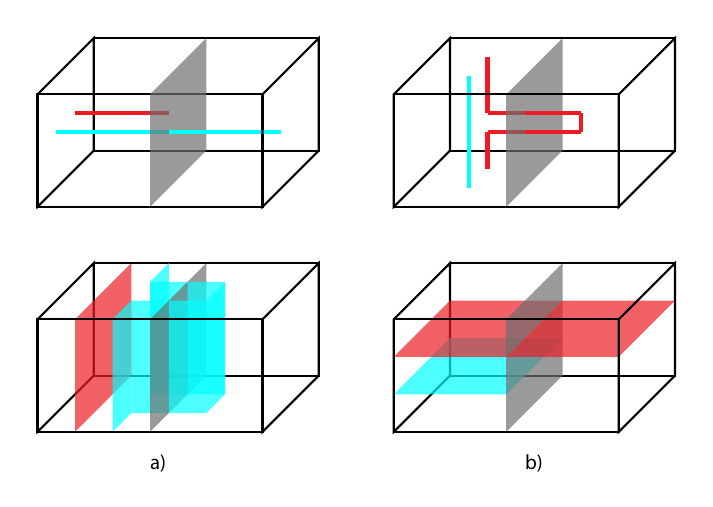}
    \caption{Interactions of logical operators (red and blue) with a splittable seam (grey) in a pair of joined 3D colour codes. 
    a) \textit{(above)} Stringlike $Z$ logicals normal to the seam. The red logical ($\overline{Z}_{(r,d_j)}^1$) can commute with all stabilisers while being supported only on one side of the seam while the blue logical ($\overline{Z}_{(b,d_j)}^1 \times \overline{Z}_{(b,d_j)}^2$) must be supported on both sides. 
    a) \textit{(below)} Membranelike $X$ logicals anticommuting with the $Z$ logicals above. The blue logical can be freely deformed through the seam ($\overline{X}_{(b,d_j)}^1 = \overline{X}_{(b,d_j)}^2$) while the red logical cannot ($\overline{X}_{(r,d_j)}^1 \neq \overline{X}_{(r,d_j)}^2$). 
    b) \textit{(above)} Stringlike $Z$ logicals parallel to the seam. The red logical can be freely deformed through the seam ($\overline{Z}_{(r,d_j)}^1 = \overline{Z}_{(r,d_j)}^2$) while the blue one cannot ($\overline{Z}_{(b,d_j)}^1 \neq \overline{Z}_{(b,d_j)}^2$). 
    b) \textit{(below)} Membranelike $X$ logicals anticommuting with the $Z$ logicals above. The blue logical ($\overline{X}_{(b,d_j)}^1$) can commute with all stabilisers while being supported only on one side of the seam while the red logical ($\overline{X}_{(r,d_j)}^1 \times \overline{X}_{(r,d_j)}^2$) must be supported on both sides.}
    \label{fig:logical_transformations_seam}
\end{figure}

Throughout this discussion we have only considered the $Z$ logical operators, but the transformations of $X$ logical operators can be straightforwardly inferred from the fact that the commutation relations between $X$ and $Z$ logical operator pairs must be preserved. For instance, consider logical $Z$ operators $\overline{Z}_{(c_i,d_j)}^1$ and $\overline{Z}_{(c_i,d_j)}^2$ and corresponding logical $X$ operators $\overline{X}_{(c_i,d_j)}^1$ and $\overline{X}_{(c_i,d_j)}^2$ such that
\begin{equation}
    [\overline{Z}_{(c_i,d_j)}^m,\overline{X}_{(c_i,d_j)}^n] = 
    \begin{cases}
        -1 \textrm{ for } m = n \\
        1 \textrm{ for } m \neq n
    \end{cases}
\end{equation}
where $X$ operators are supported on $(D-1)$-dimensional hypermembranes containing all colours of edge except $c_i$ and normal to direction $d_j$. If after a join we have $\overline{Z}_{(c_i,d_j)}^1 = \overline{Z}_{(c_i,d_j)}^2$ then $\overline{X}_{(c_i,d_j)}^1$ and $\overline{X}_{(c_i,d_j)}^2$ do not individually have consistent commutation relations with this operator and so only $\overline{X}_{(c_i,d_j)}^1 \times \overline{X}_{(c_i,d_j)}^2$ is a logical $X$ operator. Examples of all possible logical operator transformations at seam are shown in \cref{fig:logical_transformations_seam}.

All results in this section generalise straightforwardly to the case of more than two colour codes joined at a splittable seam. In this case \cref{lemma:parallel_to_seam} and \cref{lemma:normal_to_seam} hold for all pairs of colour codes meeting at this seam, and so e.g. for parallel $Z$ logicals in generic codes we have $\overline{Z}_{(c_i,d_j)}^1 = \overline{Z}_{(c_i,d_j)}^2 = \overline{Z}_{(c_i,d_j)}^3 =...$, and for normal $Z$ logicals in anti-generic codes $\overline{Z}_{(c_i,d_j)}^1, \overline{Z}_{(c_i,d_j)}^2, \overline{Z}_{(c_i,d_j)}^3, ...$ all anticommute with $X$ stabilisers so that only $\overline{Z}_{(c_i,d_j)}^1 \times \overline{Z}_{(c_i,d_j)}^2 \times \overline{Z}_{(c_i,d_j)}^3 \times ...$ is a logical operator.

We also want to consider cases such as \cref{fig:glued_css_2} b), which can be interpreted as a joining of either a pair or a triple of colour codes. While we know \textit{how} colour code logicals transform at the seams in this case there is an ambiguity about \textit{which} colour code logicals we should consider (logicals of the pair of codes or of the triple). If we label the two options for cycle graphs used in the product as 

\begin{center}
    \begin{tikzpicture}[scale=0.4]
        \filldraw (0,0) circle (3pt);
        \filldraw (1,0) circle (3pt);
        \filldraw (2,0) circle (3pt);
        \filldraw (0.4,0.75) circle (3pt);
        \filldraw (2.4,0.75) circle (3pt);
        \filldraw (0.8,1.5) circle (3pt);
        \filldraw (1.8,1.5) circle (3pt);
        \filldraw (2.8,1.5) circle (3pt);
        \draw (0,0) -- (2,0) -- (2.8,1.5) -- (0.8,1.5) -- cycle;
        \node at (1.4,0.7) {$\mathcal{O}_1$};

        \filldraw (3,0) circle (3pt);
        \filldraw (4,0) circle (3pt);
        \filldraw (5,0) circle (3pt);
        \filldraw (3.4,0.75) circle (3pt);
        \filldraw (5.4,0.75) circle (3pt);
        \filldraw (3.8,1.5) circle (3pt);
        \filldraw (4.8,1.5) circle (3pt);
        \filldraw (5.8,1.5) circle (3pt);
        \draw (3,0) -- (5,0) -- (5.8,1.5) -- (3.8,1.5) -- cycle;
        \node at (4.4,0.7) {$\mathcal{O}_3$};

        \filldraw (2.5,0.25) circle (3pt);
        \filldraw (2.9,1) circle (3pt);
        \filldraw (3.3,1.75) circle (3pt);
        \filldraw (2.5,1.25) circle (3pt);
        \filldraw (3.3,2.75) circle (3pt);
        \filldraw (2.5,2.25) circle (3pt);
        \filldraw (2.9,3) circle (3pt);
        \filldraw (3.3,3.75) circle (3pt);
        \draw (2.5,0.25) -- (3.3,1.75) -- (3.3,3.75) -- (2.5,2.25) -- cycle;
        \node at (2,3) {$\mathcal{O}_2$};

        \node at (7.4,1) {and};

        \filldraw (9,0) circle (3pt);
        \filldraw (10,0) circle (3pt);
        \filldraw (11,0) circle (3pt);
        \filldraw (9.4,0.75) circle (3pt);
        \filldraw (9.8,1.5) circle (3pt);
        \filldraw (10.8,1.5) circle (3pt);
        \filldraw (11.8,1.5) circle (3pt);
        \filldraw (11,1) circle (3pt);
        \filldraw (11,2) circle (3pt);
        \filldraw (11.4,2.75) circle (3pt);
        \filldraw (11.8,3.5) circle (3pt);
        \filldraw (11.8,2.5) circle (3pt);
        \draw (9,0) -- (11,0) -- (11,2) -- (11.8,3.5) -- (11.8,1.5) -- (9.8,1.5) -- cycle;
        \node at (9.4,2.5) {$\mathcal{O}_1 \times \mathcal{O}_2$};

        \filldraw (12,0) circle (3pt);
        \filldraw (13,0) circle (3pt);
        \filldraw (14,0) circle (3pt);
        \filldraw (12.4,0.75) circle (3pt);
        \filldraw (14.4,0.75) circle (3pt);
        \filldraw (12.8,1.5) circle (3pt);
        \filldraw (13.8,1.5) circle (3pt);
        \filldraw (14.8,1.5) circle (3pt);
        \draw (12,0) -- (14,0) -- (14.8,1.5) -- (12.8,1.5) -- cycle;
        \node at (13.4,0.7) {$\mathcal{O}_3$};
    \end{tikzpicture}
\end{center}

\noindent (where $\times$ is a composition of cycles by taking the symmetric difference of edges) then we can label the corresponding sets of colour code lattices as $\{C_1,C_2,C_3\}$ and $\{C_{1\times2}, C_3\}$. If we then consider a generic rainbow code defined on the joined lattices we can see that, in the former case, we have $\overline{Z}_{(c_i,d_j)}^1$, $\overline{Z}_{(c_i,d_j)}^2$ and $\overline{Z}_{(c_i,d_j)}^3$ all as independent logical operators for $d_j$ normal to both seams and for any $c_i$. On the other hand, in the latter case we have just $\overline{Z}_{(c_i,d_j)}^{1\times2}$ and $\overline{Z}_{(c_i,d_j)}^3$. This is therefore not a complete basis, and in fact $\overline{Z}_{(c_i,d_j)}^{1\times2} = \overline{Z}_{(c_i,d_j)}^1 \times \overline{Z}_{(c_i,d_j)}^2$ as can be checked using \cref{fig:glued_css_2} b). We therefore conclude that in order to identify a complete basis for colour code logical operators in such a code we must identify a complete cycle basis for the graphs used as input to the product and consider the colour code lattices obtained from these cycles. In contrast, the number of physical qubits in this code is equal to the number of physical qubits in $C_{1\times2} \cup C_3$ (compare \cref{fig:glued_css_2} b) to \cref{fig:glued_ccs} a)), which is fewer than the number of qubits we would have in $C_1 \cup C_2 \cup C_3$. These codes thus have the potential for improved encoding rates relative to disjoint collections of colour codes.

\subsection{Hypergraph product rainbow codes}

We now have all the tools we need to study rainbow codes obtained from the hypergraph product. Specifically, we will consider codes defined on the flag graph of a product graph
\begin{equation}
    \mathcal{G}_\square = \mathcal{G}_1 \square \mathcal{G}_2 \square ... \square \mathcal{G}_D := \scalebox{1.5}{$\square$}_{k=1}^D \mathcal{G}_k
\end{equation}
where $\mathcal{G}_k$ are all connected, bipartite and of even degree. Each $\mathcal{G}_k$ can be alternatively written as 
\begin{equation}
    \{ g_i \Leftarrow g_j ~ \forall ~ (g_i,g_j) \in P_k\} : \mathcal{G}_k'
\end{equation}
where $P_k$ is a set of vertex pairs and $\mathcal{G}_k'$ is a graph whose connected components are even-length cycle graphs in one-to-one correspondence with the elements of a fundamental cycle basis of $\mathcal{G}_k$. The size of such a basis, also called the ``circuit rank'' of $\mathcal{G}_k$, is given by
\begin{equation}
    \label{eq:cycle_basis_size}
    n_c^k = n_e^k - n_v^k + n_{cc}^k
\end{equation}
where $n_e^k$, $n_v^k$ and $n_{cc}^k$ are the numbers of edges, vertices and connected components in $\mathcal{G}_k$. In our case this reduces to
\begin{equation}
    n_c^k = n_e^k - n_v^k + 1
\end{equation}
because we only consider connected $\mathcal{G}_k$. We then use the commutativity of gluing and the Cartesian product to rewrite $\mathcal{G}_\square$ as 
\begin{equation}
    \begin{split}
    \mathcal{G}_\square 
    &= \scalebox{1.5}{$\square$}_{k=1}^D (\{g_i \Leftarrow g_j ~ \forall ~ (g_i,g_j) \in P_k\} : \mathcal{G}_k') \\
    &= \{g_i \Leftarrow g_j ~ \forall ~ (g_i,g_j) \in \scalebox{1.5}{$\cup$}_k^D P_k \} : (\scalebox{1.5}{$\square$}_{k=1}^D \mathcal{G}_k')
    \end{split}
\end{equation}
Each $\mathcal{G}_k'$ is a collection of disjoint cycles and so each connected component of $\scalebox{1.5}{$\square$}_k^D \mathcal{G}_k'$ is a $D$-dimensional hypercubic lattice on a $D$-dimensional torus and the flag graph of each of these components is a $D$-dimensional colour code lattice by \cref{lemma:cc_lattice}. If we label the elements of the cycle basis of each $\mathcal{G}_k'$ as $\mathcal{O}_a^k$ (for $1 \leq a \leq n_c^k$) then each colour code lattice is indexed by $(\mathcal{O}_{a_1}^1, \mathcal{O}_{a_2}^2,...,\mathcal{O}_{a_D}^D)$ and these can be interpreted as coordinates, giving a natural arrangement of these lattices in a $D$-dimensional grid e.g. \cref{fig:cc_grid}. Each of these lattices defines a ($x=D$ and $z=2$) colour code with logical $Z$ operators which we can index as $(\mathcal{O}_{a_1}^1, \mathcal{O}_{a_2}^2, ..., \mathcal{O}_{a_D}^D; c_i, d_j)$ where $c_i$ and $d_j$ are colour and direction as before, with $d_k$ being the direction in the grid associated with the cycles of $\mathcal{G}_k'$. For a given code and fixed $d_j$ only $D$ logicals of each colour are independent, so we have $D^2$ logical qubits per code.

\begin{figure}
    \centering
    \includegraphics[width=\linewidth]{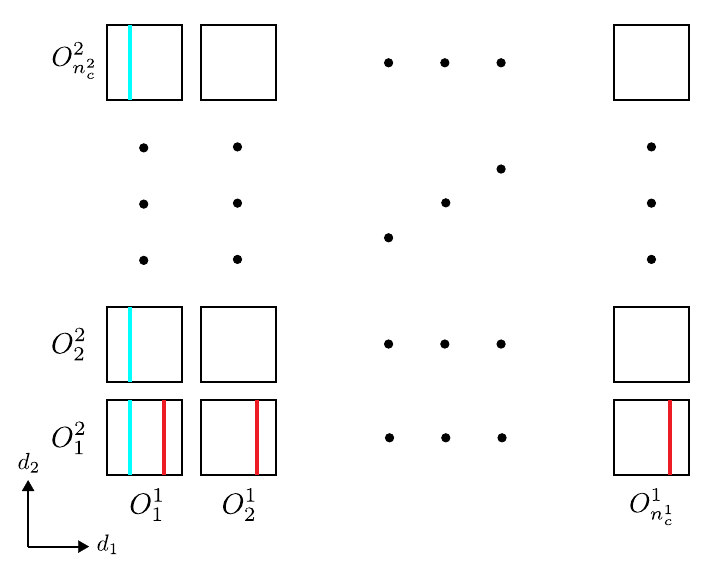}
    \caption{A 2D grid of colour codes obtained by defining a rainbow code on the flag graph of a Cartesian product of $n_c^1$ and $n_c^2$ disjoint cycles. The red lines show stringlike logicals of individual codes which can become associated by the joining of these codes while the blue line shows logicals which can be merged into a single logical by this joining.}
    \label{fig:cc_grid}
\end{figure}

The gluings $\{g_i \Leftarrow g_j\ ~ \forall ~ (g_i,g_j) \in P_k\}$ join all the disjoint cycles of $\mathcal{G}_k'$ into a single connected component and so join together all colour code lattices lying along each $k$-directional line of the grid (e.g. $\{g_i \Leftarrow g_j\ ~ \forall ~ (g_i,g_j) \in P_1\}$ join the lattices of each row of \cref{fig:cc_grid} while $\{g_i \Leftarrow g_j\ ~ \forall ~ (g_i,g_j) \in P_2\}$ join the lattices of each column). Because the $\mathcal{G}_k$ contain only even-degree vertices the resulting seams are all splittable seams by \cref{lemma:splittable_seams}, and we know how colour code logical operators transform at each of these seams by \cref{lemma:parallel_to_seam} and \cref{lemma:normal_to_seam}. This allows us to understand the resulting structure of these operators at the level of the grid. Explicitly, for each of our previously defined classes of codes, we have the following:

\textbf{Pin:} In pin codes all the logical operators of the original colour codes remain distinct after the joining, and in addition we have one logical operator for each independent $x$- or $z$-rainbow subgraph at each seam. These subgraphs are local features of the flag graph, explaining the linear rate and constant distance observed in numerical studies of hypergraph product pin codes. 

\textbf{Generic:} In these codes, for any choice of cycle $\mathcal{O}_{a_k}^k$, all logical $Z$ operators of the form $(\mathcal{O}_{a_1}^1,...,\mathcal{O}_{a_k}^k,...,\mathcal{O}_{a_D}^D; c_i, d_k)$ for fixed $c_i$ are equivalent, essentially giving one stringlike logical of each colour associated to each $\mathcal{O}_{a_k}^k$ (red lines in \cref{fig:cc_grid}). Recalling that in each colour code only $D$ colours of logical are independent for fixed $d_j$ this gives a number of independent logical operators
\begin{equation}
    n_L = \sum_i D n_c^i.
\end{equation}
The weights of these logicals are the same as the weights of the original colour code logicals, and so the distances of these codes are linear in the girth (minimum cycle length) of the input graphs $\mathcal{G}_k$.

\textbf{Anti-generic:} In these codes, for any choice of cycles $\mathcal{O}_{a_1}^1,...,\mathcal{O}_{a_{k-1}}^{k-1},\mathcal{O}_{a_{k+1}}^{k_1},...,\mathcal{O}_{a_D}^D$, all logical $Z$ operators of the form $(\mathcal{O}_{a_1}^1,...,\mathcal{O}_{a_k}^k,...,\mathcal{O}_{a_D}^D; c_i, d_k)$ are merged into a single logical operator, giving one logical of each colour for each $k$-directional line of the grid (blue lines in \cref{fig:cc_grid}). The number of independent logicals in this case is then
\begin{equation}
    n_L = \sum_i \prod_{j \neq i} Dn_c^j 
\end{equation}
and the weights of these logicals are the sums of the weights of all their constituent colour code logicals. The distances of these codes are then linear in both the girth and size of cycle basis of the input graphs. 

\textbf{Mixed:} As might be expected, in these codes we see a mix of the behaviour of the previous two cases. Specifically, we see the same behaviour as in the generic codes for all colours of logical except $c_0$ and $c_D$. As the $c_D$-coloured logicals are not independent we have a number of encoded qubits
\begin{equation}
    \label{eq:mixed_k}
    n_L = \sum_i \big((D-1)n_c^i + \prod_{j \neq i} n_c^j\big)
\end{equation}
and distance linear in the girth of the input graphs. However, we also note that for fixed input graphs the distance of a mixed code defined on the resulting flag graph will be twice the distance of the generic code defined on this same graph (as long as all $\mathcal{G}_k'$ contain more than one connected component). This is due to the fact that the $c_0$- and $c_D$-coloured logical $Z$ operators of a colour code have half the weight of the logicals of all other colours, but in the mixed code the $c_0$/$c_D$ logicals must be extended across multiple component colour codes and so the new lowest-weight logicals will be those of other colours.

All of the families of codes discussed above are LDPC. This follows from the fact that the weights of the stabiliser generators are upper-bounded by the size of the largest $D$-maximal subgraph, which, in the language of pin codes, is the size of the largest $1$-pinned set. It was shown in~\cite{vuillot2022quantum} that pin codes on simplicial complexes obtained from the hypergraph product of classical LDPC codes are themselves LDPC, and so this is also true of rainbow codes defined on the same complexes.

\section{Logical gates of HGP rainbow codes}
\label{section:logical_gates}
Now that we understand the structure and parameters of various classes of rainbow codes obtained from the hypergraph product we can examine the logical operations available in these codes. In particular, we are interested in logical non-Clifford gates that can be implemented by transversal application of $T/T^\dag$. In order for a code to possess such a gate we require the following properties, which are essentially a rephrasing of the triorthogonality conditions of~\cite{bravyi_magic-state_2012} and which have been equivalently presented in a number of other sources e.g.~\cite{bombin_transversal_2018,rengaswamy_optimality_2020,scruby_non-pauli_2022,webster_xp_2022}, but may be unfamiliar to some readers in this form.

\begin{lemma}
    \label{lemma:triorthogonal_conditions}
    The following are necessary and sufficient conditions for a quantum CSS code to possess a transversal non-Clifford gate implemented by application of physical $T$/$T^\dag$ to a bipartition of the qubits. 
    \begin{enumerate}
        \item The non-trivial intersection of any pair of $X$ stabilisers is the support of a $Z$ stabiliser.
        \item The non-trivial intersection of an $X$ stabiliser and an $X$ logical is the support of a $Z$ stabiliser.
        \item The intersection of at least one pair of $X$ logicals is the support of a $Z$ logical, and is either a $Z$ logical or stabiliser for all other non-trivially intersecting pairs. 
        \item For all $X$ stabilisers, the number of $T$ and $T^\dag$ applied to qubits in the support of this stabiliser are equal mod $8$.
        \item For all pairs of $X$ operators (i.e logicals and/or stabilisers) whose intersection is a $Z$ stabiliser, the number of $T$ and $T^\dag$ applied to qubits in this intersection are equal mod $4$.
    \end{enumerate}
\end{lemma}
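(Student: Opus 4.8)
The plan is to compute the action of the transversal gate directly on CSS codewords and read off the five conditions by organising the result as a mod-$8$ phase polynomial. Label the bipartition by signs $t_i=+1$ (qubits receiving $T$) and $t_i=-1$ (qubits receiving $T^\dagger$), set $\omega=e^{i\pi/4}$, and define the signed weight $f(y)=\sum_i t_i y_i$ of a binary string $y$, so that a support $R$ has $f(R)\equiv 0 \pmod 8$ exactly when the numbers of $T$ and $T^\dagger$ on $R$ agree mod $8$, and likewise mod $4$. The transversal gate acts as $\ket{y}\mapsto\omega^{f(y)}\ket{y}$, so a logical $Z$-basis state $\ket{\overline v}\propto\sum_h\ket{x_v\oplus h}$ (sum over the $X$-stabiliser group, $x_v$ a representative built from logical $X$ operators) is sent to $\sum_h\omega^{f(x_v\oplus h)}\ket{x_v\oplus h}$. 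Because the gate is diagonal and distinct logical cosets are disjoint, it preserves the code space and implements a (necessarily diagonal) logical gate if and only if $f$ is constant modulo $8$ on each $X$-stabiliser coset, in which case the logical phase is $\phi(v)=\tfrac{\pi}{4}f(x_v)$. I must therefore show that coset-constancy is equivalent to conditions 1, 2, 4, 5 plus the structural part of 3, and that the resulting $\phi$ is non-Clifford exactly when the term guaranteed by condition 3 survives.

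The engine is the identity $f(a\oplus b)=f(a)+f(b)-2f(a\cap b)$, with $\cap$ the bitwise AND (overlap of supports), and its inclusion–exclusion generalisation $f\big(\bigoplus_\alpha u_\alpha\big)=\sum_k(-2)^{k-1}\sum_{|I|=k}f\big(\bigcap_{\alpha\in I}u_\alpha\big)$. I would apply this with the $u_\alpha$ ranging over the chosen logical $X$ representatives $\ell_j$ (those with $v_j=1$) and the $X$-stabiliser generators $g_\alpha$ appearing in $h$. Coset-constancy says every term involving at least one generator must vanish modulo $8$, while the all-logical terms assemble into $\phi(v)$. Sorting by order $k$: the single-generator terms ($k=1$, coefficient $+1$) vanish mod $8$ precisely by condition 4; the pairwise terms ($k=2$, coefficient $-2$) require each relevant overlap to have signed weight $\equiv 0\pmod 4$, which is exactly condition 5, and this is only meaningful once conditions 1 and 2 guarantee that stabiliser–stabiliser and stabiliser–logical overlaps are supports of $Z$ stabilisers.

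The two delicate points, which I expect to be the main obstacle, are the triple overlaps and the vanishing of higher orders. For $k\ge 4$ the coefficient $(-2)^{k-1}$ is divisible by $8$, so those terms are harmless; the crux is $k=3$, where the coefficient is $+4$ and I have only two spare powers of two, so I must show every triple overlap containing a generator has \emph{even} signed weight. Here I would write the triple as $(A\cap B)\cap C$ with $C$ chosen to be an $X$ stabiliser: conditions 1, 2 and 3 ensure $A\cap B$ is the support of a $Z$-type operator, and an $X$ stabiliser commutes with every element of the normaliser, so its overlap with any $Z$-type support is even, giving $f\equiv 0\pmod 2$ and hence $4f\equiv 0\pmod 8$. (Since $t_i=\pm1$, signed and Hamming weights agree mod $2$, so CSS commutation is exactly what is needed.) This is why condition 3 must demand that \emph{all} non-trivially intersecting pairs of $X$ logicals meet in a $Z$ logical or stabiliser, not merely one pair: it licenses the triple-overlap argument for terms mixing logicals and stabilisers.

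Finally, for the logical content I would read $\phi(v)=\tfrac{\pi}{4}f\big(\bigoplus_{j:v_j=1}\ell_j\big)$ through the same expansion, obtaining a phase polynomial in $v$ whose linear, quadratic and cubic contributions are $\tfrac{\pi}{4}f(\ell_j)$, $-\tfrac{\pi}{2}f(\ell_j\cap\ell_{j'})$ and $\pi\,f(\ell_j\cap\ell_{j'}\cap\ell_{j''})$. The gate is non-Clifford exactly when one of these survives with a coefficient incommensurate with the Clifford group, and the existence clause of condition 3 — a pair of $X$ logicals overlapping in the support of a $Z$ logical — is precisely what forces such a surviving term, while its remaining clauses ensure every other pair contributes only a trivial or Clifford phase. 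For necessity I would run each step backwards on the coset of the trivial logical: a generator violating condition 4, or a $Z$-stabiliser overlap violating condition 5, splits a coset mod $8$; if a structural condition 1 or 2 fails then condition 5 no longer applies to the offending overlap and the pairwise term can no longer be killed, so no assignment satisfying 4, 5 preserves the code; and if the existence clause of 3 fails the logical polynomial retains no non-Clifford term. The secondary difficulty, after the triple-overlap bookkeeping, is verifying that these violating configurations are genuine rather than cancelled by other terms of the same order.
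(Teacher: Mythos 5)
Your proposal is correct in substance but takes a genuinely different route from the paper's proof. The paper works at the operator level in the XP formalism: it writes the transversal gate as $W = T(2\mathbf{a}-1)$, evaluates the group commutators $[[ X(\mathbf{x}_i), W ]]$ using tabulated identities, and reduces ``$W$ is a diagonal logical operator'' to membership conditions ($\mathbf{x}_i - 2\mathbf{a}\mathbf{x}_i \in \braket{M_Z}$ and $\mathbf{x}_i\mathbf{x}_j \in \braket{S_Z}$) by invoking Proposition B.3 of \cite{CSSLO} and Proposition E.14 of \cite{XP}, the latter being what licenses checking only rows and sums of pairs of rows of the stacked matrix of $S_X$ and $L_X$. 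You instead work in the state picture: coset constancy of a mod-$8$ phase polynomial, expanded through the inclusion--exclusion identity for XOR, with the $k=1,2$ terms yielding conditions 4 and 5, the $k=3$ terms killed by your parity argument (a triple overlap containing an $X$ stabiliser is the intersection of that stabiliser with a $Z$-type support, hence even by CSS commutation), and $k\ge 4$ killed by $8 \mid (-2)^{k-1}$. Your $k=3$ step is precisely the content the paper outsources to Proposition E.14, so your argument is self-contained and elementary where the paper's is shorter but reliant on external machinery. Your two flagged loose ends do close cleanly, though your phrasing of the first is slightly off: for necessity of the structural conditions 1--2, it is not that ``condition 5 no longer applies'' but that mod-$8$ coset constancy itself forces them --- polarising $f(x\cap h)\equiv 0 \pmod 4$ over sums of codewords gives $|x\cap x'\cap h|$ even for all codewords $x,x'$, i.e.\ $x'\cap h$ has even overlap with the entire span of $S_X$ and $L_X$, which for a CSS code is exactly membership in $\braket{S_Z}$; this polarisation also disposes of your worry about cancellations, since each condition is isolated by an explicit choice of coset and logical representative. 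For the non-Clifford clause, your reading (a $Z$-logical pairwise intersection forces either an odd triple overlap, hence a surviving $\pi$ cubic term, or a $\pm\pi/2$ quadratic term) is at the same level of brevity as the paper's own ``by a similar argument,'' so nothing essential is lost there.
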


\begin{proof}
Let  $\mathbf{a}$ be a length $n$ binary vector representing the bipartition of the qubits.
Let $W:=T(\mathbf{a})T^\dag(\mathbf{a}) = T(2\mathbf{a} -1)$ where $T(\mathbf{a}) := \prod_{0 \le i <n}T_i^{\mathbf{a}[i]}$.
Consider the action of $W$ on the CSS code with X checks $\{X(\mathbf{x}):\mathbf{x} \in S_X\}$, X-logicals $L_X$, Z-stabiliser generators $S_Z$, and Z-logicals $L_Z$. Let $M_Z$ be a generating set of logical identities modulo 4 of Section 6.2 of \cite{XP} so that $S(\mathbf{z})$ is a logical identity for all $\mathbf{z} \in \braket{M_Z}$ and let $\omega$ be a 16th root of unity such that $\omega^{16} = 1$.

Due to Proposition B.3 of \cite{CSSLO},  $W$ is a diagonal logical operator if and only if the group commutator $[ [ X(\mathbf{x_i}),W]]$ is a logical identity for all rows $ \mathbf{x_i} $ of $ {S_X}$. 
Calculating the group commutator using the identity in Table 4 of \cite{XP}:
\begin{align}[ [ X(\mathbf{x}_i), T(2\mathbf{a} - 1)] ] &= \omega^{2(2|\mathbf{a x}_i| - |\mathbf{x}_i|)}S(-2\mathbf{a x}_i + \mathbf{x}_i).
\end{align}
Hence, we require that both:
\begin{align}
2|\mathbf{ax}_i| &= |\mathbf{x}_i| \mod 8\text{; and}\\\mathbf{x}_i-2\mathbf{a x}_i &\in \braket{M_Z}.\end{align}
The first condition is equivalent to \textit{4} in the Lemma. 
Turning to the second condition, 
$\mathbf{x}-2\mathbf{a x}_i \in \braket{M_Z}$ if and only if the group commutator $[ [ X(\mathbf{x}_j),S(\mathbf{x}_i-2\mathbf{a x}_i)]]$ is a logical identity for all $ \mathbf{x}_j \in \braket{S_X,L_X}$.
Due to Proposition E.14 of \cite{XP}, it is sufficient to consider only $\mathbf{x}_j$ which are rows and sums of pairs of rows from the  matrix $\begin{pmatrix}S_X\\L_X\end{pmatrix}$.
Calculating the group commutator:
\begin{align}[ [ X(\mathbf{x}_j), S(\mathbf{x}_i-2\mathbf{a x}_i)] ] &= \omega^{4(|\mathbf{x}_i\mathbf{x}_j|-2|\mathbf{x}_i\mathbf{x}_j\mathbf{a}|)}Z(2\mathbf{x}_i\mathbf{x}_j\mathbf{a} -\mathbf{x}_i\mathbf{x}_j)\\&=\omega^{4(|\mathbf{x}_i\mathbf{x}_j|-2|\mathbf{x}_i\mathbf{x}_j\mathbf{a}|)}Z(\mathbf{x}_i\mathbf{x}_j).
\end{align}
Hence we require that both:
\begin{align}
2|\mathbf{x}_i\mathbf{x}_j\mathbf{a}| &= |\mathbf{x}_i\mathbf{x}_j|  \mod 4\text{; and}\\\mathbf{x}_i\mathbf{x}_j &\in \braket{S_Z}. \end{align}
The first condition is equivalent to \textit{5} and the second condition is equivalent to \textit{1} and \textit{2} in the Lemma.
By a similar argument, $W$ is a logical identity iff $\mathbf{x}_i\mathbf{x}_j \in \braket{S_Z}, \forall \mathbf{x}_i,\mathbf{x}_j \in \braket{L_X}$ which is equivalent to \textit{3}.
\end{proof}

It is well known that these requirements are satisfied by the 3D colour code on a 3-torus, with the specific logical action being $CCZ$ between triples of logical qubits whose colour and direction are all distinct, i.e.
\begin{align*}
    &\{(c_0,d_0), (c_1,d_1), (c_2,d_2)\}, \\
    &\{(c_0,d_1), (c_1,d_2), (c_2,d_0)\} \\
    &\mathrm{and}\\
    &\{(c_0,d_2), (c_1,d_0), (c_2,d_1)\}
\end{align*}
More generally, we can see that \textit{4} and \textit{5} are can be satisfied by any rainbow code obtained from a joining of colour code lattices where all seams are splittable seams. This is because any $3$-rainbow subgraph at such a seam can be viewed as the support of an $X$ stabiliser generator in one of the joined colour codes, and any $3$-maximal subgraph can be viewed as the support of a product of these generators. As \textit{4} and \textit{5} are satisfied for the $X$ stabilisers they must be satisfied in this case also. For \textit{1-3} we need to consider our four cases separately. We will focus specifically on the case of $D=3$, but as with the colour code the generalisation to higher dimensions is straightforward.

\textbf{Pin:} By \cref{lemma:mm_intersection} the intersections of $X$ stabilisers are supports of $Z$ stabilisers, satisfying \textit{1}. However, because $x$-rainbow subgraphs support $X$ logical operators, by \cref{lemma:mr_intersection} we have that the intersections of $X$ logicals and $X$ stabilisers can be $Z$ logicals and so \textit{2} is not generally satsified in these codes. 

\textbf{Generic:} By \cref{lemma:mm_intersection} the intersections of $X$ stabilisers are $2$-maximal subgraphs, but the supports of $Z$ stabilisers are $2$-rainbow subgraphs. Thus \textit{1} is only satisfied if we can always decompose these $2$-maximal subgraphs into $2$-rainbow subgraphs by deleting some of the edges. Fortunately this is true in our case because all seams are splittable seams and by deleting all edges created by the gluing we recover a disjoint collection of colour code lattices. This same property also allows us to split $3$-maximal subgraphs into disjoint collections of $3$-rainbow subgraphs which are $X$ stabilisers of the component colour codes. The logical operators of the rainbow code are also just logical operators of the component colour codes and, since we know \textit{2} and \textit{3} are satisfied in these colour codes, they must also be satisfied in the rainbow code. 

\textbf{Anti-generic:} \textit{1} is not satisfied in this case, as $X$ stabiliser intersections are $2$-rainbow subgraphs but $Z$ stabilisers are supported on $2$-maximal subgraphs. 

\textbf{Mixed:} \textit{1} is satisfied here by the same argument as for the generic codes, except that in some cases no decomposition into rainbow subgraphs is required. Notice that we do not run into the same issues as with the anti-generic codes here as $\{c_0,c_D\}$-rainbow subgraphs cannot be intersections of $X$ stabilisers. \textit{2} is also satisfied by the same argument as for the generic codes. For \textit{3}, the only possible issue would be if a pair of colour code $X$ logicals could intersect on the support of a $c_0$- or $c_D$-coloured $Z$ logical of a single colour code, as these are not logicals of the rainbow code (only products from multiple colour codes are). However, this would only be possible for a pair of $X$ logicals that share colour $c_0$ or $c_D$, and individual colour code $X$ logicals containing these colours are also not logicals of the rainbow code (only products from multiple colour codes are). We therefore conclude that \textit{3} is also satisfied in mixed codes.

\section{Examples}
\label{section:examples}
Mixed HGP rainbow codes have emerged from this discussion as the most interesting class, having better encoding rates than generic codes while still possessing transversal non-Clifford gates. We can now study some explicit examples of these codes in order to understand them more concretely.

\subsection{Figure-eight graphs}

For our first example we study the product $\mathcal{G}_\square = \mathcal{G}_8 \square \mathcal{G}_8 \square \mathcal{G}_8$ where $\mathcal{G}_8$ is the figure-of-eight graph

\begin{center}
    \begin{tikzpicture}[scale=0.8]
        \draw (0,0) -- (1,-1) -- (2,0) -- (3,-1) -- (4,0) --
            (3,1) -- (2,0) -- (1,1) -- cycle;
        \node[circle,fill=white,draw=black] at (0,0) {1};
        \node[circle,fill=white,draw=black] at (1,-1) {0};
        \node[circle,fill=white,draw=black] at (1,1) {0};
        \node[circle,fill=white,draw=black] at (2,0) {1};
        \node[circle,fill=white,draw=black] at (3,-1) {0};
        \node[circle,fill=white,draw=black] at (3,1) {0};
        \node[circle,fill=white,draw=black] at (4,0) {1};
    \end{tikzpicture}
\end{center}

which is equivalent to a pair of length-4 cycles glued at a single level-$1$ vertex, and so by ungluing these cycles and taking the Cartesian product we can obtain a flag graph equivalent to eight disjoint 3D colour code lattices on 3-torii, each of which have parameters $[\![384,9,4]\!]$. The flag grap of $\mathcal{G}_\square$ is then equivalent to the joining of these eight lattices at splittable seams as in \cref{fig:eight_codes}, which also shows some example logical operator structures. 

\begin{figure}
    \centering
    \includegraphics[width=\linewidth]{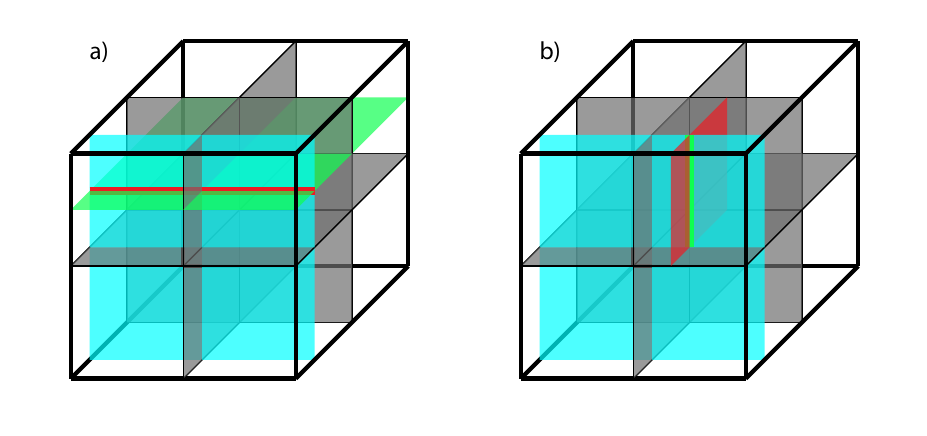}
    \caption{Structure of the mixed HGP rainbow code obtained from the product of three figure-of-eight graphs. Each octant of the cube is equivalent to a 3D colour code on a torus and each grey plane denotes a seam along which two lattices are joined. In a) we show logical $X$ operators of colours $c_1$ and $c_2$ (membranes) intersecting at a $c_0$ logical $Z$ operator (string). In b) we show logical $X$ operators of colours $c_0$ an $c_1$ intersecting at a logical $Z$ operator of colour $c_2$.}
    \label{fig:eight_codes}
\end{figure}

\begin{figure}
    \centering
    \includegraphics[width=.9\linewidth]{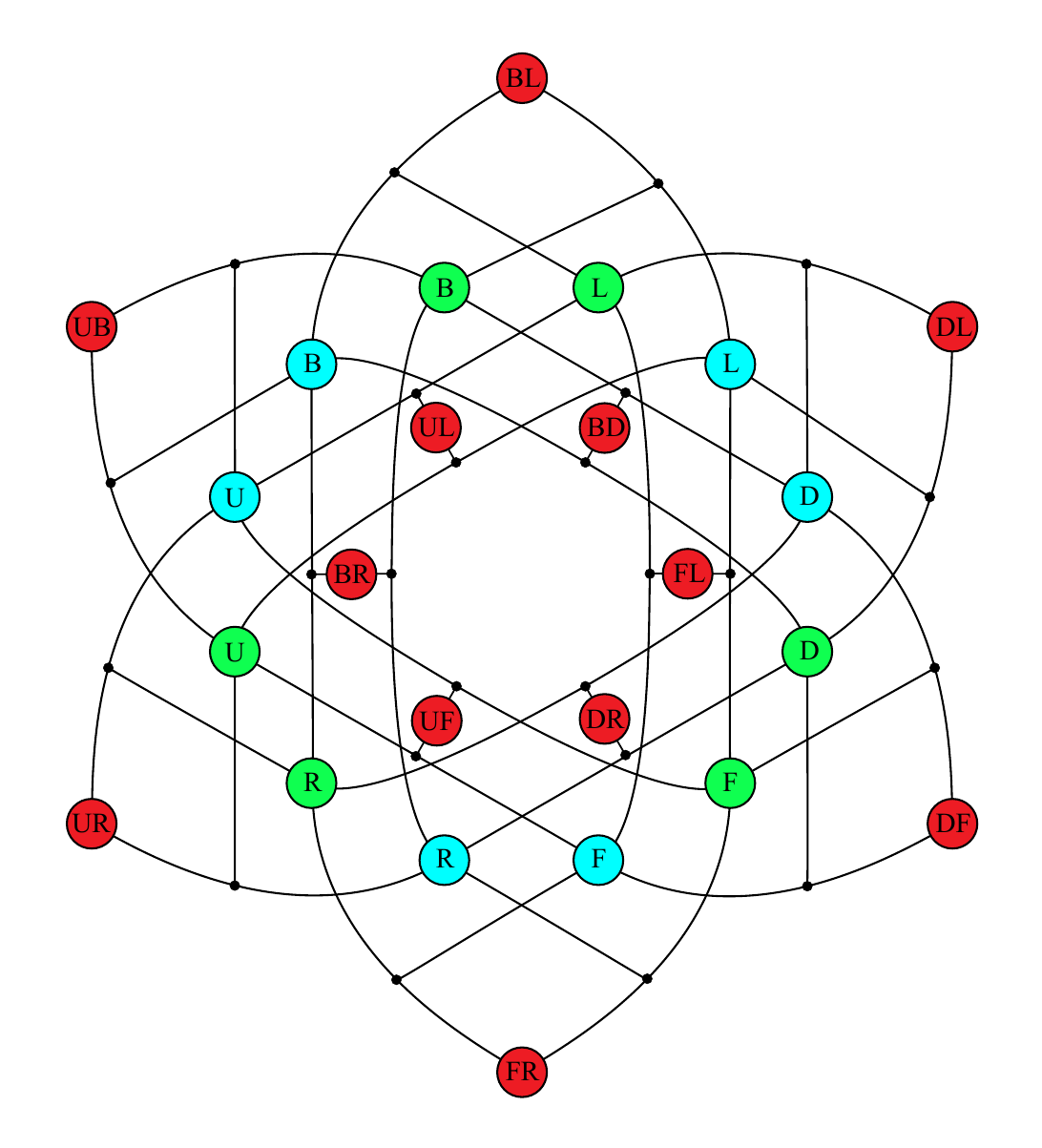}
    \caption{Interaction graph for the mixed HGP rainbow code defined from the product of three figure-of-eight graphs. Under the action of transversal $T/T^\dag$ logical $CCZ$ gates are applied between the triple of coloured vertices connected to each black vertex.}
    \label{fig:interaction_graph}
\end{figure}

We can count the number of physical qubits in this case as it is simply eight times the number of qubits in a single one of the 3D colour codes, so $n=3072$. The number of logical qubits is $k=24$ by \cref{eq:mixed_k} (as all $n_c^i = 2$) and the distance is twice the distance of any of the component colour codes so $d=8$, so we have a $[\![3072,24,8]\!]$ code. In contrast, the 3D colour code defined from a product of three length-8 cycle graphs is a $[\![3072,9,8]\!]$ code, although this is not the most efficient colour code for this $k$ and $d$ as there is also a $[\![768,9,8]\!]$ code that is not directly obtained from the hypergraph product (we show in the \cref{app:edge_contraction} how to obtain this code from the $[\![3072,9,8]\!]$ code). 

We can also try to understand the action of transversal $T/T^\dag$ on this code. With respect to the partitioning of the cube shown in \cref{fig:eight_codes} we can label each component colour code as being either front or back (F or B), left or right (L or R), and up or down (U or D). Logical operators of colour $c_1$ or $c_2$ in the rainbow code can be described by a single one of these labels while logical operators of colour $c_0$ can be described by a pair (e.g. the logicals shown in \cref{fig:eight_codes} are the $c_1$F, $c_2$U and $c_0$UF operators). We can use this to draw an interaction graph between the logical qubits describing the logical action of transversal $T/T^\dag$ (\cref{fig:interaction_graph}). We can see from this graph that two logical $CCZ$ gates are applied to each $c_0$ logical qubit while four $CCZ$ gates are applied to each $c_1$ and $c_2$ logical qubit. This example is available in the \href{https://github.com/m-webster/CSSLO/blob/main/rainbow_codes/A%20figure-8.ipynb}{linked Jupyter notebook}.

\subsection{Fully connected bipartite graphs}

The previous code does not offer any improvement in rate relative to its component colour codes (although it does have improved distance). This is because, as discussed in the previous section, the number of physical qubits in these codes depends on the number of cycles in a cycle decomposition of the input graphs whereas the number of logical qubits depends on the number of cycles in a fundamental cycle basis. For the figure-of-eight graph there are the same number of cycles in both cases, but this does not have to be the case. For our next example we consider the graph

\begin{center}
    \begin{tikzpicture}
        \draw (0,0) -- (0,1);
        \draw (0,0) -- (1,0);
        \draw (0,0) -- (2,1);
        \draw (0,0) -- (1,2);

        \draw (2,0) -- (0,1);
        \draw (2,0) -- (1,0);
        \draw (2,0) -- (2,1);
        \draw (2,0) -- (1,2);

        \draw (0,2) -- (0,1);
        \draw (0,2) -- (1,0);
        \draw (0,2) -- (2,1);
        \draw (0,2) -- (1,2);

        \draw (2,2) -- (0,1);
        \draw (2,2) -- (1,0);
        \draw (2,2) -- (2,1);
        \draw (2,2) -- (1,2);  
        \node[circle,fill=white,draw=black] at (0,0) {0};
        \node[circle,fill=white,draw=black] at (2,0) {0};
        \node[circle,fill=white,draw=black] at (0,2) {0};
        \node[circle,fill=white,draw=black] at (2,2) {0};
        \node[circle,fill=white,draw=black] at (1,0) {1};
        \node[circle,fill=white,draw=black] at (0,1) {1};
        \node[circle,fill=white,draw=black] at (1,2) {1};
        \node[circle,fill=white,draw=black] at (2,1) {1};    
    \end{tikzpicture}
\end{center}

\noindent where all type-0 nodes are connected to all type-1 nodes. This has a cycle decomposition

\begin{center}
    \begin{tikzpicture}
        \draw (0,0) -- (2,0) -- (2,2) -- (0,2) -- cycle;
        \node[circle,fill=white,draw=black] at (0,0) {0};
        \node[circle,fill=white,draw=black] at (2,0) {0};
        \node[circle,fill=white,draw=black] at (0,2) {0};
        \node[circle,fill=white,draw=black] at (2,2) {0};
        \node[circle,fill=white,draw=black] at (1,0) {1};
        \node[circle,fill=white,draw=black] at (0,1) {1};
        \node[circle,fill=white,draw=black] at (1,2) {1};
        \node[circle,fill=white,draw=black] at (2,1) {1};

        \draw (4,0) -- (5,2) -- (6,0) -- (4,1) -- 
            (6,2) -- (5,0) -- (4,2) -- (6,1) -- cycle;
        \node[circle,fill=white,draw=black] at (4,0) {0};
        \node[circle,fill=white,draw=black] at (6,0) {0};
        \node[circle,fill=white,draw=black] at (4,2) {0};
        \node[circle,fill=white,draw=black] at (6,2) {0};
        \node[circle,fill=white,draw=black] at (5,0) {1};
        \node[circle,fill=white,draw=black] at (4,1) {1};
        \node[circle,fill=white,draw=black] at (5,2) {1};
        \node[circle,fill=white,draw=black] at (6,1) {1};
    \end{tikzpicture}
\end{center}

\noindent whereas the size of a fundamental cycle basis (by \cref{eq:cycle_basis_size}) is $n_c = 16 - 8 + 1 = 9$. By \cref{eq:mixed_k} we then have 297 logical qubits in a product of three of these codes. To count the number of physical qubits we can start by noticing that, for a product of $D$ $d$-regular graphs, the number of flags associated to each level-$0$ node in the product graph is 
\begin{equation}
    Dd \times (D-1)d \times (D-2)d \times ... = D!d^D
\end{equation}
This is because the first (level-$0$) vertex in each flag is $(g_{1,i}^0,g_{2,i}^0,...,g_{D,i}^0)^0$ and then, to find a connected level-$1$ vertex, we need to change one of the $g_{j,i}^0$ to $g_{j,i'}^1$. There are $D$ choices of $j$ and then $d$ choices of vertex for each $j$, each with form $((g_{1,i}^0,...g_{j,i}^1,...,g_{D,i}^0)^1$. For the level-$2$ vertex we have $(D-1)$ choices of $j$ and $d$ choices of vertex for each $j$ and so on, eventually giving $D!d^D$. The total number of level-$0$ nodes in the product graph is 
\begin{equation}
    n_0 = \prod_i n_0^i
\end{equation}
where $n_0^i$ is the number of level-$0$ nodes in input graph $i$. The total number flags is then
\begin{equation}
    \label{eq:n_flags}
    n = n_0D!d^D
\end{equation}
and in this case we have $D=3,d=n_0^i=4$ so $n=4^3 \times 3! \times 4^3 = 24576$. The distance of this code, as with all mixed HGP codes, is twice the minimum girth of any input graph, which is $2\times4=8$, so we have a code with parameters $[\![24576,297,8]\!]$. To achieve the same $k$ and $d$ with 3D colour codes we would need $33$ copies of the $[\![768,9,8]\!]$ code, which would require $25344$ physical qubits. This example is available in the \href{https://github.com/m-webster/CSSLO/blob/main/rainbow_codes/B%20fully-connected.ipynb}{linked Jupyter notebook}.

\subsection{Expander graphs}

Next we consider an asymptotic construction based on $d$-regular bipartite expander graphs for even $d$, such as those presented in~\cite{lubotzky_ramanujan_1988}. Hypergraph product codes defined using such graphs can have fairly good parameters and so we might expect the corresponding rainbow codes to also scale well. Unfortunately, the parameters of these codes actually turn out to be quite bad (worse than those of a ``code'' corresponding to $\Theta(n)$ copies of the Euclidean 3D colour code). Specifically, as $n_0^i$ is linear in the size $n$ of the input graphs a code obtained from the product of three of these graphs will have $\Theta(n^3)$ physical qubits. The number of independent cycles in these graphs is also linear in $n$ as the number of edges in a $d$-regular graph is $nd/2$, and so by \cref{eq:mixed_k} the number of logical qubits is $\Theta(n^2)$. Finally, these graphs have girth $O(\log(n))$ and so the distance has the same scaling and we have a code family with parameters $[\![n,\Theta(n^{2/3}),\Theta(\log(n)]\!]$ after rescaling $n^3 \rightarrow n$. 

\subsection{Expander graph and a hyperbolic cellulation}

Finally we present a family of finite rate and non-constant distance codes combining HGP rainbow codes with the quasi-hyperbolic 3D colour codes of~\cite{zhu2023non}, which were obtained from manifolds corresponding to the product of a 2D hyperbolic manifold and a circle. This is compatible with our construction as cellulations of the hyperbolic manifold can be represented as graphs with three levels of vertex, but because these graphs are not themselves products of bipartite graphs the parameters of the resulting code will not be limited by \cref{eq:mixed_k}. The hyperbolic manifold has area $A$, genus $g = \Theta(A)$ and systole $\Theta(\log(A))$, and so a topological code defined on this manifold has parameters $[\![n,\Theta(n),\Theta(\log(n))]\!]$. The circle has length $\Theta(\log(A))$ and so the code defined on the product manifold has parameters $[\![n,\Theta(n/\log(n)),\Theta(\log(n))]\!]$ as the product with the circle increases the number of physical qubits by a factor of $\Theta(\log(n))$ while only increasing the number of logical qubits by a constant factor. We can visualise this manifold as in \cref{fig:hyperbolic} a), where the $x$ and $y$ directions have a hyperbolic metric and the $z$ direction has a Euclidean metric. A colour code defined on this manifold (using the same subdivision of a chain complex into flags that we have described above) supports a transversal non-Clifford gate implemented by $T/T^\dag$ and whose logical action is $CCZ$ between triples of qubits with logical $Z$ operators $\{Z_{c_i},Z_{c_j},Z_{c_k}\}$ where $Z_{c_i}$ and $Z_{c_j}$ surround a common handle in the 2D manifold while $Z_{c_k}$ is the unique $z$-direction logical of colour $c_k \neq c_i \neq c_j$. Examples of $Z$ and $X$ operators for such a triple are also shown in \cref{fig:hyperbolic} a).  Readers who desire a more thorough explanation of these properties can consult section V A of~\cite{zhu2023non}.

\begin{figure}
    \centering
    \includegraphics[width=\linewidth]{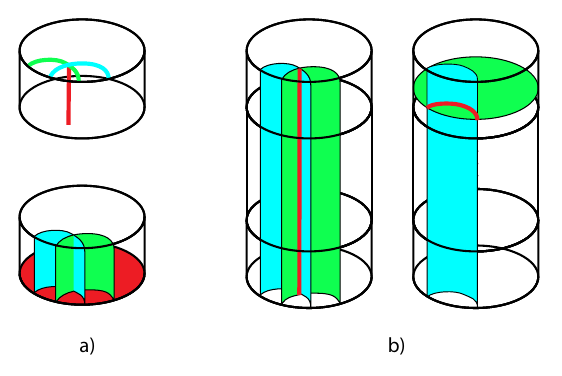}
    \caption{a) Examples of logical $Z$ (above) and $X$ (below) operators in a quasi-hyperbolic colour code as defined in~\cite{zhu2023non}. The circular cross section in the $(x,y)$ plane is a closed 2D hyperbolic manifold and contains an extensive number of handles, while the $z$ direction is Euclidean with top and bottom faces of the cylinder associated. The triple of $X$ logicals shown will be acted on by logical $CCZ$ when transversal $T/T^\dag$ is applied to the code. b) Examples of logical operators in a product of a 2D hyperbolic manifold and an expander graph. We choose a convention red = $c_0$, green = $c_1$, blue = $c_2$. (Left) Nontrivial $c_1$ and $c_2$ membranes perpendicular to seams meet at a nontrivial $c_0$ string. (Right) a nontrivial perpendicular $c_1$ and parallel $c_2$ membrane meet at a nontrivial $c_0$ string. All such $c_0$ strings originating from different quasi-hyperbolic codes correspond to the supports of independent logical operators.}
    \label{fig:hyperbolic}
\end{figure}

To achieve linear encoding rate we can instead consider the product of a $d$-regular (for even $d$) bipartite expander graph of size $s$ and a 2D hyperbolic manifold with area $A$. A rainbow code defined on the output of this product would have a number of physical qubits $\Theta(As)$. Using the fact that the expander graph contains $\Theta(s)$ independent cycles we can view this code as a joining of $\Theta(s)$ copies of the quasi-hyperbolic code described above, and prior to being joined these copies encode $\Theta(As)$ logical qubits. This product then results in a code with linear rate as long as only a constant fraction of logical qubits are lost in the joining. We can see that this is true when using the mixed stabiliser assignment as distinct $c_0$-coloured logical $Z$ operators parallel to the seams (i.e. in the plane of the hyperbolic manifold) are not associated by the joining, and there are $\Theta(A)$ such operators in each  quasi-hyperbolic code. The distance of this code will be the minimum of $\Theta(\log(A))$ and $\Theta(\log(s))$ so we can obtain the best relative distance by setting $A = s$, which gives a code with parameters $[\![n,\Theta(n),\Theta(\log(n))]\!]$. \Cref{fig:hyperbolic} b) shows examples of triples of logical operators of this code with intersections that allow for transformation by logical $CCZ$.

\section{Discussion}
\label{section:discussion}
In this work, we have introduced a general construction for defining quantum codes on any $D$-dimensional simplicial complex with $(D+1)$-colourable vertices. In cases where this complex describes a cellulation of a manifold, we recover the standard topological colour code, and in more general cases we have seen that the resulting codes can often be interpreted as copies of the colour code joined together at domain walls. This both makes understanding the properties of these more general codes straightforward and provides a promising method of constructing new code families. 

Although we have focused mostly on simplicial complexes obtained from the hypergraph product, there exist many other methods of generating suitable complexes, e.g. coset complexes \cite{KAUFMAN2023103696} or more sophisticated product constructions \cite{breuckmann_balanced_2021}. The potential of these other kinds of complex is exemplified by our example of a family of constant rate and growing distance codes with transversal non-Clifford gates, which were derived from a complex obtained by combining a graph product with a cellulation of a hyperbolic manifold. 
This construction gives the first family of quantum LDPC codes defined on qubits with linear rate, growing distance and transversal non-Clifford gates, which are necessary conditions to achieve $\gamma \rightarrow 0$. 
We also note that we have studied only a small subset of the many possible stabiliser assignments permitted by our construction, and that there may be other choices that result in codes with comparable or superior properties. The identification of more sophisticated ways to choose or understand such assignments is left as an open problem. 

Finally, we have not yet considered the question of how to decode rainbow codes. The decoding of colour codes often relies on their mapping to toric codes, and while we have shown the existence of a similar map for specific classes of rainbow codes (generic and coming from a length-2 chain complex), generalizing this map to higher-dimensional and mixed rainbow codes could be valuable for the development of a decoder.

\section*{Acknowledgments}

T. R. S. acknowledges support from the JST Moonshot R\&D Grant [grant number JPMJMS2061]. AP is supported by the Engineering and Physical Sciences Research Council (EP/S021582/1). MW is supported by the Engineering and Physical Sciences Research Council on Robust and Reliable Quantum Computing (RoaRQ), Investigation 011 [grant reference EP/W032635/1] and by the Engineering and Physical Sciences Research Council [grant number EP/S005021/1]. The authors acknowledge valuable discussions with Michael Vasmer, Armanda Quintavalle, Nikolas Breuckmann, Christophe Vuillot, Tim Hosgood, Guanyu Zhu and Louis Golowich.

\bibliographystyle{unsrt}
\bibliography{ref}

\appendix
\clearpage
\onecolumngrid

\section{Edge contraction}
\label{app:edge_contraction}
The smallest 3D colour code obtainable from the HGP rainbow code construction described previously is a $[\![384,9,4]\!]$ code obtained using a product of three length-$4$ cycle graphs. However, a more efficient colour code with parameters $[\![96,9,4]\!]$ defined on a different lattice is known to exist. Is there a way to obtain this code from a product construction, and if so, can a similar approach be used to generate more efficient rainbow codes? 

To answer these questions we need to define a new operation on simplex graphs, which we call \textit{edge contraction}.

\begin{definition}
    Given a simplex graph $\mathcal{G}$ with colours $\{c_0,...,c_k\}$ the \textbf{$c_i$-contraction}, $\mathrm{Cont}(c_i):\mathcal{G}$, is the operation
    \begin{equation}
        \{g_i \Leftarrow g_j ~ \forall ~ (g_i,g_j) \in P\} : (\mathcal{G}/\{c_i\})
    \end{equation}
    where $P$ is the set of edges of a spanning forest of $\mathcal{G}$ that contains only $c_i$ edges. We say that the graph $\mathrm{Cont}(c_i):\mathcal{G}$ has colours $\{c_0,...,\cancel{c_i},...,c_k\}$.
\end{definition}

In other words, $\mathrm{Cont}(c_i)$ removes from $\mathcal{G}$ all $c_i$-coloured edges and the glues together all vertices which were previously part of the same $\{c_i\}$-maximal subgraph. Another way to understand it is as the contraction of all $c_i$-coloured edges (hence the name), so that all vertices previously connected by $c_i$ edges become associated and all $c_i$ edges in the graph are removed. As contractions for different $c_i$ only affect edges of different colours these operations necessarily commute and we can also talk more generally about $\{c_i,c_j,...\}$-contractions ($\mathrm{Cont}(c_i,c_j,...)$) without needing to specify an order. An example of $c_0$-contraction in a 2D colour code lattice is shown in \cref{fig:2d_contraction}.

\begin{figure}[b]
    \centering
    \includegraphics[width=0.5\linewidth]{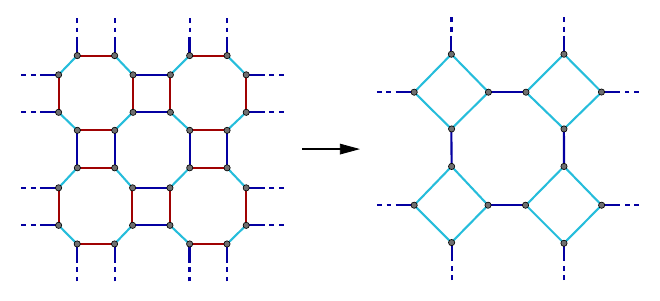}
    \caption{$c_0$-contraction in a 2D colour code lattice. Colours are $c_0=$ dark red, $c_1=$ light blue, $c_2=$ dark blue. The result is a ``rotated'' 2D colour code with the same $k$ and $d$ as the original code but half as many physical qubits.}
    \label{fig:2d_contraction}
\end{figure}

Contraction also defines a mapping of each maximal or rainbow subgraph via the contraction of all $c_i$ edges in this subgraph. For example, in \cref{fig:2d_contraction} the light blue/dark blue octagons are $\{c_1,c_2\}$-maximal/rainbow subgraphs just as in the original code while the light blue diamonds are $\{\cancel{c_0},c_1\}$-maximal/rainbow subgraphs and the dark blue edges are $\{\cancel{c_0},c_2\}$-maximal/rainbow subgraphs. This defines a corresponding mapping on the stabilisers of the original code but in general these operators will not commute in the contracted code. For example, the intersection of a $\{\cancel{c_0},c_1\}$- and $\{\cancel{c_0},c_2\}$-maximal/rainbow subgraph in \cref{fig:2d_contraction} can be only a single vertex. We must therefore choose some commuting subset of these operators to define the stabilisers for a code on the contracted lattice. In the example of \cref{fig:2d_contraction} the most obvious choice is the operators defined on $\{\cancel{c_0},c_1\}$- and $\{c_1,c_2\}$-maximal/rainbow subgraphs. We then obtain a rotated version of the original code with parameters $[\![16,4,4]\!]$ as opposed to the $[\![32,4,4]\!]$ uncontracted code. 

The $[\![384,9,4]\!]$ and $[\![96,9,4]\!]$ 3D colour codes are also related by an edge contraction ($\mathrm{Cont}(c_0,c_3)$) as shown in \cref{fig:3d_contraction}. The $X$ stabilisers of the contracted code are $\{\cancel{c_0},c_1,c_2\}$- and $\{c_1,c_2,\cancel{c_3}\}$-maximal/rainbow subgraphs and the $Z$ stabilisers are $\{\cancel{c_0},c_1\}$-, $\{c_2,\cancel{c_3}\}$- and $\{c_1,c_2\}$-maximal/rainbow subgraphs. Because we have contracted two colours of edge we get a factor of four reduction in the number of physical qubits ($384/4=96$). There is also an intermediate lattice obtained by performing only one of the two contractions (it does not matter which one) which defines a code with parameters $[\![192,9,4]\!]$. We can also contract either of the remaining edge colours to obtain a octahedral-cuboctahedral lattice which supports a triple of 3D surface codes related to the 3D colour code by an unfolding map, and which admit a transversal $CCZ$ gate~\cite{vasmer_three-dimensional_2019}. The stabilisers of these three codes can also be obtained from a suitable choice of contracted subgraphs. Construction of 3D colour codes with the various edge contractions can be explored in the \href{https://github.com/m-webster/CSSLO/blob/main/rainbow_codes/C%20colour-code-edge-contraction.ipynb}{linked Jupyter notebook}.

\begin{figure*}
    \centering
        \includegraphics[width=.3\textwidth]{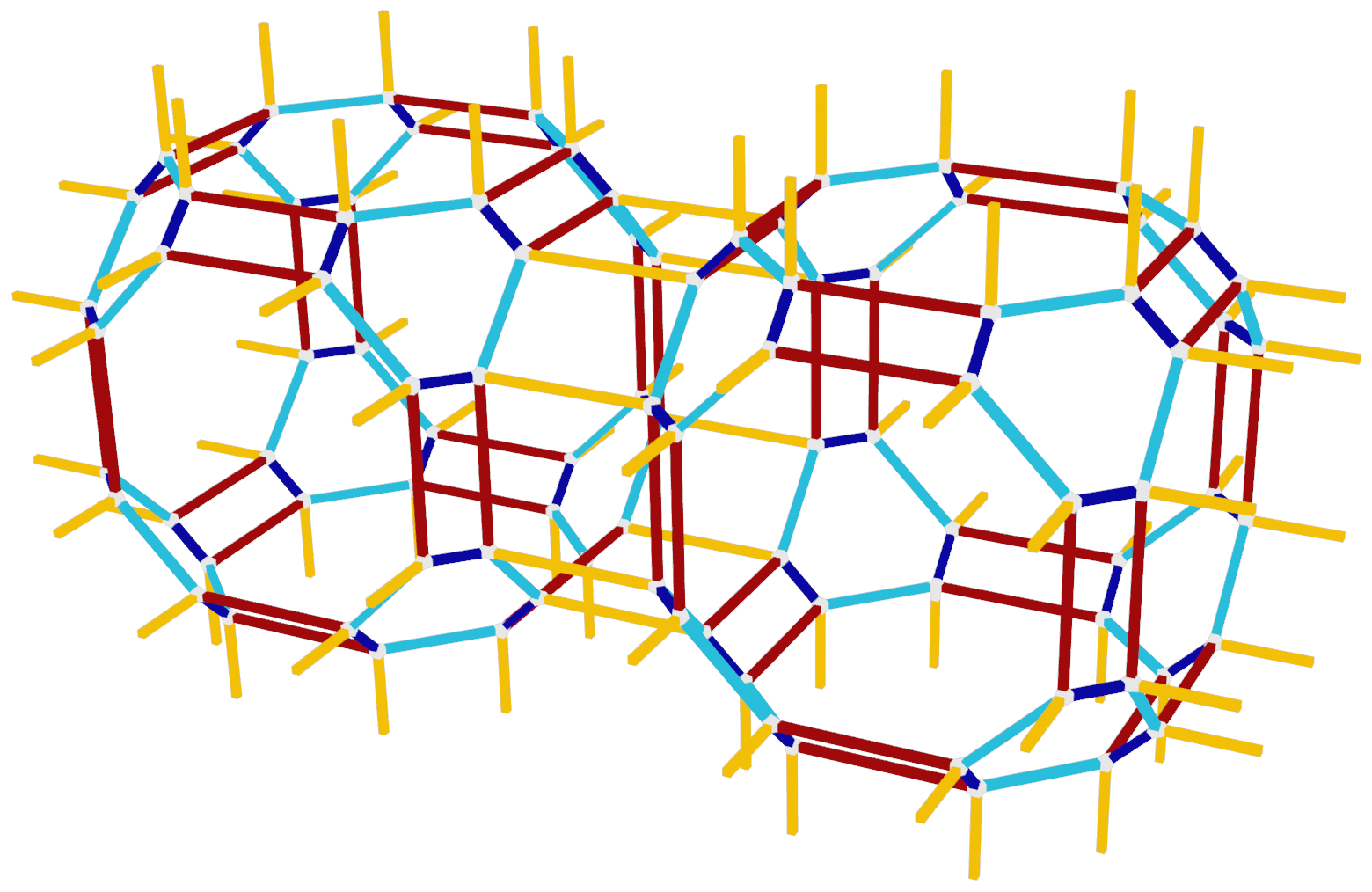}
        \begin{tikzpicture}
            \draw[->] (0,0) -- (1,0);
            \draw[draw=white] (0,0) -- (0,-1.5);
        \end{tikzpicture}
        \includegraphics[width=.3\textwidth]{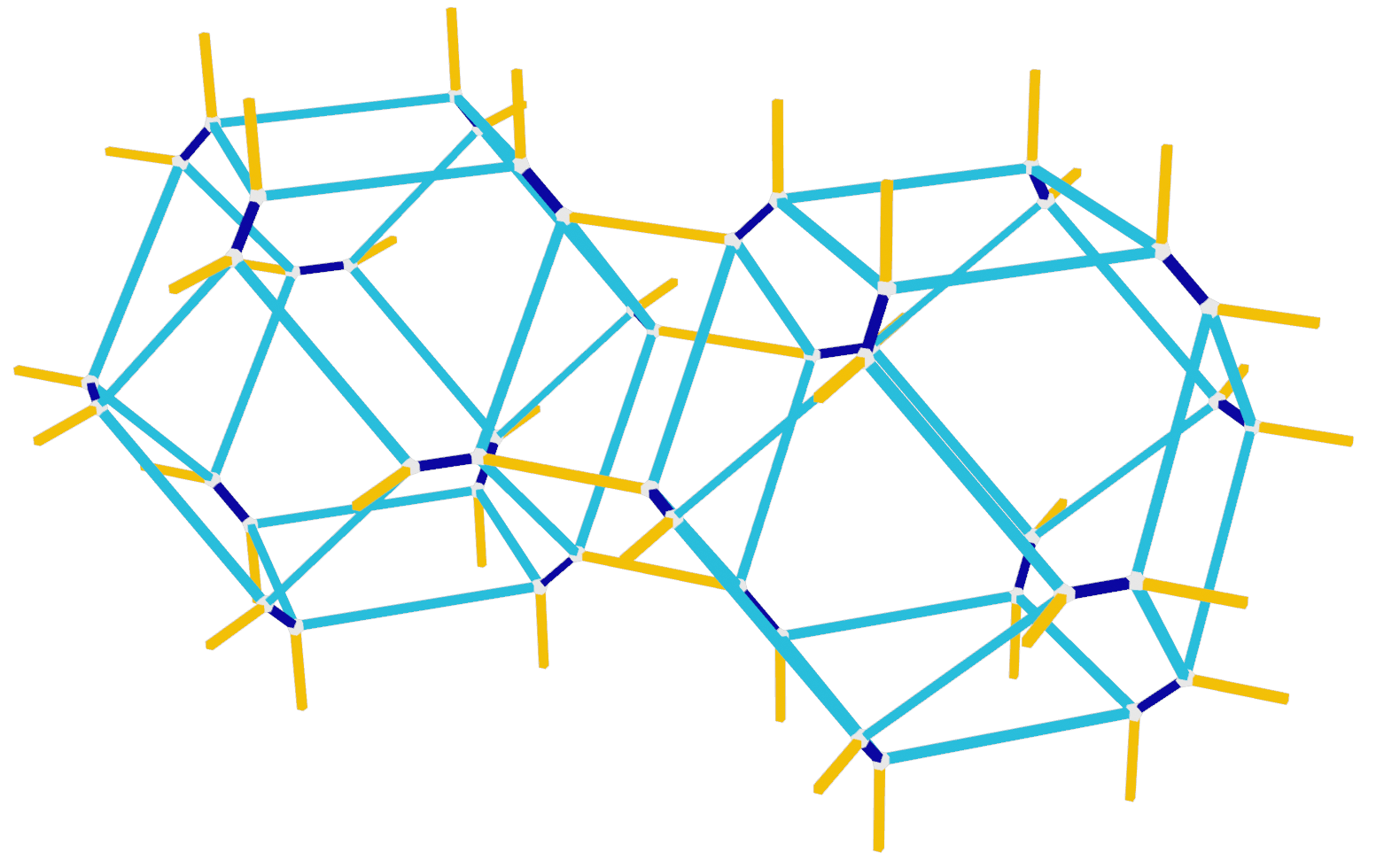}
        \begin{tikzpicture}
            \draw[->] (0,0) -- (1,0);
            \draw[draw=white] (0,0) -- (0,-1.5);
        \end{tikzpicture}
        \includegraphics[width=.26\textwidth]{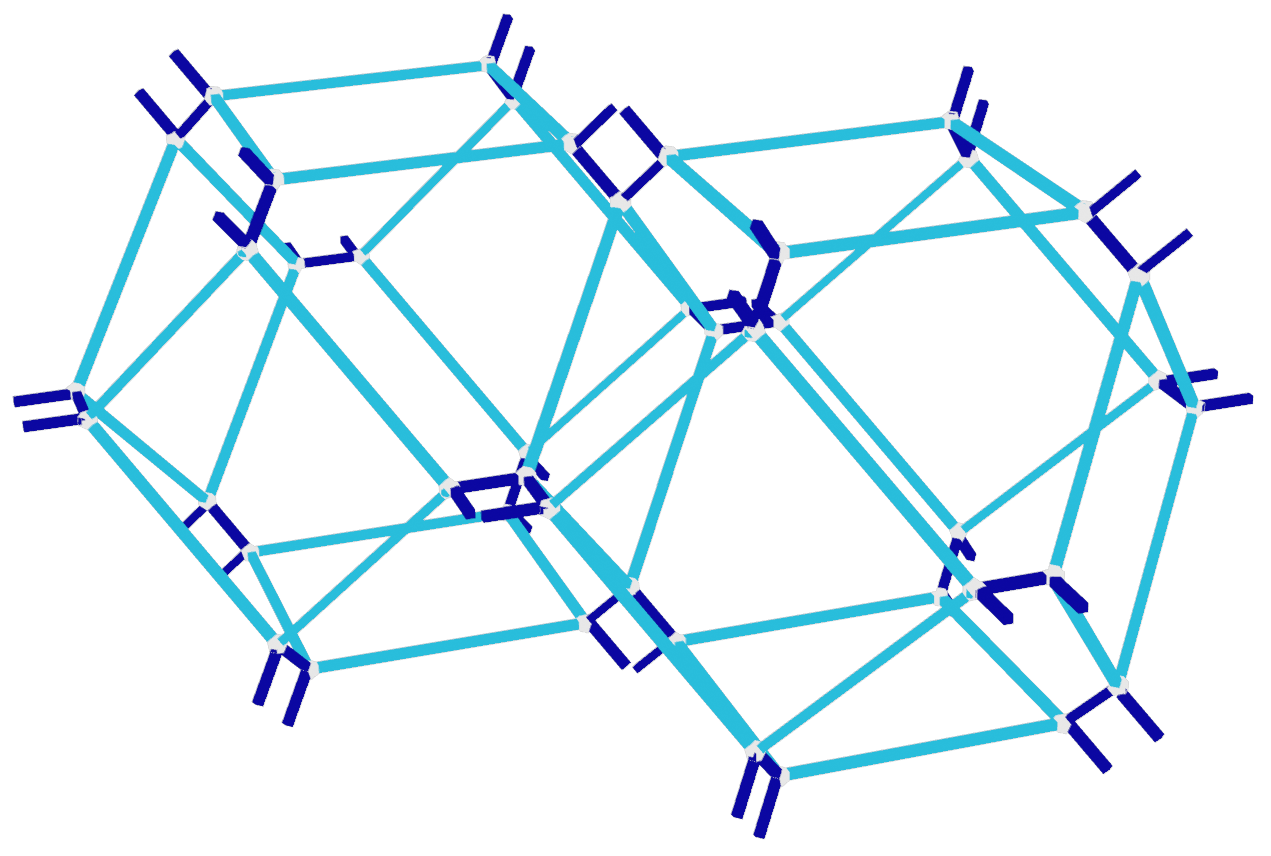}
    \caption{\textit{(Left)} Subgraph of the flag graph of a 3D cubic lattice, equivalent to a 3D colour code lattice. Colours are $c_0=$ dark red, $c_1=$ light blue, $c_2=$ dark blue, $c_3=$ light orange, so this subgraph contains two $\{c_0,c_1,c_2\}$-maximal ($=$ rainbow) subgraphs and a $\{c_0,c_1,c_3\}$-maximal ($=$ rainbow) subgraph. \textit{(Middle)} Graph obtained from contracting all $c_0$ edges of the previous graph, so that the vertices at the endpoints of these edges become associated and the edges themselves are deleted from the graph. The result is a subregion of a different 3D colour code lattice (up to recolouring of edges). \textit{(Right)} Graph obtained from contracting all $c_3$ edges of the previous graph. This is also a subregion of a different colour code lattice (up to recolouring of edges).}
    \label{fig:3d_contraction}
\end{figure*}

\begin{figure*}
    \centering
    \includegraphics[width=0.9\textwidth]{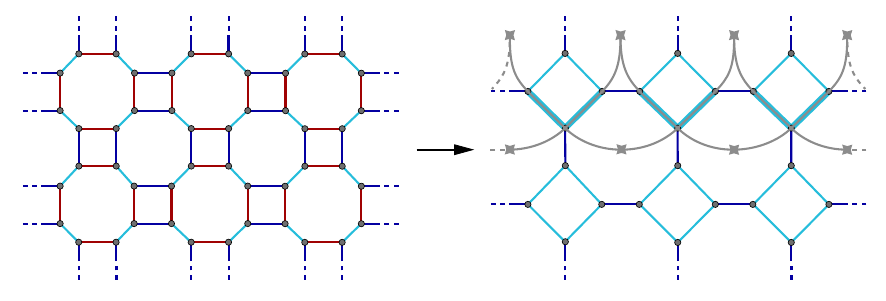}
    \caption{\textit{(Left)} a 2D colour code defined on the flag graph of a $3\times2$ square lattice on a torus. (\textit{Right}) $c_0$-contraction of this lattice resulting in a code with $X$ and $Z$ stabilisers on $\{\cancel{c_0},c_1\}$- and $\{c_1,c_2\}$-maximal/rainbow subgraphs. Shown in grey is a logical operator defined on pair of edges (thick light blue lines) and the stabilisers anticommuting with these edge operators (grey crosses). The operator must wrap twice around the torus in order to close, resulting in a loss of two encoded qubits relative to the uncontracted code.}
    \label{fig:2d_contraction_2}
\end{figure*}

In order to understand when and how this procedure might be generalised to other rainbow codes we first need to understand what makes it work in the cases discussed above. The most obvious requirement is that the stabilisers of the code must still have even weight after the edge contraction (this is necessary for the resulting code to still have a transversal non-Clifford gate due to the requirements discussed in the previous section). It is for this reason that we can contract the $c_0$ and $c_3$ edges of the $[\![384,9,4]\!]$ code but not the $c_1$ or $c_2$ edges, as all $2$-rainbow subgraphs containing $c_0$ or $c_3$ have size $0 \mod 4$, but $\{c_1,c_2\}$-rainbow subgraphs have size $6 = 2 \mod 4$, and so $\{\cancel{c_1},c_2\}$- or $\{c_1,\cancel{c_2}\}$-rainbows subgraphs would have size three. This fact can be understood as arising from the structure of the cubic lattice (which was used to obtain the original flag graph). This lattice has Coxeter diagram \CoxeterFour{}{4}{}{3}{}{4}{} which tells us that all $2$-rainbow subgraphs are generated by symmetries of even order except for $\{c_1,c_2\}$-subgraphs which are generated by symmetries of odd order. This insight can allow us to perform edge contraction on colour codes defined on hyperbolic lattices, for example we demonstrate how to halve the number of physical qubits used in a hyperbolic colour code based on the 4-3-5 3D tiling in the \href{https://github.com/m-webster/CSSLO/blob/main/rainbow_codes/D%20manifold-tilings.ipynb}{linked Jupyter notebook}. We can also see that, because the Coxeter diagram for a $D$-dimensional hypercubic lattice has the form
\begin{tikzpicture}
    \node[circle,draw,fill=black,inner sep=0,minimum size=3] at (0,0) {};
    \node[circle,draw,fill=black,inner sep=0,minimum size=3] at (0.5,0) {};
    \node[circle,draw,fill=black,inner sep=0,minimum size=3] at (2,0) {};
    \node[circle,draw,fill=black,inner sep=0,minimum size=3] at (2.5,0) {};
    \draw (0,0) -- (0.5,0);
    \node[anchor=south] at (0.25,-0.08) {\tiny 4};
    \draw (0.5,0) -- (0.85,0);
    \node[anchor=south] at (0.75,-0.08) {\tiny 3};
    \draw[dashed] (1,0) -- (1.5,0);
    \draw (1.65,0) -- (2,0);
    \node[anchor=south] at (1.75,-0.08) {\tiny 3};
    \draw (2,0) -- (2.5,0);
    \node[anchor=south] at (2.25,-0.08) {\tiny 4};
\end{tikzpicture}
we can in general only contract $c_0$ and $c_D$ edges in flag graphs obtained via the hypergraph product.

This is not the only requirement, however. If we consider a graph $\mathcal{G}$ obtained from a product of a length-$4$ and a length-$6$ cycle which defines a $[\![48,4,4]\!]$ 2D colour code then the graph $\mathrm{Cont}(c_0):\mathcal{G}$ defines a $[\![24,2,4]\!]$ code, so two logical qubits have been lost. We can see this in \cref{fig:2d_contraction_2}. We observe similar results in 3D, for example a product of three length-$6$ cycles defines a $[\![1296,9,6]\!]$ code but $c_0$ contraction results in a $[\![648,6,6]\!]$ code and an additional $c_3$ contraction results in a $[\![324,6,6]\!]$ code. This suggests that for the number of logical qubits to be preserved under contraction we should choose input graphs whose fundamental cycle bases contain only cycles of length $0 \mod 4$ and not $2 \mod 4$. 

We also need to understand how contraction works at seams between colour codes. An example is shown in \cref{fig:seam_contraction}. In this case we can see that, unlike in a colour code containing no seams, $\{\cancel{c_0},c_1\}$-rainbow subgraphs and $\{c_1,c_2\}$-maximal subgraphs are not required to have even intersection and instead can intersect at a single vertex. This makes assignment of stabilisers to $c_0$-contracted lattices containing type-$1$ seams (or $c_D$-contracted lattices containing type-$0$ seams) quite difficult. In contrast, because $c_D$-contraction does not modify type-$1$ seams (and similarly for $c_0$ contraction and type-$0$ seams) the same stabiliser as in unjoined colour codes can be applied in this case. This means that in codes obtained from products of graphs in which all level-$1$ (level-$0$) vertices have degree $2$ we can reliably perform $c_0$ ($c_D$) edge contraction and reliably obtain a commuting stabiliser group.

\begin{figure}
    \centering
    \includegraphics[width=0.5\linewidth]{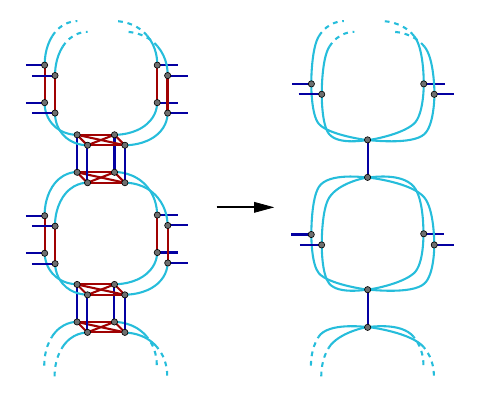}
    \caption{\textit{(Left)} Subgraph of the flag graph shown in \cref{fig:glued_ccs} that contains a seam. \textit{(Right)} Transformation of this seam under $c_0$-contraction. The intersection of $\{\cancel{c_0},c_1\}$-maximal and $\{c_1,c_2\}$-rainbow subgraphs can now be a single vertex.}
    \label{fig:seam_contraction}
\end{figure}

We believe that more sophisticated edge contraction techniques could lead to significant resource savings in triorthogonal quantum codes, but leave the development of these techniques as a problem for future work. 

\section{Unfolding of rainbow codes}
\label{app:unfolding}
An important characteristic of $D$-dimensional colour codes is the possibility to map them into $D$ copies of the toric code through a local Clifford unitary \cite{kubica2015unfolding}. This mapping, often called "unfolding" due its interpretation in two dimensions, has many useful consequences, such as the possibility to construct colour code decoders based on toric code ones \cite{kubica2023efficient} or the understanding of topological excitations and defects in the colour codes \cite{kesselring2018boundaries} 

We prove here the existence a local Clifford unitary map between any generic rainbow code constructed from a length-2 chain complex, and the tensor product of two codes whose qubits are supported on the 1-maximal subgraphs of the corresponding flag graph. We call this mapping "unfolding" by analogy with the colour code case, and prove its existence by generalizing the techniques developed in Ref.~\cite{kubica2023efficient}. When applying the same toolbox to rainbow codes in higher dimensions or beyond the generic case (e.g. mixed rainbow codes), those techniques seem to fail, which suggests that no such mapping exist in general. We leave this problem open for future work.

\subsection{General unfolding process}

To prove the existence of an unfolding map, we follow the procedure outlined in Ref.~\cite{kubica2015unfolding}, that we summarize now. The goal is to construct a local Clifford unitary between a stabilizer code $\mathcal{C}_A$ supported on qubits $\mathcal{Q}_A$ to a stabilizer code $\mathcal{C}_B$ supported on qubits $\mathcal{Q}_B$. 
For this, the idea is to partition the two qubit sets into small parts, $\{Q_A^{(i)}\}$ and $\{Q_B^{(i)}\}$, and construct a unitary $U_i$ for each element of the partition, such that the overall unitary operator can be written as a tensor product of all the $U_i$, making it local by construction. 
To see the effect of such unitary on the stabilizer group of $C_A$, we need to consider the overlap of all the stabilizers of $C_A$ with each qubit subset $Q_A^{(i)}$. If we know how all those stabilizer overlaps are mapped by all the $U_i$, we can determine the effect of the overall unitary on the code. The overlap of all the stabilizers of a code $\mathcal{C}$ with any given subset of qubits $Q$ forms a group, that we call the \textbf{overlap group} and write
\begin{align}
    \mathcal{O}(\mathcal{C},Q)=\{ P \in \mathcal{P}_n \; : \; P = S \cap Q,\; S \in \mathcal{S}(\mathcal{C}) \}
\end{align}
where $\mathcal{P}_n$ denotes the Pauli group on $n$ qubits (with $n$ the number of physical qubits of the code), and $\mathcal{S}(\mathcal{C})$ refers to the stabilizer group associated to $\mathcal{C}$.

An unfolding process is then given by the following data:
\begin{enumerate}
    \item A partition $Q_A=\bigsqcup_{i=1}^\ell Q_A^{(i)}$ and $Q_B=\bigsqcup_{i=1}^\ell Q_B^{(i)}$ of the qubits of the two codes
    \item For each subset of the partition, a group isomorphism $\mathcal{M}_i: \mathcal{O}_A^{(i)} \rightarrow \mathcal{O}_B^{(i)}$, where $O_A^{(i)}=\mathcal{O}(C_A,\mathcal{Q}_A^{(i)})$ and $O_B^{(i)}=\mathcal{O}(C_B,\mathcal{Q}_B^{(i)})$. Given a set of generators for the groups $O_A^{(i)}$ and $O_B^{(i)}$, the mapping $\mathcal{M}_i$ is entirely determined by how it maps generators of $\mathcal{O}_A^{(i)}$ into generators of $\mathcal{O}_B^{(i)}$.
\end{enumerate}

We say that an unfolding process is \textbf{valid} if for each $i$, there exists a unitary $U_i$ that implements the mapping $\mathcal{M}_i$. For a valid unfolding process, the unitary
\begin{align}
    U = \bigotimes_{i=1}^{\ell} U_i
\end{align}
therefore maps the code $C_A$ to the code $C_B$.

The following theorem (shown in \cite{kubica2015unfolding}) characterizes an unfolding process:

\begin{theorem}
    Let's consider an unfolding process equipped with a set of generators $\mathcal{G}(O_A^{(i)})$ and $\mathcal{G}(O_B^{(i)})$ for each overlap group, such that for all $1 \leq i\leq \ell$,
    \begin{enumerate}
        \item $G(O_A^{(i)})=G(O_B^{(i)})$
        \item $G(Z(O_A^{(i)}))=G(Z(O_B^{(i)}))$
        \item $[g,g']=[h,h']$ for each $g,g' \in \mathcal{G}(O_A^{(i)})$, $h=\mathcal{M}_i(g)$, $h'=\mathcal{M}_i(g')$.
    \end{enumerate}
    where $G(O)=|\mathcal{G}(O)|$ denotes the number of generators of the group $O$, and $Z(O)$ represents the center of the group, that is, all the group elements that commute with every other elements. Then, such an unfolding process is valid.
    \label{theorem:kubica-unfolding}
\end{theorem}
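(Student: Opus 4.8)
The plan is to reduce the statement to a question in symplectic linear algebra over $\mathbb{F}_2$ and then invoke the standard correspondence between the Clifford group and the symplectic group. The essential point is that a Clifford unitary acts on the Pauli group by an automorphism preserving all commutation relations, and conversely every such ``symplectic'' automorphism is realised by a Clifford unitary, up to a Pauli correction. Conditions \textit{1}--\textit{3} are precisely the invariants guaranteeing that each $\mathcal{M}_i$ is an isometry of alternating forms which extends to the full ambient space, so that this correspondence applies subset by subset.

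First I would set up the dictionary between Pauli subgroups and subspaces. Working modulo phases, the Pauli group on the qubits of $Q_A^{(i)}$ becomes an $\mathbb{F}_2$-vector space carrying the alternating bilinear form $\langle\cdot,\cdot\rangle$ recording whether two operators commute or anticommute. The overlap group $O_A^{(i)}$ corresponds to a subspace $V_A$, and the restriction of the form to $V_A$ is an alternating form whose radical coincides with the image of the centre $Z(O_A^{(i)})$, i.e. those overlap operators that commute with all the others. The same holds for $V_B$. Under this dictionary condition \textit{1} says $\dim V_A = \dim V_B$, condition \textit{2} says the two radicals have equal dimension, and condition \textit{3} says $\mathcal{M}_i$ preserves the form, i.e. $\mathcal{M}_i$ is an isometry $V_A \to V_B$.

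Next I would invoke the classification of alternating forms over $\mathbb{F}_2$: such a form is determined up to isometry by the dimension of the space together with the dimension of its radical. Conditions \textit{1} and \textit{2} therefore certify that $V_A$ and $V_B$ are abstractly isometric, while condition \textit{3} certifies that the given $\mathcal{M}_i$ genuinely realises such an isometry. I would then apply a Witt-type extension theorem to extend $\mathcal{M}_i$ to a symplectic isomorphism $\widetilde{\mathcal{M}}_i$ of the full Pauli groups (mod phases) on $Q_A^{(i)}$ and $Q_B^{(i)}$, using that the partition is into subsets of matching size so the two ambient symplectic spaces have equal dimension. Finally, using the surjectivity of the Clifford group onto the symplectic group, I would lift $\widetilde{\mathcal{M}}_i$ to a unitary $U_i$, and fix the signs of the images (which the mod-phase picture discards) by composing with an appropriate Pauli operator, itself Clifford. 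Taking $U = \bigotimes_i U_i$ then yields the desired local unitary implementing the whole unfolding.

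The hard part will be the degenerate case. Because the overlap groups are restrictions of a commuting stabiliser group to a subset of qubits, they are generally \emph{non-abelian} and their forms are \emph{degenerate}, so the textbook Witt theorem stated for non-degenerate symplectic spaces does not apply directly, and this is exactly why condition \textit{2} about the centre is needed rather than condition \textit{1} alone. The remedy is to split each subspace as radical $\oplus$ non-degenerate complement and extend on each piece separately: the non-degenerate parts are matched once their dimensions agree (forced by conditions \textit{1} and \textit{2} together) and handled by the standard theorem, while the radical parts are matched by condition \textit{2} and carried into a complementary isotropic space of the ambient form. Checking that these two extensions assemble into a single global symplectic isomorphism, and that the resulting $U_i$ is genuinely supported only on $Q^{(i)}$ so that the tensor-product structure of $U$ survives, is the step demanding the most care.
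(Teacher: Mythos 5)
Your proposal is correct and follows essentially the same route as the proof this paper relies on: the paper does not prove \cref{theorem:kubica-unfolding} itself but defers to Refs.~\cite{kubica2015unfolding,yoshida2010framework}, whose argument is exactly your dictionary --- Pauli overlap groups as $\mathbb{F}_2$-subspaces with a (generally degenerate) alternating form, classification of such forms by dimension and radical dimension (conditions \textit{1} and \textit{2}), decomposition into centre plus hyperbolic (anticommuting) generator pairs, extension to the ambient space, and realisation by a Clifford unitary with a Pauli correction to fix signs. Your explicit flagging of the degenerate/radical case as the reason condition \textit{2} is needed, and of the ancilla-padding needed so that $|Q_A^{(i)}|=|Q_B^{(i)}|$, matches the role these points play in the cited framework.
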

Note that the first and second conditions prove that the there exists an isomorphism between the two overlap groups, while the last condition ensures that there is unitary implementing the explicit group isomorphism given by $\mathcal{M}_i$. The proof of \cref{theorem:kubica-unfolding} heavily relies on the framework constructed in Ref.~\cite{yoshida2010framework}.

\subsection{Outline of the proof}

In the case of 2-rainbow codes, the unfolding process is defined as follows. The code $C_A$ corresponds to the rainbow code, with qubits $Q_A$ on vertices. 
The code $C_B$
has qubits on the $1$-maximal subgraphs of the flag graph, i.e. edges and higher-size cliques. 
We then choose two colors $c_a \neq c_b$ and consider the partition of the flag graph into its $\{c_a, c_b\}$-maximal subgraphs, that we call $\mathcal{G}_i$. 
The vertex set of each $(c_a, c_b)$-maximal subgraph then correspond to $Q_A^{(i)}$, while its cliques define  subsets of $Q_B^{(i)}$. 
Indeed, as we will see, there are more vertices than cliques in the flag graph (and in each $2$-maximal subgraphs), and we therefore also need to add some ancilla qubits in the set $Q_B^{(i)}$. One can easily see that the overlap groups $O_A^{(i)}$ consist of pairs of $X$s on every edge of $\mathcal{G}_i$, and $Z$ operators on the vertices of every clique. The overlap groups $O_B^{(i)}$ will be deduced from our choice of mappings $\mathcal{M}_i$. 

The rest of the proof will go as follow:
\begin{enumerate}
    \item Counting the number of ancilla qubits $A_i$ required for each subgraph $\mathcal{G}_i$ of the partition. That is, we will calculate $A_i=n_A^{(i)} - n_B^{(i)}$, where $n_A^{(i)}$ and $n_B^{(i)}$ are the number of physical qubits in $Q_A^{(i)}$ and $Q_B^{(i)}$ respectively 
    \item Constructing a group isomorphism $\mathcal{M}_i: O_A^{(i)} \rightarrow O_B^{(i)}$ by its effect on the generators of $O_A^{(i)}$.
    \item Proving that this mapping is valid by proving that the three conditions of \cref{theorem:kubica-unfolding} are satisfied.
    \item Showing that the resulting code $C_B$ is a tensor product of two codes supported on disjoint lattices.
\end{enumerate}

Those steps should show that there exists a local Clifford unitary between $C_A$ and $C_B$. 

For the rest of the proof, unless stated otherwise, let's choose a specific $\{c_a,c_b\}$-maximal subgraph $\mathcal{G} := \mathcal{G}_i$, associated to the qubit sets $Q_A := Q_A^{(i)}$ and $Q_B := Q_B^{(i)}$, and the overlap groups $O_A := O_A^{(i)}$ and $O_B := O_B^{(i)}$.

\subsection{Number of ancilla qubits}

Let's then determine the number of ancilla qubits. Denoting $m_k$ the number of $1$-maximal subgraphs made of $k$ vertices (i.e. the number of $k$-cliques) of the flag graph, the number of qubits in $\mathcal{C}_B$ is equal to 
\begin{align}
    n_B=\sum_{k=1}^\infty m_{2k}
\end{align}

\begin{lemma}
    The number of vertices $v$ of $\mathcal{G}$ is related to the number of $1$-maximal subgraphs via the following relation:
    \begin{align}
        v = \sum_{k=1}^{\infty} k m_{2k} = m_2 + 2m_4 + 3m_6 + \ldots
    \label{eq:vertex-clique-relation}
    \end{align}
    \label{lemma:vertex-clique-relation}
\end{lemma}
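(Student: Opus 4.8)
The plan is to prove the identity by a double count of vertex–clique incidences inside $\mathcal{G}$, supplemented by a structural argument that every relevant clique has even size. Since $\mathcal{G}$ is a $\{c_a,c_b\}$-maximal subgraph, all of its edges carry colour $c_a$ or $c_b$, so every $1$-maximal subgraph contained in $\mathcal{G}$ is a monochromatic clique of one of these two colours. The maximal $c_a$-coloured cliques partition the vertices of $\mathcal{G}$ (two vertices lie in the same one iff joined by a path of $c_a$-edges), and likewise for $c_b$. Hence each vertex of $\mathcal{G}$ lies in exactly one $c_a$-clique and exactly one $c_b$-clique, i.e. in exactly two $1$-maximal subgraphs, so that summing clique sizes counts each vertex twice: $\sum_Q |Q| = 2v$, where $Q$ ranges over the $1$-maximal subgraphs of $\mathcal{G}$ and $m_{2k}$ counts those of size $2k$ contained in $\mathcal{G}$.

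The step I expect to be the real obstacle is showing that every such clique has \emph{even} size, which is exactly what forces only the terms $m_{2k}$ to appear. I would argue this from the product origin of the flag graph, where a $1$-maximal subgraph is the set of flags sharing a common $1$-simplex and its size is read off directly from $\mathcal{G}_\square$. For the middle colour, two flags differing only in their level-$1$ vertex share a level-$0$ vertex $(g^0,h^0)$ and a level-$2$ vertex $(g^1,h^1)$, and the only admissible intermediate vertices are $(g^1,h^0)$ and $(g^0,h^1)$, so these cliques always have size exactly $2$. For the outer colours a clique consists of flags agreeing on level $1,2$ (colour $c_0$) or level $0,1$ (colour $c_2$) and varying in the remaining vertex; the number of admissible choices equals the degree, in one of the input graphs $\mathcal{G}_k$, of the active coordinate of the fixed level-$1$ vertex. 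Since all $\mathcal{G}_k$ are connected and of even degree, each such clique has even size $\geq 2$, and the middle cliques have size $2$; in particular no size-$1$ (isolated-vertex) cliques occur, which is also why every vertex genuinely sits in one clique of each colour.

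With both facts in place the computation closes at once: grouping the incidence count by clique size gives $\sum_Q |Q| = \sum_{k\geq 1} 2k\, m_{2k} = 2v$, and dividing by two yields $v = \sum_{k\geq 1} k\, m_{2k} = m_2 + 2m_4 + 3m_6 + \ldots$ as claimed. The incidence double-count and the concluding algebra are routine; the only part that genuinely uses the hypotheses of the construction—and hence the part I would write most carefully—is the even-size claim, since it is where both the product structure of the flag graph and the even-degree assumption on the input graphs enter.
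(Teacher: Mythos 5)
Your proof is correct, but it follows a genuinely different route from the paper's. The paper argues by induction on the clique cardinality $(m_{2i})_{i\geq 1}$: the base case is a flag graph whose only cliques are edges, which (being $2$-regular with alternating colours) is a $2$-rainbow cycle with $v=m_2$; the inductive step selects a $2k$-clique, pairs up its vertices, and deletes all other internal edges, producing a graph with the same vertex count but one fewer $2k$-clique and $k$ more $2$-cliques, to which the induction hypothesis applies. Your double count of vertex--clique incidences -- each vertex lies in exactly one maximal $c_a$-clique and exactly one maximal $c_b$-clique, so $\sum_Q \lvert Q\rvert = 2v = \sum_{k\geq 1} 2k\,m_{2k}$ -- is more direct, and it is in fact essentially the computation the paper performs one lemma later: in the proof of \cref{lemma:r-A-relations} the edges of the clique graph $\mathcal{C}(\mathcal{G})$ are precisely your vertex--clique incidences, and the paper's count $\tilde{e}=\tfrac{1}{2}\sum_k 2k\,m_{2k}$ is your identity in disguise. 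Your approach also buys something the paper leaves implicit: the paper bakes the evenness of all clique sizes (and the absence of isolated singleton cliques) into the notation $m_{2k}$ without justification, and its inductive step asserts without proof that the surgered graph is again a valid flag graph -- a step your argument avoids entirely, since you derive evenness structurally (middle-colour cliques have size exactly $2$ because the only level-$1$ vertices between $(g^0,h^0)$ and $(g^1,h^1)$ are $(g^1,h^0)$ and $(g^0,h^1)$; outer-colour cliques have size equal to a vertex degree in an input graph, which is even by hypothesis). The one caveat is scope: the appendix claims its unfolding for any generic rainbow code built from a length-$2$ chain complex, whereas your evenness argument invokes the Cartesian-product origin of the flag graph and the even-degree assumption on the input graphs; outside that setting evenness is not automatic and must be taken as a standing assumption, exactly as the paper's notation tacitly does, so your proof is if anything more honest about where the hypothesis enters, at the modest price of being stated for the product construction rather than for arbitrary $2$-dimensional complexes.
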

\begin{proof}
    Let's prove this lemma by induction on each $m_{2k}$ with $k\geq 2$. 
    We say that a flag graph has \textbf{clique cardinality} $(m_{2i})_{i\geq 1}$ if it has $m_{2i}$ cliques of size $i$.

    For the base case, let's consider a flag graph of clique cardinality $(m_2,0,0,\ldots)$. 
    The only type of flag graph with no cliques of size higher than $2$ are $2$-rainbow subgraphs. Since $2$-rainbow subgraphs are cycle graphs, the number of vertices is equal to the number of edges, or equivalently to the number of $2$-cliques $m_2$. Therefore $v=m_2$ and \cref{eq:vertex-clique-relation} is satisfied.
    
    Let's assume that \cref{eq:vertex-clique-relation} is true for any flag graph of clique cardinality $(m_{2i})_{i\geq 1}$. Let $k\geq 2$. Let's show that \cref{eq:vertex-clique-relation} is also true for a flag graph $\mathcal{F}$ of clique cardinality $(m_2,\ldots,m_{2k-2},m_{2k}+1,m_{2k+2},\ldots)$, that is, with one more clique of size $2k$. $\mathcal{F}$ has at least one $2k$-clique of size $2k$. Let's select one, partition its vertices into $k$ pairs, and delete every edge that does not correspond to a pair of the partition. This new graph, that we call $\widetilde{\mathcal{F}}$, is a valid flag graph, with the same number of vertices as $\mathcal{F}$, but one less $2k$-clique and $k$ more $2$-cliques (corresponding to the edges that have not been deleted). Therefore, $\widetilde{\mathcal{F}}$ has clique cardinality $(m_2+k,\ldots,m_{2k-2},m_{2k},m_{2k+2},\ldots)$. By the induction hypothesis, we know that the number of vertices of $\widetilde{\mathcal{F}}$ is given by \cref{eq:vertex-clique-relation}:
    \begin{align}
        v = (m_2+k) + 2m_4 + \ldots + km_{2k} + \ldots
    \end{align}
    which we can rewrite as:
    \begin{align}
        v = m_2 + 2m_4 + \ldots + k(m_{2k}+1) + \ldots
        \label{eq:vertex-clique-relation-induction-step}
    \end{align}
    Since the number of vertices is the same for $\widetilde{\mathcal{F}}$ and $\mathcal{F}$, we can deduce that \cref{eq:vertex-clique-relation-induction-step} gives the correct identity and \cref{eq:vertex-clique-relation} is verified for $\mathcal{F}$.
\end{proof}

Therefore, the number of ancilla qubits is given by

\begin{align}
    \begin{split}
        A &= n_A - n_B \\
        &= v - \sum_{k=1}^{\infty} m_{2k} \\
        &= \sum_{k=1}^{\infty} k m_{2k} - \sum_{k=1}^{\infty} m_{2k} \\
        A &= \sum_{k=2}^{\infty} (k-1) m_{2k} = m_4 + 2m_6 + 3 m_8 + \ldots
    \end{split}
\end{align}

The number of ancilla qubits can also be expressed in terms of the \textit{rainbow rank} of $\mathcal{G}$, a result that will be useful when constructing the mapping $\mathcal{M}$ in the next section. We define the \textbf{rainbow rank} of a flag graph as the number of independent rainbow subgraphs. By independence of rainbow subgraphs, we mean that the corresponding binary vectors (with a component for each vertex and a $1$ whenever a vertex is included in the subgraph) are linearly independent on $\mathbb{Z}_2$.
The following lemma shows that the rainbow rank is directly related to the number of ancilla qubits.

\begin{figure}
    \centering
    \subfloat[]{
        \includegraphics[width=0.3\textwidth]{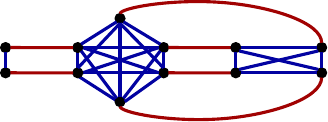}
        \label{fig:lemma-6-flag-graph}
    }
    \subfloat[]{
        \includegraphics[width=0.26\textwidth]{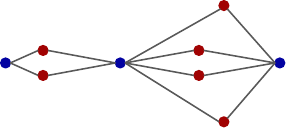}
        \label{fig:lemma-6-clique-graph}
    }
    \caption{Illustration of the proof of \cref{lemma:r-A-relations}.\textbf{(a)} Flag graph with one $6$-clique, one $4$-clique, and a rainbow rank $r=4$. \textbf{(b)} Corresponding clique graph, with one node per clique and an edge per connection between two cliques. Rainbow subgraphs of the flag graph correspond to cycles of the clique graph.}
    \label{fig:lemma-6}
\end{figure}

\begin{lemma} 
    The rainbow rank $r$ of a flag graph satisfies the following identity:
    \begin{align}
        r = 1 + \sum_{k=2}^{\infty} (k-1) m_{2k} = 1+A
    \label{eq:r-A-relations}
    \end{align}
    \label{lemma:r-A-relations}
\end{lemma}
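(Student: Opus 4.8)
The plan is to prove the identity by realising the rainbow rank as the circuit rank (first Betti number) of the auxiliary \emph{clique graph} $\Gamma$ depicted in \cref{fig:lemma-6-clique-graph}. This graph has one node for each $1$-maximal subgraph (clique) of $\mathcal{G}$, and since every vertex of $\mathcal{G}$ lies in exactly one $c_a$-clique and one $c_b$-clique, I place one edge of $\Gamma$ for each vertex of $\mathcal{G}$, joining the two cliques that contain it. This makes $\Gamma$ bipartite, with the two parts given by the $c_a$- and the $c_b$-cliques, and connected, since a path in $\mathcal{G}$ projects to a walk in $\Gamma$ and $\mathcal{G}$ is connected by assumption.

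The heart of the argument is to identify the rainbow subgraphs of $\mathcal{G}$ with the cycle space of $\Gamma$. First I would set up the $\mathbb{Z}_2$-linear bijection between $\mathbb{Z}_2^{V(\mathcal{G})}$ and $\mathbb{Z}_2^{E(\Gamma)}$ induced by the vertex-to-edge correspondence above. Under this map a $2$-rainbow subgraph of $\mathcal{G}$ -- a connected $2$-regular subgraph alternating between $c_a$- and $c_b$-edges -- becomes a simple cycle of $\Gamma$, because successive vertices of the rainbow subgraph share alternately a $c_a$- and a $c_b$-clique, which is exactly an alternating cycle of $\Gamma$; by bipartiteness every cycle of $\Gamma$ has this alternating form, so conversely every simple cycle of $\Gamma$ pulls back to such a rainbow subgraph. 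Since the rainbow subgraphs are precisely the images of the simple cycles, their $\mathbb{Z}_2$-span equals the cycle space of $\Gamma$, and hence the rainbow rank $r$ equals the dimension of this cycle space, i.e. the circuit rank of $\Gamma$.

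It then remains to compute the circuit rank. Because $\Gamma$ is connected, its circuit rank is $|E(\Gamma)| - |V(\Gamma)| + 1$. Here $|E(\Gamma)| = v$ is the number of vertices of $\mathcal{G}$, which equals $\sum_{k \geq 1} k\, m_{2k}$ by \cref{lemma:vertex-clique-relation}, while $|V(\Gamma)| = \sum_{k \geq 1} m_{2k}$ is the total number of cliques. Subtracting gives
\begin{equation*}
    r = \sum_{k \geq 1} k\, m_{2k} - \sum_{k \geq 1} m_{2k} + 1 = 1 + \sum_{k=2}^{\infty}(k-1)m_{2k} = 1 + A,
\end{equation*}
which is the claimed identity.

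The main obstacle I anticipate is the middle step: carefully justifying that the $\mathbb{Z}_2$-span of the rainbow-subgraph vectors is the \emph{entire} cycle space of $\Gamma$, rather than merely a subspace. This relies on the facts that $2$-rainbow subgraphs are connected (so that each maps to a single simple cycle and not a disjoint union of cycles) and that the simple cycles generate the cycle space of any graph; the bipartiteness of $\Gamma$ is what guarantees the colour-alternation condition is automatic, so that no cycle of $\Gamma$ is lost in the correspondence.
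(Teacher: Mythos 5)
Your proposal is correct and takes essentially the same route as the paper: you build the same auxiliary clique graph (one node per $1$-maximal subgraph, one edge per vertex of $\mathcal{G}$ joining its two incident cliques), identify the span of rainbow subgraphs with the cycle space of that graph, and evaluate the circuit rank $\tilde{e}-\tilde{v}+1$ via \cref{lemma:vertex-clique-relation}. The only difference is that you make explicit the $\mathbb{Z}_2$-linear vertex-to-edge correspondence, the bipartiteness, and the fact that simple cycles span the full cycle space, steps the paper asserts without proof.
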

\begin{proof}
    To prove this identity, let's define the \textit{clique graph} $\mathcal{C}(\mathcal{G})$ associated to $\mathcal{G}$ as a graph where each clique of $\mathcal{G}$ correspond to a vertex in $\mathcal{C}(\mathcal{G})$, and each edge of $\mathcal{G}$ connecting the vertices of two cliques becomes an edge connecting those two cliques in $\mathcal{C}(\mathcal{G})$. An example of clique graph is shown in \cref{fig:lemma-6}. Note that the clique graph is in reality a multigraph, as multiple edges can connect a given pair of nodes. The clique graph has the following useful properties:
    \begin{enumerate}
        \item It is a connected graph where each vertex has a degree equal to the size of the corresponding clique.
        \item Cycles of $\mathcal{C}(\mathcal{G})$ are in one-to-one correspondence with rainbow subgraphs of $\mathcal{G}$
    \end{enumerate}
    Denoting $\tilde{v}$ and $\tilde{e}$ the number of vertices and edges of $\mathcal{C}(\mathcal{G})$, we can deduce from those properties that the rainbow rank is given by
    \begin{align}
        r = \tilde{e} - \tilde{v} + 1
        \label{eq:r-e-v-relation}
    \end{align}
    Moreover, the number of vertices can easily be obtained as $\tilde{v}=\sum_{k \geq 1} m_{2k}$, while the number of edges is given by
    \begin{align}
        \tilde{e} = \frac{1}{2} \sum_{k \geq 1} 2k m_{2k}
    \end{align}
    Indeed, each $2k$-clique corresponds to a vertex of degree $2k$, and we divide the sum by two to avoid over-counting edges. Inserting this in \cref{eq:r-e-v-relation}, we get
    \begin{align}
        \begin{split}
            r &= 1 + \sum_{k \geq 1} k m_{2k} - \sum_{k \geq 1} m_{2k} \\
            &= 1 + \sum_{k \geq 2} (k-1) m_{2k}
        \end{split}
        \label{eq:r-e-v-relation}
    \end{align}
    which is the desired formula.
\end{proof}

\subsection{Mapping construction}

The mapping $\mathcal{M}$ is constructed by describing how generators of $O_A$ are mapped into operators supported on the cliques of $\mathcal{G}$. For a $Z$ generator supported on the vertices of a clique $\mathcal{C}$, we would like to map it into a single $Z$ operator living on $\mathcal{C}$. However, we note that the product of all the $Z$ generators of $O_A$ gives the identity, while the corresponding single-$Z$ operators does not cancel. Therefore, such mapping would map the identity operator into a non-identity operator, and thus cannot be an isomorphism. To solve this issue, we apply the mapping described above to all but one clique. For this clique, we transform the input generator by multiplying it to an all $X$ operator (which is an element of the center of $O_A$), and map this new generator to the single $Z$ operator on the clique. You can find an illustration of this in \cref{fig:unfolding-map-z}.

For $X$ operators, the idea is to map operators supported on the vertices of an edge into an operator acting on the two cliques incident to that edge. However, we have the converse problem as for $Z$ operators: every rainbow subgraph supports two non-trivial $X$ generators that would map to the identity, namely the product of all the $c$-colored edges for each choice of colors $c\in \{c_a,c_b\}$. To circumvent this issue, we make use of the ancilla qubits in the following way. Let's consider operators associated to edges of color $c\in \{c_a,c_b\}$. Remember from \cref{lemma:r-A-relations} that there are exactly $A=r-1$ ancilla qubits. Let's choose a basis of rainbow subgraphs $R_1,\ldots,R_r$. For each rainbow subgraph $R_k$ with $k \leq r-1$, we apply the mapping described above on all the $c$-colored edges but one. On the last edge, we multiply the output operator by an $X$ operator on the $k$th ancilla qubit. For the last edge of the last rainbow $R_r$, we transform the input generator by multiplying it by an all-vertex $X$ operator. The output can be obtained by applying the mapping on each $c$-colored edge for which it is already described. Note that all other edges of $\mathcal{G}$ can be written as a product of the edges already described, and the previous discussion therefore completely describes $\mathcal{M}$. An illustration of this is shown in \cref{fig:unfolding-map-x}.

\begin{figure*}[t!]
    \centering
    \subfloat[]{
        \includegraphics[valign=c,width=0.4\textwidth]{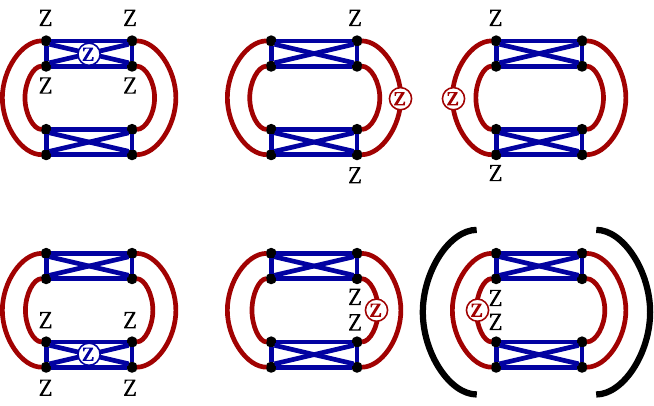}
        \vphantom{\includegraphics[valign=c,width=0.4\textwidth]{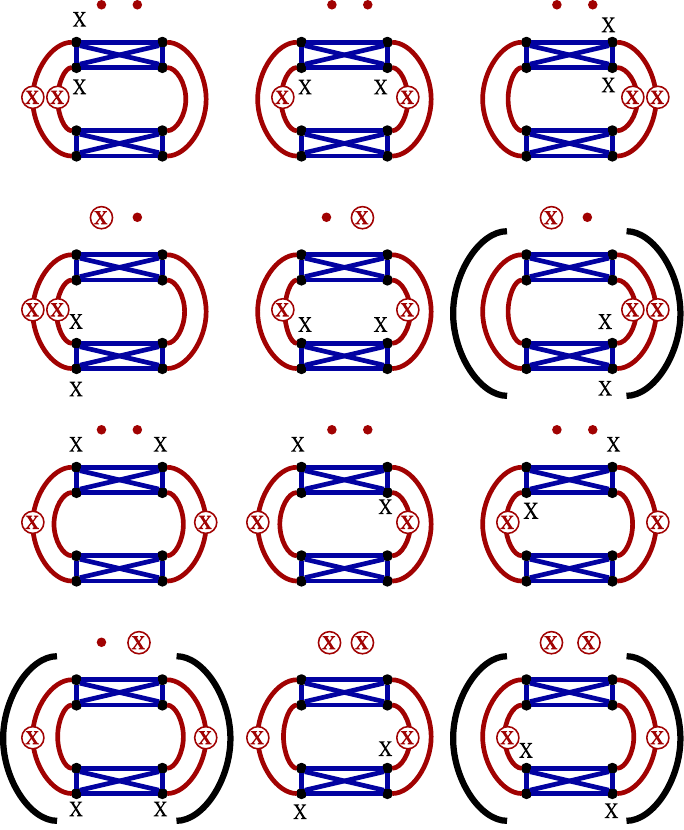}}
        \label{fig:unfolding-map-z}
    }
    \subfloat[]{
        \includegraphics[valign=c,width=0.4\textwidth]{figures/unfolding-map-x.pdf}
        \label{fig:unfolding-map-x}
    }
    \caption{Example of unfolding map on an overlap group. The black $X$s and $Z$s correspond to operators in $O_A$, while circled red/blue $X$s and $Z$s correspond to the output operators in $O_B$. Following the convention of Ref.~\cite{kubica2015unfolding}, we use parentheses when the input is multiplied by elements of the centralizer, which correspond here to an all $X$/$Z$ operator. \textbf{(a)} $Z$ operators on the vertices of every clique are mapped to single-qubit $Z$ operators acting on the clique, except for one operator (the bottom-right in this example), which is only mapped after multiplication by a $Z$ operators acting on all vertices. \textbf{(b)} $X$ operators on edges are mapped to the incident cliques of the opposite color. We look here at red edges only, but the same reasoning applies to blue edges. Dots on top of each graph represent ancilla qubits. The mapping on the first two rows are derived by choosing a basis of $r=3$ rainbow subgraphs (one rainbow subgraph per column) and making use of an ancilla qubit for the last edge of the first $r-1=2$ rainbow subgraphs. For the last edge of the last rainbow subgraph, we need to multiply the input operator by an all-vertex $X$ operator. The ancilla operators of the output are determined by performing this product and using all previously-constructed elements of the mappings. The third and fourth columns can be determined by linearity using elements of the first two columns}
    \label{fig:unfolding-map}
\end{figure*}

\subsection{Validity of the mapping}

To prove that the mapping is valid, we first need to show that the two groups $O_A$ and $O_B$ are indeed isomorphic (by proving that the two groups and their centers have the same number of generators), and then the mapping $\mathcal{M}$ preserves the commutation relations.

\paragraph{Existence of an isomorphism between $O_A$ and $O_B$} We show here that the number of generators of the two overlap groups are equal, as well as the number of generators of their center. Let's start by the group themselves:
\begin{proposition}
    The number of generators of the two overlap groups is given by
    \begin{align}
        G(O_A)=G(O_B)=m+v-2
    \end{align}
    where $m=\sum_{k=1}^\infty m_{2k}$ is the total number of $1$-maximal subgraphs and $v$ is the number of vertices of the flag graph.
\end{proposition}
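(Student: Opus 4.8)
The plan is to compute the two generator numbers separately and check that they agree. I regard each overlap group as a Pauli subgroup specified by its binary symplectic image, so that $G(\mathcal{O})$ is the $\mathbb{Z}_2$-dimension of the span of the symplectic vectors of a generating set. Since $O_A$ is generated by the pure-$X$ edge operators (a pair of $X$s on the two endpoints of each edge of $\mathcal{G}$) and the pure-$Z$ clique operators, this dimension splits cleanly as an $X$-rank plus a $Z$-rank, and I would evaluate each.

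For the $X$-rank, the edge generators contribute symplectic vectors $e_u+e_w\in\mathbb{Z}_2^v$, one for each edge $\{u,w\}$; these are precisely the columns of the vertex--edge incidence matrix of $\mathcal{G}$, whose $\mathbb{Z}_2$-rank is $v$ minus the number of connected components. As a $\{c_a,c_b\}$-maximal subgraph is connected by definition, the $X$-rank is $v-1$. For the $Z$-rank, the clique generators contribute the vertex-indicator vectors of the $c_a$- and $c_b$-cliques. Because every vertex of $\mathcal{G}$ lies in exactly one clique of each of the two colours, summing all $m$ indicators counts each vertex twice, giving the relation $\sum_{\mathcal{C}}\mathbb{1}_{\mathcal{C}}=0$. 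To see this is the \emph{only} relation, I would note that any relation $\sum_{\mathcal{C}\in S}\mathbb{1}_{\mathcal{C}}=0$ forces, at each vertex, its $c_a$- and $c_b$-cliques to be simultaneously in or out of $S$; identifying cliques with nodes of the clique graph $\mathcal{C}(\mathcal{G})$ of \cref{lemma:r-A-relations} and vertices with its edges, this says $S$ is a union of connected components of $\mathcal{C}(\mathcal{G})$. Since $\mathcal{C}(\mathcal{G})$ is connected, $S$ is empty or everything, the relation space is one-dimensional, and the $Z$-rank is $m-1$. Hence $G(O_A)=(v-1)+(m-1)=m+v-2$.

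For $O_B$ I would count the images under $\mathcal{M}$. The $m$ clique generators map to single-qubit $Z$ operators on the $m$ clique qubits, and all $m$ of them genuinely occur in $O_B$: the one redundant clique generator is mapped only after multiplication by the central operator $X_{\mathrm{all}}$, which reroutes that redundancy to the $Z$-sector, so the $Z$-rank is $m$ rather than $m-1$. The edge generators map to pairs of $X$ on the two opposite-colour cliques incident to each edge; on the clique qubits alone these span only $(v-1)-r=m-2$ dimensions, the deficit being the $r$ independent rainbow subgraphs, each of which contributes an $X$-product collapsing to the identity (here $r=A+1$ by \cref{lemma:r-A-relations}). The $A$ ancilla qubits restore $A$ of these dimensions, so the $X$-images attain rank $(m-2)+A=v-2$, using $A=v-m$ (from the definition of $A$ together with \cref{lemma:vertex-clique-relation}). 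Adding the sectors gives $G(O_B)=m+(v-2)=m+v-2$, matching $G(O_A)$.

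The main obstacle is the pair of rank-uniqueness claims. For $O_A$ it is showing the single manifest clique relation is the only one, which I reduce to connectivity of $\mathcal{C}(\mathcal{G})$, already supplied by \cref{lemma:r-A-relations}. For $O_B$ it is verifying that the $A$ ancilla corrections are mutually independent and independent of the clique pair operators, i.e. that the ancillas cancel exactly the $r-1$ rainbow relations not absorbed by the $X_{\mathrm{all}}$-modification; this is precisely where the identity $A=r-1$ does the essential bookkeeping. It is worth emphasising that establishing $G(O_A)=G(O_B)$ here is what will later certify, in combination with the centre and commutation conditions of \cref{theorem:kubica-unfolding}, that $\mathcal{M}$ is injective and hence a genuine isomorphism.
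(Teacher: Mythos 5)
Your proof is correct and takes essentially the same route as the paper's: count the $X$- and $Z$-sectors of each overlap group separately, obtaining $v-1$ and $m-1$ for $O_A$ and $m$ and $(v-1-r)+A$ for $O_B$, with \cref{lemma:r-A-relations} ($A=r-1$) doing the final bookkeeping. The only local differences are cosmetic: you certify uniqueness of the global clique relation via connectivity of the clique graph $\mathcal{C}(\mathcal{G})$ and compute the edge rank from the $\mathbb{Z}_2$ vertex--edge incidence matrix, where the paper instead argues that any further $Z$-relation would have to be supported on a full $2$-maximal subgraph and counts the $X$-relations by the circuit rank $e-v+1$ --- equivalent formulations of the same facts.
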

\begin{proof}
    Let's start by computing the number of independent generators of $O_A$. The group is generated by $m$ $Z$ operators defined on the vertices of the $1$-maximal subgraphs, and $e$ $X$ operators defined on the $e$ edges of the graph. The $Z$ operators, that we denote $S_i^Z$, have a single global relation between them, given by
    \begin{align}
        \prod_i S_i^Z = I.
    \end{align}
    Indeed, each vertex is at the intersection of exactly two $1$-maximal subgraphs, so taking their product gives an identity operator on each vertex. 
    We can show that this is the only relation on $Z$ operators by contradiction. Let's assume that there exists another relation between the $S_i^Z$. Let's consider the subgraph made of all the $1$-maximal subgraphs involved in this relation. Each vertex of this subgraph must be connected to exactly two maximal subgraphs of this set (it must be even due to the relation, non-zero since we are including all the vertices in all the $1$-maximal subgraphs of this subgraph, but cannot be more than two). Therefore, this subgraph must be a $2$-maximal subgraph, which is only possible if it is the full $2$-maximal subgraph. Therefore, the only relation between $Z$ operators is the global relation involving all the $1$-maximal subgraphs. Hence, there are $m-1$ independent $Z$ operators in $O_A$

    Let's now take a look at $X$ operators in $O_A$. Since those are supported on the two vertices of each edge of the graph, simple cycles gives the relations between them. The number of simple cycles, or circuit rank, of any connected graph is given by $e-v+1$. Therefore, the graph contains $e-(e-v+1)=v-1$ independent $X$ operators. 

    Adding up $X$ and $Z$ operators, we found that
    \begin{align}
        G(O_A)=m+v-2
    \end{align}
    proving the first part of the proposition.

    We now turn to $O_B$. It contains single-qubit $Z$ operators for each $1$-maximal subgraph, which are all independent as single-qubit operators. There are therefore exactly $m$ independent $Z$ operators in $O_B$. On the other hand, $X$ operators are defined for each edge on the pair of $1$-maximal subgraphs incident to its two vertices. It is easy to check that each cycle of the graph defines a relation on those operators. 
    However, those are not the only relations: every rainbow cycle contains exactly two relations: one for each color of the rainbow cycle, since the product of the operators associated to all edges of the same color in a rainbow cycle is the identity. Therefore, there is a total of $(e-v+1)+r$ relations, and $e-(e-v+1)-r=v-r-1$ independent $X$ operators associated to edges. Adding this up with the number of ancilla qubits (as we have one independent $X$ operator per ancilla), we get:
    \begin{align}
        G(O_B)=m+v-r-1+A=m+v-2
    \end{align}
    where we used \cref{lemma:r-A-relations} to remove $A$ and $r$ from the equation.
\end{proof}
To finish the proof that $O_A$ and $O_B$ are isomorphic, we still need to show that $G(Z(O_A))=G(Z(O_B))$. This is the content of the following proposition
\begin{proposition} The number of generators of the centers of the two overlap groups is given by
\begin{align}
    G(Z(O_A))=G(Z(O_B))=r+1
\end{align}
\end{proposition}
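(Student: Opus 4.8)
The plan is to treat each overlap group, viewed as an $\mathbb{F}_2$-vector space of Paulis modulo phase, as a symplectic space under the commutator, so that its center is exactly the radical of that form. Both $O_A$ and $O_B$ are generated by a family of mutually commuting $Z$-type operators (indexed by the $1$-maximal subgraphs) together with a family of mutually commuting $X$-type operators (indexed by edges, and on the $B$-side dressed with ancilla operators). Since the two families are internally commuting, the commutator form is carried entirely by the $X$--$Z$ pairing. An element $P=P_XP_Z$ is therefore central iff $P_X$ commutes with every $Z$-generator and $P_Z$ commutes with every $X$-generator independently, so the center splits as a direct sum of a purely $X$-type part and a purely $Z$-type part, and it suffices to find the dimension of each.

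First I would invoke the fact that a bilinear form has equal left and right rank. Applied to the $X$--$Z$ pairing this gives $\dim(X\text{-central}) - \dim(Z\text{-central}) = \dim(X\text{-part}) - \dim(Z\text{-part})$, whose right-hand side is already known from the preceding proposition: for $O_A$ the $Z$- and $X$-parts have dimensions $m-1$ and $v-1$, while for $O_B$ they have dimensions $m$ and $v-2$. Consequently it is enough to compute the $Z$-type central subgroup on each side, after which the total center dimension follows by arithmetic, and in particular I never have to analyse the ancilla-dressed $X$-generators of $O_B$ directly.

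Next I would classify the $Z$-type central elements. Such an element is a product $\prod_{i\in T}S_i^Z$ over a set $T$ of cliques that commutes with every edge operator. Labelling each clique by its membership in $T$, commutation with the edge $\{u,v\}$ (say of colour $c_a$, so $u,v$ share a $c_a$-clique but lie in distinct $c_b$-cliques $L_u,L_v$) reduces to $L_u$ and $L_v$ carrying equal labels. Because every pair of vertices inside a clique is joined by an edge, this forces all $c_b$-cliques meeting a given $c_a$-clique to agree, and symmetrically for $c_b$-edges; since the clique graph is connected and bipartite, all $c_a$-cliques then share one label and all $c_b$-cliques another, leaving exactly four labelings. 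The identical count governs $O_B$, whose edge-derived $X$-generator acts as $X$ on the two opposite-colour clique qubits $L_u,L_v$ (the ancilla dressing being irrelevant to commutation with $Z$-type operators) and hence anticommutes with precisely the same pair of $Z$-generators.

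The key subtlety, and the step I expect to be the main obstacle, is that these four labelings collapse differently on the two sides. In $O_A$ the global relation $\prod_i S_i^Z=I$ identifies the "all cliques'' labeling with the empty one and makes the two single-colour labelings equal to the common all-vertex operator $Z^{\otimes V}$, so the $Z$-type central subgroup is $\{I,Z^{\otimes V}\}$ of dimension $1$. In $O_B$ the $Z$-generators are independent single-qubit operators on distinct clique qubits, so $Z$ on all $c_a$-cliques and $Z$ on all $c_b$-cliques are independent and the $Z$-type central subgroup has dimension $2$. Feeding these into the rank identity, and using $r=v-m+1$ from \cref{lemma:r-A-relations}, gives $X$-central dimension $r$ for $O_A$ and $r-1$ for $O_B$, so the centers have dimension $1+r$ and $2+(r-1)$ respectively, i.e. $r+1$ in both cases. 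The only real care needed is to notice that the $Z$-central dimensions genuinely differ between the two groups and that this difference is exactly cancelled by the $X$-central dimensions, which is what makes the final totals coincide.
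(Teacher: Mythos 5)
Your proof is correct, but it takes a genuinely different route from the paper's for half of the count. The paper proves the proposition by directly exhibiting the central elements: for $Z(O_A)$, the $r$ independent $X$ operators supported on rainbow subgraphs (obtained as products of all edges of one colour) plus the single all-vertex $Z$; for $Z(O_B)$, the $A=r-1$ ancilla $X$ operators plus the two all-clique $Z$ operators, one per colour. Your treatment of the $Z$-type part is essentially a rigorous version of the same idea -- your four-labeling classification via clique-graph connectivity, collapsing to dimension $1$ in $O_A$ through the relation $\prod_i S_i^Z = I$ and staying at dimension $2$ in $O_B$, reproduces and actually \emph{proves} the completeness that the paper only asserts ("there is only one $Z$ operator in $Z(O_A)$"). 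Where you diverge is the $X$-type part: instead of identifying the central $X$ operators with rainbow subgraphs and ancillas, you extract their dimension for free from the equal left/right rank of the $X$--$Z$ commutation pairing, using $d_X - d_Z$ from the preceding proposition and $r = v - m + 1$ from \cref{lemma:r-A-relations}. This buys you exactness without ever having to argue that the rainbow-subgraph $X$ operators exhaust the $X$-central elements (another completeness step the paper leaves implicit), at the cost of making the result dependent on the generator counts $G(O_A)=G(O_B)=m+v-2$ rather than self-contained, and of being less explicit: the paper's concrete identification of central elements with rainbow subgraphs and ancilla qubits is what is reused later when verifying the mapping $\mathcal{M}$ and deriving the tensor-product structure of $C_B$, so your abstract count, while cleaner as a proof of this proposition alone, would not substitute for that identification elsewhere in the appendix.
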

\begin{proof}
    Let's start by determining the center of $O_A$, that is, all the elements that commute with all its generators. Remember that rainbow subgraphs have an even intersection with every clique. Therefore, since $Z$ operators live on the vertices of cliques, any $X$ operator supported on the vertices of a rainbow subgraph commutes with all the $Z$ operators. We can obtain such an $X$ operators by taking the product of $X$ generators on all the edges of one color in the rainbow subgraph. Therefore, such an operator belongs to $Z(O_A)$. Since there are $r$ independent rainbow subgraphs, there are $r$ independent $X$ generators in $Z(O_A)$. On the other hand, there is only one $Z$ operators in $Z(O_A)$, consisting in a $Z$ on all vertices of the flag graph. Therefore, $G(Z(O_A))=r+1$

    In $O_B$, every clique supports a single $Z$ operators, so the center does not contain any $X$ operator supported on cliques. However, $X$ operators on the ancilla qubits belong to the center. Therefore, there are exactly $A=r-1$ independent $X$ operators in $Z(O_B)$. Moreover, $Z$ operators supported on all cliques of a single color also commute with all $X$ operators. Since there are two colors in the flag graphs that we are considering, there are two independent $Z$ operators in $Z(O_B)$, giving a total number of generators $G(Z(O_B))=r+1$.
\end{proof}

\paragraph{Commutation relations} The final step to prove that there exists a unitary operator implementing the mapping $\mathcal{M}$ described above is to check that it preserves the commutation relation.

Let's choose an $X$ generator $g_X \in O_A$ and a $Z$ generator $g_z \in O_A$. Since $g_X$ is supported on two vertices, the intersection between $g_X$ and $g_Z$ must have weight $0$, $1$ or $2$. 

Let's start with the case where the two operators do not overlap. In this case, the edge that supports $g_X$ has no intersection with the clique that supports $g_Z$. Therefore, the two cliques in the support of $\mathcal{M}(g_X)$ will have no intersection with the clique corresponding to $\mathcal{M}(g_Z)$, and the commutation relations are preserved.

Let's now consider the case where the two operators overlap on two qubits. In this case, the edge corresponding to $g_X$ must be part of the clique corresponding to $g_Z$. Since $\mathcal{M}(g_X)$ is supported on a clique of a different color than the edge associated to $g_X$ (and therefore to the clique associated to $g_Z$), the two operators will have no overlap once $\mathcal{M}$ has been applied.

Finally, let's looks at the case where the two operators overlap on one qubit. This means that the edge associated to $g_X$ is adjacent to the clique associated to $g_Z$. Thus one of the two cliques in the support of $\mathcal{M}(g_X)$ is the one associated to $g_Z$, and the overlap of the mapped operators consists in exactly one qubit. This shows that the commutation relations are preserved by the mapping.

\subsection{Resulting code $C_B$}

Let's now prove that the code $C_B$ resulting from the mapping $\mathcal{M}$ can be written as the tensor product of two codes, one supported on $c_a$-colored cliques and one supported on $c_b$-colored cliques. To see this, let's take a look at the different types of stabilizers of $C_A$ and see how they transform through $\mathcal{M}$. 

Let's start with stabilizers supported on any of the graphs $\mathcal{G}_i$. Since $Z$ stabilizers correspond to $2$-maximal subgraphs, we should study how the $Z$ operator acting on all the vertices of $\mathcal{G}_i$ is mapped. Since for one of the clique, of color $c \in \{c_a,c_b\}$, we modified the generators via a product of $X$s, it means that the $Z$ stabilizers maps to a $Z$ operator acting on all the cliques of color opposite to $c$. 
On the other hand, multiplying by the $Z$ stabilizer by an $X$ stabilizer acting on all vertices of $\mathcal{G}_i$ (which is a stabilizer as a product of rainbow subgraphs), we get a $Y$ stabilizer that maps to a $Z$ operator acting on all the cliques of color $c$. 
Moreover, as established in the previous section, each $X$ stabilizer acting on a rainbow subgraph is mapped to $X$ operators on ancilla qubits. 
Therefore, each $\{c_a,c_b\}$-maximal subgraph supports exactly two operators: one acting all $c_a$-colored cliques, and one acting on all $c_b$-colored cliques.

Let's now look at stabilizers supported on a $\{c_a,c_c\}$-maximal subgraph, where $c_c \notin \{c_a,c_b\}$. A similar reasoning will also apply to $\{c_b,c_c\}$-maximal subgraphs. Let's first consider the $Z$ stabilizer supported on all vertices of this maximal subgraph. After a potential application of some $X$ stabilizers acting on all the vertices of some of the graphs $\mathcal{G}_i$, it maps to a $Z$ operator acting on all the $c_a$-colored cliques. Similarly, rainbow subgraphs can be decomposed into edges of color $c_a$, which map to $c_b$-colored cliques after the potential application of some $X$ operators.

Therefore, all the stabilizers of $C_B$ act either on $c_a$-colored or $c_b$-colored cliques, with no mixing between the two. Thus we can deduce that $C_B$ is a tensor product of two codes, $C_B=C_B^a \otimes C_B^b$, where $C_B^a$ corresponds to $c_a$-colored cliques and $C_B^b$ corresponds to $c_b$-colored cliques.

An illustration of such a mapping of stabilizers is shown in \cref{fig:unfolding-stabilizer-mapping}

\begin{figure}
    \centering
    \includegraphics[width=0.49\textwidth]{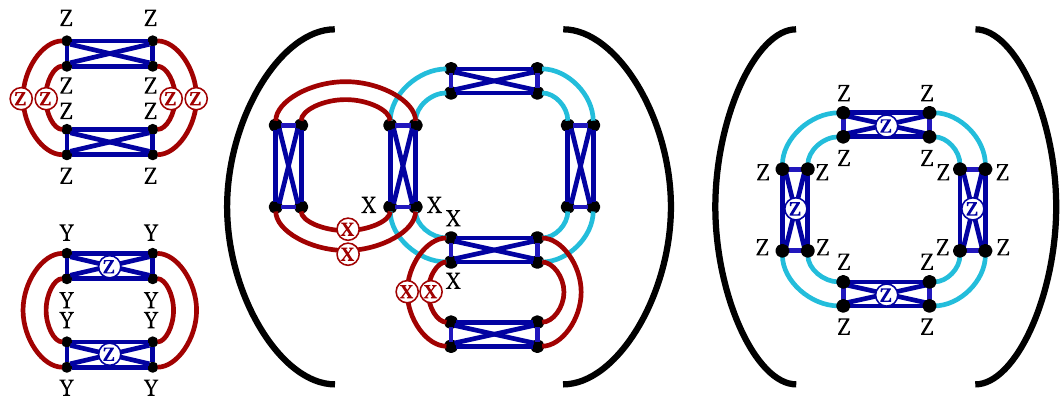}
    \caption{Unfolding of stabilizers}
    \label{fig:unfolding-stabilizer-mapping}
\end{figure}

\section{Constructing Maximal and Rainbow Subgraphs of Simplex Graphs}
\label{app:algorithms}
In this Appendix, we set out algorithms for constructing maximal and rainbow subgraphs that are  used to define check matrices for rainbow codes.
We assume that we are given a \textbf{simplicial complex} of dimension $D$ and that each $0$-cell has been allocated a colour from the range $[0..D]$. 

From the simplicial complex, we construct a \textbf{simplex graph} as outlined in \ref{subsection:rainbow_simplex} in which the vertices are the $D$-dimensional cells of the complex. Vertices are connected by an edge of colour $c$ if they differ only by a $0$-cell of colour $c$. 

Quantum codes are constructed by identifying qubits with the vertices of the simplex graph. Checks are identified by subgraphs of the simplex graph which are defined in terms of a set of colours $S := \{c_0,..,c_{d-1}\} \subseteq [0..D]$ and are of the following types:
\begin{enumerate}
    \item \textbf{Maximal Subgraphs:} subgraphs of vertices connected by edges of colours included in the set $S$;
\item \textbf{Rainbow Subgraphs:} subgraphs such that, for each colour $c_i$ in $S$, each vertex in the subgraph is connected to another vertex in the subgraph by exactly one edge of colour $c_i$.
\end{enumerate}

\subsection{Maximal Subgraphs}
Maximal subgraphs of type $S$ can be obtained by finding connected components of the simplex graph where vertices are considered adjacent only if the edge is of colour $c_i \in S$. In \Cref{alg:MSG}, we build maximal subgraphs recursively from subgraphs with a smaller number of colours using a spanning tree algorithm. To calculate the spanning tree, we consider sets of subgraphs $F$ connected by a set of edges of colour $c_0$ - see \Cref{alg:ST}.
The algorithm for computing  maximal subgraphs has complexity linear in the number of edges and vertices of the simplex graph.
\begin{figure}[H]
\begin{algorithm}[H]
\caption{Maximal Subgraphs}\label{alg:MSG}
\begin{algorithmic}
\State\textbf{Input:} 
\State A list of colours c := [c0,..,cd]
\State A simplex graph G
\State\textbf{Output:}  
\State Maximal subgraphs of G of type c
\Function{MSG}{G,c}
\If{len(c) = 0}
\State return [[v] for v in vertices(G)]
\Else
\Comment{Recursive call to MSG with d-1 colours}
\State F = \Call{MSG}{G, c[1:]}
\Comment{Spanning tree - edges of colour c0}
\State CC,ST,Cycles = \Call{SpanningTree}{G,c0,F}
\State return [vertices(C) for C in CC]
\EndIf
\EndFunction
\end{algorithmic}
\end{algorithm}
\end{figure}
\begin{figure}[H]
\begin{algorithm}[H]
\caption{Subgraph Spanning Tree}\label{alg:ST}
\begin{algorithmic}
\State\textbf{Input:} 
\State A simplex graph G
\State A list of subgraphs F
\State A colour c0 in [0,..,D]
\State\textbf{Output:}  
\State Connected components, spanning tree and unvisited edges (which correspond to cycles) for subgraphs F joined by edges of colour c0
\Function{SpanningTree}{G,c0,F}
\State Ftodo := F
\State CC := list()
\State ST := dict()
\State Cycles := set()
\While{len(Ftodo) $>$ 0}
\State f1 := Ftodo.pop()
\State Fvisited:=set(f1)
\State ST[f1] := None
\ForAll{edges (v1 $\in$ f1, v2 $\in$ f2 $\ne$ f1) of colour c0}
\If{f2 not in Fvisited}
\Comment{First time visiting f2 - update spanning tree}
\State ST[f2] := (f1, v1, v2)
\State Fvisited.add(f2)
\Else:
\Comment{Edges not followed when generating the }\Comment{spanning tree correspond to cycles}
\State Cycles.add((f1,f2,v1,v2))
\EndIf
\EndFor
\State CC.append(Fvisited)
\State Ftodo = Ftodo  - Fvisited
\EndWhile
\State return CC, ST, Cycles
\EndFunction
\end{algorithmic}
\end{algorithm}
\end{figure}
\subsection{Rainbow Subgraphs}

Rainbow subgraphs are more complex to obtain in general than maximal subgraphs. Rainbow subgraphs of type $S$ are subgraphs of a maximal subgraph of type $S$.  Finding rainbow subgraphs exhaustively would require us to consider all possible combinations of vertices within each maximal subgraph of the corresponding type, and is worst-case exponential complexity in the number of vertices in the maximal subgraphs.
We demonstrate  below efficient algorithms for finding rainbow subgraphs in the special case for rainbow codes where the Z-type checks are two-colour rainbow subgraphs.



\subsubsection{Two-Colour Rainbow Subgraphs}

In \Cref{alg:RSG2}, we show how to construct two-colour rainbow subgraphs of type $S = \{c_0,c_1\}$. We do this by first constructing the maximal subgraphs $F$ of type $\{c_1\}$ using \Cref{alg:MSG}. We then construct a spanning tree where subgraphs in $F$ are connected by edges of colour $c_0$ using \Cref{alg:ST}. This results in a set of connected components, a spanning tree and a set of unvisited edges which correspond to cycles. Unvisited edges are represented by a tuple $(f_1, f_2, v_1, v_2)$ where $(v_1, v_2)$ is an edge of colour $c_0$ joining two maximal subgraphs $f_1, f_2$. Using \Cref{alg:STPath}, we find the common ancestor $f$ of $f_1$ and $f_2$ and the set of vertices joining $f, f_1$ and $f_2$ which form the rainbow subgraph.
The complexity of this algorithm is linear in the number of vertices and edges of the simplex graph, but also depends on the number of vertices in the rainbow subgraphs.

\begin{algorithm}[H]
\caption{Two-Colour Rainbow Subgraphs}\label{alg:RSG2}
\begin{algorithmic}
\State\textbf{Algorithm: Two-Colour Rainbow Subgraphs}
\State\textbf{Input:} 
\State A list of colours c := [c0,c1] of size 2
\State A simplex graph G
\State\textbf{Output:}  
\State Rainbow subgraphs of G of type c
\Function{RSG2}{G,c}
\State c0, c1 := c
\Comment{F = maximal subgraphs of type c1} 
\State F := MSG(G, [c1])
\Comment{ST = spanning tree where vertices are }\Comment{MSG of type c1 joined by edges of type c0}
\Comment{Cycles = unfollowed edges when generating ST}
\State CC, ST, Cycles := \Call{SpanningTree}{G,c0,F}
\State RSG := set() 
\For{(f1,f2,u,v) in Cycles} 
\Comment{Generate paths from f1 and f2 to the root }\Comment{of the spanning tree}
\State (fList1, uList1,vList1):= \Call{STpath}{ST, f1}
\State (fList2, uList2,vList2) := \Call{STpath}{ST, f2}
\Comment{There is a common ancestor f of f1 and f2 }\Comment{such that fList1[j] = fList2[k]}
\State f := fList1 $\cap$ fList2
\State j := fList1.indexof(f)
\State k := fList2.indexof(f)
\Comment{Vertices leading to common ancestor plus }\Comment{the connecting edge (u,v) form an RSG}
\State r := uList1[:j-1] $\cup$ vList1[:j-1] \State  $\cup$ uList2[:k-1] $\cup$ vList2[:k-1] $\cup$ \{u,v\} 
\State RSG.add(r)
\EndFor
\State return RSG
\EndFunction
\end{algorithmic}
\end{algorithm}
\begin{figure}
\begin{algorithm}[H]
\caption{Spanning Tree Path}\label{alg:STPath}
\begin{algorithmic}
\State\textbf{Input:} 
\State A spanning tree ST
\State A subgraph f
\State\textbf{Output:}  
\State Path from f to the root of ST which includes a list of subgraphs (fList), and lists of vertices forming edges (uList, vList) of colour c0
\Function{STpath}{ST,f}
\State fList := list(f)
\State uList := list()
\State vList := list()
\While {ST[f] is not None}
\State (fi, ui, vi) := ST[f]
\State fList.append(fi)
\State uList.append(ui)
\State vList.append(vi)
\State f := fi
\EndWhile
\State return fList, uList, vList
\EndFunction
\end{algorithmic}
\end{algorithm}
\end{figure}
\subsubsection{Multi-colour Colour Rainbow Subgraphs}
In this work, we construct CSS codes whose  Z-checks are associated with $2$-colour subgraphs and  X-checks with $D$-colour subgraphs.
As a result, Z-checks can be generated efficiently, whether they are maximal (using \Cref{alg:MSG}) or rainbow type (using \Cref{alg:RSG2}). X-checks which are of maximal type can also be generated efficiently, leaving X-checks of rainbow type $S$ where there are more than two colours in $S$.

In \Cref{alg:RSGKer}, we show how to generate the remaining X-checks of rainbow type $S$. We first calculate the maximal subgraphs of type $S$ using \Cref{alg:MSG}. Rainbow subgraphs of type $S$ are contained within a maximal subgraph of type $S$. Any X-check must also commute with the Z-checks and so be in the intersection of a maximal subgraph and the kernel of the Z-checks. The intersection of spans can be calculated using linear algebra techniques as set out in \Cref{alg:SpanIntersection}.
The kernel calculation dominates the complexity of this algorithm and accordingly the overall complexity is polynomial in the number of vertices in the simplex graph.
\begin{figure}[H]
\begin{algorithm}[H]
\caption{Multi-Colour Rainbow Subgraphs}\label{alg:RSGKer}
\begin{algorithmic}
\State\textbf{Input:} 
\State A list of colours c = [c0,..,cd] where d $>$ 1
\State A simplex graph G
\State A generating set of Z-checks SZ in the form of a binary s $\times$ n matrix
\State A generating set of Z-logicals LZ in the form of a binary k $\times$ n matrix
\State \textbf{Output:} 
\State Rainbow Subgraphs of type c
\Function{RSGKer}{G,c,SZ,LZ}
\Comment{X-checks which must commute with Z-checks}\Comment {and so are in the Kernel of SZ  modulo 2}
\State K := \Call{KerModN}{SZ,2}
\State RSG := list()
\Comment{RSG of type c are contained}\Comment{in the MSG of type c}
\For{f in \Call{MSG}{G, c}}
\Comment{Add the intersection f and K to RSG}
\State R := \Call{SpanIntersection}{K,f,2}
\State RSG.extend(R)
\EndFor
\State return RSG
\EndFunction
\end{algorithmic}
\end{algorithm}
\end{figure}

\begin{figure}[H]
\begin{algorithm}[H]
\caption{Intersection of Spans}\label{alg:SpanIntersection}
\begin{algorithmic}
\State\textbf{Input:} 
\State Two matrices generating matrices A, B for spans modulo N
\State\textbf{Output:}  
\State A generating matrix for the intersection of the spans
\Function{SpanIntersection}{A,B,N}
\State r := len(A)
\Comment{Kernel of transpose of A and B stacked}\Comment{modulo 2}
\State K := \Call{KerModN}{(A.T $|$ B.T),2}
\Comment{Extract first r columns of K}
\State K1 := K[:,:r]
\Comment{Matrix product modulo 2}
\State return \Call{MatMulModN}{K1, A,2}
\EndFunction
\end{algorithmic}
\end{algorithm}
\end{figure}
\section{Coloured Logical Paulis}

In this Appendix, we show how to generate the coloured Paulis as defined in Section IV.B of \cite{vuillot2022quantum}. 
We set out a method to calculate coloured logical Z operators. Swapping Z-checks and logicals with X-checks and logicals results in a method for logical X operators. One of the inputs to the algorithm is a generating set of logical Z operators which are not necessarily coloured logicals. These can be generated, for instance, by using the method in Section 10.5.7 of \cite{nielsen_quantum_2010}.
\begin{figure}[H]
\begin{algorithm}[H]
\caption{Coloured Logical Pauli Operators}\label{alg:ColouredLZ}
\begin{algorithmic}
\State\textbf{Input:} 
\State Set of colours c = \{c0,..,cd\}
\State A simplex graph G
\State Z-checks SZ in the form of a binary s $\times$ n matrix
\State Z-logical Pauli operators LZ in the form of a binary k $\times$ n matrix.
\State \textbf{Output:} 
\State Coloured Logical Pauli Operators of type c 
\Function{ColouredLZ}{G,c,SZ,LZ}
\Comment{Inverted set of colours}
\State cInv := [0..D] - c
\Comment{Coloured LZ operators are combinations}\Comment{of MSG of type cInv...}
\State M := MSG(G, cInv)
\Comment{...that are also Logical Z operators}
\State SZLZ := stack(SZ,LZ)
\State L : = \Call{SpanIntersection}{M,SZLZ,2}
\Comment{Exclude elements of L which are stabilisers}
\State return \{z: z  a  row of L and  z $\notin \langle$ SZ $\rangle\}$
\EndFunction
\end{algorithmic}
\end{algorithm}
\end{figure}

\end{document}